\keywords{Infinite duration games; Memory; Universal graphs}
\begin{document}

\title{Characterising memory in infinite games}

\titlecomment{This paper is an extended version of~\cite{CO23Memory}, including full proofs and additional examples.}
\thanks{Antonio Casares is supported by the Polish National Science Centre (NCN) grant “Polynomial finite state computation” (2022/46/A/ST6/00072).}	

\author[A.~Casares]{Antonio Casares\lmcsorcid{0000-0002-6539-2020}}[a]
\author[P.~Ohlmann]{Pierre Ohlmann\lmcsorcid{0000-0002-4685-5253}}[b]

\address{University of Warsaw, Poland}
\email{antoniocasares@mimuw.edu.pl}  

\address{CNRS, Laboratoire d'Informatique et des Systèmes, Marseille, France}	
\email{pierre.ohlmann@lis-lab.fr}  





\newcommand\ac[1]{{\todo[inline,size=\scriptsize,backgroundcolor=YellowGreen]{#1 - \textbf{Antonio}}}}
\newcommand\po[1]{{\todo[inline,size=\scriptsize,backgroundcolor=Pink]{#1 - \textbf{Pierre}}}}

\newcommand\acchanged[1]{{\color{Green}{#1}}}
\newcommand\pochanged[1]{{\color{Pink}{#1}}}

\newenvironment{claimproof}[1]
{\begin{proof}[Proof of the claim]\renewcommand\qedsymbol{$\triangleleft$}\space#1}
{\end{proof}}

\definecolor{Green2}{HTML}{3EA514}
\definecolor{Red2}{HTML}{FF0400}
\definecolor{Orange2}{HTML}{E6670A}
\definecolor{Violet2}{HTML}{CE1ff9}
\definecolor{Green3}{HTML}{45A229}
\definecolor{Navy}{HTML}{2943A2}

\let\ab\allowbreak
\mathchardef\hyphen=45 

\newcommand{\N}{\mathbb N}
\newcommand{\R}{\mathbb R}
\newrobustcmd{\val}{\kl[\val]{\mathrm{val}}}
\knowledge{\val}{notion}
\newrobustcmd{\vale}{\kl[\vale]{\mathrm{val}^{\eps}}}
\knowledge{\vale}[\Weps|W_\mu^\eps]{notion}
\newrobustcmd{\Weps}{\kl[\Weps]{W^{\eps}}}
\newrobustcmd{\valGame}{\kl[\valGame]{\mathrm{val}(\G)}}
\knowledge{\valGame}{notion}
\newrobustcmd{\valStrat}{\kl[\valStrat]{\mathrm{val}(\S)}}
\knowledge{\valStrat}{notion}

\newcommand{\re}[1]{\xrightarrow{#1}}
\newcommand{\rp}[1]{\overset{#1}{\rightsquigarrow}}
\usetikzlibrary{calc,decorations.pathmorphing,shapes}

\newcounter{sarrow}
\newcommand\lrp[1]{%
	\stepcounter{sarrow}%
	\mathrel{\begin{tikzpicture}[baseline= {( $ (current bounding box.south) + (0,-0.5ex) $ )}]
			\node[inner sep=.5ex] (\thesarrow) {$\scriptstyle #1$};
			\path[draw,<-,decorate,
			decoration={zigzag,amplitude=0.7pt,segment length=1.2mm,pre=lineto,pre length=4pt}] 
			(\thesarrow.south east) -- (\thesarrow.south west);
	\end{tikzpicture}}%
}

\newcommand{\Eve}{\mathrm{Eve}}
\newcommand{\Adam}{\mathrm{Adam}}
\newrobustcmd{\VE}{\kl[\VE]{V_\Eve}}
\knowledge{\VE}{notion}
\newrobustcmd{\VA}{\kl[\VA]{V_\Adam}}
\knowledge{\VA}{notion}

\newrobustcmd{\Ceps}{\kl[\Ceps]{C^{\eps}}}
\knowledge{\Ceps}{notion}

\newrobustcmd{\Verts}[1]{\kl[\Verts]{V(#1)}}
\knowledge{\Verts}{notion}
\newrobustcmd{\Edges}[1]{\kl[\Edges]{E(#1)}}
\knowledge{\Edges}{notion}

\newrobustcmd{\treerooted}[2]{\kl[\treerooted]{#1[#2]}}
\knowledge{\treerooted}{notion}

\newcommand{\pow}{\mathcal P}
\newcommand{\powne}{\mathcal P^{\neq \emptyset}}
\newcommand{\powtwo}{\mathcal P^{= 2}}

\newrobustcmd{\card}[1]{\kl[\card]{|#1|}}
\knowledge{\card}{notion}

\newrobustcmd{\prodcard}[2]{\kl[\prodcard]{#1 \times #2}}
\knowledge{\prodcard}{notion}

\newcommand{\T}{\mathcal T}
\renewcommand{\S}{\mathcal S}
\newcommand{\U}{\mathcal U}
\newcommand{\G}{\mathcal G}
\newcommand{\eps}{\varepsilon}
\newcommand{\emptyword}{\epsilon}

\newcommand{\tand}{\text{ and }}
\newcommand{\tor}{\text{ or }}
\newcommand{\tin}{\text{ in }}
\newcommand{\tif}{\text{ if }}
\newcommand{\tow}{\text{ otherwise}}

\newcommand{\dleq}{\leqslant}
\newcommand{\dgeq}{\geqslant}

\newrobustcmd{\Ipath}[2]{\kl[\Ipath]{\Pi^\infty_{v_0}(G)}}
\knowledge{\Ipath}{notion}
\newrobustcmd{\Fpath}[2]{\kl[\Fpath]{\Pi^{\mathrm{fin}}_{v_0}(G)}}
\knowledge{\Fpath}{notion}

\newcommand{\inv}[1]{#1^{-1}}
\newrobustcmd{\lquot}[2]{\kl[\lquot]{#1^{-1}#2}}
\knowledge{\lquot}{notion}

\newrobustcmd{\Res}{\kl[\Res]{\mathrm{Res}}}
\knowledge{\Res}{notion}
\newcommand{\ResW}{\Res(W)}
\newcommand{\choice}{\mathrm{Choice}}
\newcommand{\init}{\mathrm{Init}}
\newrobustcmd{\minf}{\kl[\minf]{\mathrm{Inf}}}
\knowledge{\minf}{notion}

\newcommand{\NN}{\mathbb{N} }
\newcommand{\ZZ}{\mathbb{Z}}
\newcommand{\QQ}{\mathbb{Q}}
\newcommand{\RR}{\mathbb{R}}
\newcommand{\CC}{\mathbb{C}}
\newcommand{\WW}{\mathbb{W}}

\newcommand{\I}{\mathcal{I}}
\newcommand{\F}{\mathcal{F}}
\renewcommand{\H}{\mathcal{H}}
\let\oldL\L
\renewcommand{\L}{\mathcal{L}}
\newcommand{\M}{\mathcal{M}}
\newcommand{\Q}{\mathcal{Q}}
\newcommand{\C}{\mathcal{C}}
\newcommand{\A}{\mathcal{A}}
\newcommand{\B}{\mathcal{B}}
\newcommand{\Z}{\mathcal{Z}}
\newcommand{\W}{\mathcal{W}}
\newcommand{\V}{\mathcal{V}}
\let\oldP\P
\renewcommand{\P}{\mathcal{P}}
\let\oldO\O
\renewcommand{\O}{\mathcal{O}}
\let\oldS\S
\renewcommand{\S}{\mathcal{S}}

\newcommand{\kk}{\kappa}
\newcommand{\uu}{\upsilon}
\newcommand{\dd}{\delta}
\let\oldss\ss
\renewcommand{\ss}{\sigma}
\let\oldll\ll
\renewcommand{\ll}{\lambda}
\newcommand{\rr}{\rho}
\let\oldaa\aa
\renewcommand{\aa}{\alpha}
\let\oldtt\tt
\renewcommand{\tt}{\tau}
\newcommand{\bb}{\beta}
\newcommand{\oo}{\omega}
\newcommand{\pp}{\varphi}
\let\oldgg\gg
\renewcommand{\gg}{\gamma}
\newcommand{\ee}{\varepsilon}

\let\oldSS\SS
\renewcommand{\SS}{\Sigma}
\newcommand{\GG}{\Gamma}
\newcommand{\DD}{\Delta}

\newrobustcmd{\alephno}{\kl[\alephno]{\aleph_0}}
\knowledge{\alephno}{notion}

\newrobustcmd{\Safe}[1]{\kl[\Safe]{\mathrm{Safe}(#1)}}
\knowledge{\Safe}{notion}
\newrobustcmd{\Muller}[1]{\kl[\Muller]{\mathrm{Muller}(#1)}}
\knowledge{\Muller}{notion}

\newrobustcmd{\infOften}{\kl[\infOften]{\infty}}
\knowledge{\infOften}{notion}
\newrobustcmd{\finOften}{\kl[\finOften]{\mathrm{Fin}}}
\knowledge{\finOften}{notion}

\knowledge{\Sigma_2^0}[\Sigma_n^0|\Sigma_{n+1}^0|\SS_{n+1}^0]{notion}

\newrobustcmd{\restr}[2]{\kl[\restr]{#1|_{#2}}}
\knowledge{\restr}{notion}
\newrobustcmd{\memory}{\kl[\memory]{\mathrm{mem}}}
\knowledge{\memory}{notion}

\newrobustcmd{\TW}{\kl[\TW]{\mathrm{TW}}}
\knowledge{\TW}{notion}
\newrobustcmd{\TL}{\kl[\TL]{\mathrm{TL}}}
\knowledge{\TL}{notion}
\newcommand{\anc}{\mathrm{anc}_2}

\definecolor{Blue Sapphire}{HTML}{005f73} 
\definecolor{Gamboge}{HTML}{ee9b00}
\definecolor{Ruby Red}{HTML}{9b2226}
\definecolor{Blue Marine}{HTML}{022687}
\definecolor{Dark Ruby Red}{HTML}{580507}
\definecolor{Dark Blue Sapphire}{HTML}{053641}

\IfKnowledgePaperModeTF{
}{
	\knowledgestyle{intro notion}{color={Dark Ruby Red}, emphasize}
	\knowledgestyle{notion}{color={Dark Blue Sapphire}}
	\hypersetup{
		colorlinks=true,
		breaklinks=true,
		linkcolor={}, 
		citecolor={}, 
		filecolor={Blue Marine}, 
		urlcolor={Blue Marine},
	}
}

\knowledge{notion}
 | $C$-pregraph
 | pregraph

\knowledge{notion}
 | $C$-graph
 | graph
 | graphs
 | $\Ceps$-graph
 | $C_\mu $-graph
 | $C$-graphs

\knowledge{notion}
 | path
 | paths

\knowledge{notion}
 | $C$-pretree
 | pretree
 | $C$-pretrees
 | $C_2$-pretree
 | pretrees

\knowledge{notion}
 | $C$-tree
 | tree
 | trees
 | $C$-trees
 | $C^\eps $-tree
 | $C_\mu $-tree

\knowledge{notion}
 | subtree rooted at $t$

 \knowledge{notion}
 | root
 | roots

\knowledge{notion}
 | size@graph

\knowledge{notion}
 | restriction@graph

\knowledge{notion}
 | morphism
 | embeds
 | morphisms
 | embedding
 | embed

 \knowledge{notion}
 | isomorphism
 | isomorphisms
 | isomorphic

 \knowledge{notion}
 | unfolding

 \knowledge{notion}
 | monotone
 | monotonicity

 \knowledge{notion}
 | well-monotone
 | well-monotonicity
 | Well-monotonicity

 \knowledge{notion}
 | $(\kappa,\val)$-universal
 | $(|T|,\val)$-universal
 | $(|T|,\vale )$-universal
 | $(\kappa,W)$-universal 
 | $(\kappa, \Muller{\F})$-universal
 | $(\kappa,W_\mu)$-universal
 | universal
 | universality
 | $(\kappa ,W_3)$-universality
 | $(\kappa , W_3)$-universal
 | $(\kappa ,\vale )$-universal
 | ($\kappa ,\val $)-universal
 | $(\kappa ,\vale )$-universality
 | ($\kappa ,\val $)-universality
 | $(\kappa , W)$-universality
 | $(\kappa ,W_2)$-universal
 | $(\kappa , W_1 \cap W_2)$-universal
 | $(\kappa , W_2)$-universal
 | $(\kappa , W_2)$-universality
 | $(\kappa , W_2^\eps )$-universal
 | $(|T|^+,\val)$-universal
 | $(\kappa , \val )$-universal
 | $(\kappa ,W^\eps )$-universal
 | $(\kappa , W)$-universal
 | universal graph
 | $(W_i,\kappa )$-universal

\knowledge{notion}
 | $(\kappa ,W)$-universal for prefix-increasing objectives
 | $(\kappa ,W)$-universal@prefixIncreasing
 | $(\kappa , W_2)$-universal@prefixIncreasing
 | $(\kappa , W_2)$-universality@prefixIncreasing
 | $(\kappa , W_2^\eps )$-universal@prefixIncreasing
 | universality@prefixIncreasing

\knowledge{notion}
 | universality for prefix-independent objectives
 | $(\kappa ,W)$-universal for prefix-independent objectives
 | $(\kappa ,W)$-universal@prefixIndependent
 | $(\kappa ,W)$-universal@prefixIndependent
 | $(\kappa , \Muller {\F })$-universal@prefixIndependent
 | $(\kappa , \Muller {\F })$-universal@prefixIndependent
 | $(\kappa , \Muller {\F })$-universal@prefixIndependent
 | $(\kappa ,\Muller {\F })$-universal@prefixIndependent
 | $(\kappa , \Muller {\F })$-universal@prefixIndependent
 | $(\kappa ,W_2)$-universal@prefixIndependent
 | $(\kappa , W_1 \ltimes W_2)$-universal@prefixIndependent
 | $(\kappa , W_1^\eps )$-universal@prefixIndependent
 | $(\kappa , W_1)$-universal@prefixIndependent
 | universal@prefixIndependent
 | $(\kappa ,W)$-universal (for prefix-independent objectives)
 | universality@prefixIndependent

\knowledge{notion}
 | almost universality
 | almost universal
 | almost $(\kappa ,W)$-universal
 | almost $(\kappa ,W)$-universality

\knowledge{notion}
| width
| Width

\knowledge{notion}
 | $C$-valuation
 | $C$-valuations
 | valuation
 | valuations
 | value

 \knowledge{notion}
 | $C$-projection

\knowledge{notion}
 | value@graph

\knowledge{notion}
 | value@strategy

\knowledge{notion}
 | value@game

 \knowledge{notion}
 | preserves the value
 | preserves all values
 | preserving the value
 | value-preserving
 | preserve the value

\knowledge{notion}
 | objective
 | objectives
 | Objectives
 | Objective
 | \emph{objective}

\knowledge{notion}
 | satisfy
 | satisfies
 | satisfying

\knowledge{notion}
 | satisfies@pregraph
 | satisfy@pregraph
 | satisfy@pregraph
 | satisfying@pregraph
 | satisfying@graph
 | satisfies@graph

\knowledge{notion}
 | G|_X

\knowledge{notion}
 | $C$-game
 | $C$-games
 | game
 | games

\knowledge{notion}
 | Eve

\knowledge{notion}
 | Adam

\knowledge{notion}
 | strategy
 | strategies

 \knowledge{notion}
 | product strategy
 | product strategies
 | product strategies over
 | product strategy over

\knowledge{notion}
 | winning@strategy
 | winning strategy
 | losing@strategy

\knowledge{notion}
 | wins
 | win
 | won
 | victory

\knowledge{notion} 
 | $\S$-projection
 | projection@strategy

 \knowledge{notion}
 | memory@strategy
 | memories@objective

 \knowledge{notion}
 | memory@epsStrategy
 | memory of an $\eps $-strategy
 | $\eps $-memory@strategy

 \knowledge{notion}
 | memory $<$
 | $\eps$-free memory $<$
 | memory@strictlyLess
 | memory strictly less than
 | memory@less
 | memory strictly less than
 | memory@objective
 | memory@epsFree
 | memory

 \knowledge{notion}
 | memory at least
 | memory@atLeast
 | memory $>n$

 \knowledge{notion}
 | memory exactly
 | $\eps $-memory exactly

 \knowledge{notion}
 | product strategy
 | memory structure

 \knowledge{notion}
 | memory states
 | memory state
 | states of memory

\knowledge{notion}
 | chromatic@strategy
 | chromatic strategy
 | chromatic strategies
 | \emph{chromatic}@strategy
 | chromatic memories

 \knowledge{notion}
 | chromatic@memory
 | non-chromatic@memory
 | chromatic memory
 | chromatic memory $<$
 | ``chromaticity''
 | non-chromatic memory

 \knowledge{notion}
 | update function
 | updated
 
 \knowledge{notion}
 | update function@graph

 \knowledge{notion}
 | $\eps$-extension

 \knowledge{notion}
 | $\eps$-game
 | $\eps$-games

 \knowledge{notion}
 | $\eps$-strategy
 | $\eps$-strategies
 | strategy@eps

\knowledge{notion}
 | $\eps$-memory $<$
 | memory@eps
 | $\eps $-memory <
 | $\eps$-memory

\knowledge{notion}
 | exact memory
 | exact $\eps $-memory

 \knowledge{notion}
 | $\eps $-memory $\geq $

\knowledge{notion}
 | $\eps$-chromatic strategy
 | $\eps$-chromatic strategies
 | chromatic@epsStrategy

 \knowledge{notion}
 | $\eps$-chromatic memory $<$
 | $\eps $-chromatic
 | ($\eps $-)chromatic memory
 | $\eps $-chromatic memory
 | chromatic@epsMemory
 | $\eps $ chromatic memory
 
 \knowledge{notion}
 | exact $\eps $-chromatic memory

\knowledge{notion}
 | $\eps$-separated monotone graph
 | $\eps $-separated monotone graph over a set $M$
 | $\eps$-separated
 | $\eps $-separated monotone graph over

\knowledge{notion}
 | $\eps$-free
 | $\ee$-free
 | $\eps $-free chromatic
 | $\eps $-free memory
 | $\eps $-free chromatic memory

\knowledge{notion}
 | chromatic@epsGraph
 | chromatic@graph

\knowledge{notion}
 | breadth

\knowledge{notion}
 | arena-independent memory
 | arena-independent memories

  
 \knowledge{notion}
 | cardinal
 | cardinals

 \knowledge{notion}
 | successor cardinal

 \knowledge{notion}
 | ordinal
 | ordinals

 \knowledge{notion}
 | well-founded
 | well-foundedness
 
 \knowledge{notion}
 | antichain
 | antichains

\knowledge{notion}
 | preorder
 | (pre)order relation
 | (pre)order
 
\knowledge{notion}
 | orders
 | order
 | partially ordered 
 | order relation
 | partial order
 
\knowledge{notion}
 | ordered set
 | partially ordered set
 | (pre)ordered set
 | ordered sets
 | preordered set
 | partially ordered sets
 | partially preordered set

\knowledge{notion}
 | chains
 | chain

\knowledge{notion}
 | total order
 | total
 | total (pre)order
 | (strict) total order
 | strict preorder
 | totally ordered

\knowledge{notion}
 | comparable
 | incomparable

\knowledge{notion}
| transfinite recursion
| induction@transfinite
| Transfinite recursion

\knowledge{notion}
| well-quasi-orders
| wqo's
| well-quasi order
| wqo
| well-quasi orders

\knowledge{notion}
| maximal (resp. minimal) element
| minimal element

\knowledge{notion}
| supremum (resp. infimum) of $S$
| Suprema
| infima
| supremum
| infimum

\knowledge{notion}
| maximum

\knowledge{notion}
 | lattice
 | complete lattice
 
\knowledge{notion}
| (strict) well-order
| well-orders
| well-ordering
| well-ordered
| strict well-order
| well-ordered set

\knowledge{notion}
 | order isomorphic
 | order-isomorphic
 
\knowledge{notion}
 | equinumerous

\knowledge{notion}
 | cardinality


\knowledge{notion}
 | Muller objective
 | Muller objectives
 | \Muller(\F)
 
\knowledge{notion}
 | positive
 
\knowledge{notion}
 | negative

\knowledge{notion}
 | child
 | children

\knowledge{notion}
 | basic
 | non-basic
 
\knowledge{notion}
 | restriction@Muller
 | restrictions@Muller

\knowledge{notion}
 | Zielonka tree

\knowledge{notion}
| prefix-increasing

\knowledge{notion}
| prefix-independent
| prefix-independence
| prefix independence
| Prefix-independent

\knowledge{notion}
| positionality
| positional
| Positional
| Positional objectives
| positional strategies
| positional objectives

\knowledge{notion}
| topologically closed
| topologically closed objectives

\knowledge{notion}
| left quotient
| left quotients
| Left Quotient
| Left Quotients
| Left quotients

\knowledge{notion}
 | safety objective associated to $L$

\knowledge{notion}
 | parity condition
\knowledge{notion}
| parity automaton

\knowledge{notion}
| priorities
| priority

\knowledge{notion}
| lexicographical products@objectives
| lexicographic products@objectives
| lexicographical product@objectives

\knowledge{notion}
| lexicographical product@po

\knowledge{notion}
| lexicographical product@graphs
| lexicographic product @graphs
| lexicographical products

\knowledge{notion}
| monotone wqo
| monotone wqo's

\knowledge{notion}
| (direct) product@graphs
| product@graphs

\knowledge{notion}
| product@po
| (direct) product@po

\knowledge{notion}
| trivially winning
| trivial conditions

\knowledge{notion}
| trivially losing

\knowledge{notion}
| locally finite memory
| locally finite

\knowledge{notion}
| direct sum@graphs
| sum@graphs
| direct sum

\begin{abstract}
  \noindent 
  This paper is concerned with games of infinite duration played over potentially infinite graphs.
  Recently, Ohlmann (TheoretiCS 2023) presented a characterisation of objectives admitting optimal positional strategies, by means of universal graphs: an objective is positional if and only if it admits well-ordered monotone universal graphs.
  We extend Ohlmann's characterisation to encompass (finite or infinite) memory upper bounds.
  
  We prove that objectives admitting optimal strategies with $\varepsilon$-memory less than $m$ (a memory that cannot be updated when reading an $\varepsilon$-edge) are exactly those which admit well-founded monotone universal graphs whose antichains have size bounded by $m$.
  We also give a characterisation of chromatic memory by means of appropriate universal structures.
  Our results apply to finite as well as infinite memory bounds (for instance, to objectives with finite but unbounded memory, or with countable memory strategies).
  
  We illustrate the applicability of our framework by carrying out a few case studies, we provide examples witnessing limitations of our approach, and we discuss general closure properties which follow from our results.
\end{abstract}

\maketitle

\noindent
This document contains hyperlinks.
Each occurrence of a notion is linked to its definition.
On an electronic device, the reader can click on words or symbols (or just hover over them on some PDF readers) to see their definition.


\section{Introduction}
\label{sec:introduction}

\subsection{Context}

We study zero-sum turn-based "games" on graphs, in which two players, that we call "Eve" and "Adam", take turns in moving a token along the edges of a given (potentially infinite) edge-coloured directed graph.
Vertices of the graph are partitioned into those belonging to Eve and those belonging to Adam.
When the token lands in a vertex owned by player X, it is this player who chooses where to move next.
This interaction, which is sometimes called a play, goes on in a non-terminating mode, producing an infinite sequence of colours.
We fix in advance an "objective" $W$, which is a language of infinite sequences of colours; plays producing a sequence of colours in $W$ are considered to be winning for Eve, and plays that do not satisfy the objective $W$ are winning for the opponent Adam.

In order to achieve their goal, players use "strategies", which are representations of the course of all possible plays together with instructions on how to act in each scenario.
In this work, we are interested in optimal "strategies" for "Eve", that is, strategies that guarantee a "victory" whenever this is possible.
More precisely, we are interested in the complexity of such "strategies", or in other words, in the succinctness of the representation of the space of plays.
The simplest "strategies" are those that assign in advance an outgoing edge to each vertex owned by Eve, and always play along this edge, disregarding all the other features of the play.
All the information required to implement such a strategy appears in the game graph itself.
These "strategies" are called "positional" (or memoryless).
However, in some scenarios, playing optimally requires distinguishing different plays that end in the same vertex; one should remember other features of plays.
An example of such a "game" is given in Figure~\ref{fig:simple_example_intro}.

\begin{figure}[h]
	\begin{center}
		\includegraphics[width=0.75\linewidth]{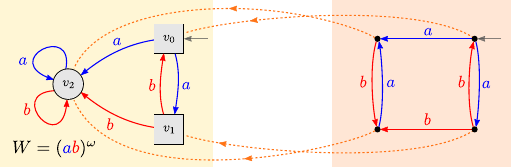}
	\end{center}
	\caption{On the left, a "game" with "objective" $W = (ab)^\omega$; in words, "Eve"  should ensure that the play alternates between $a$-edges and $b$-edges. We represent Eve's vertices as circles and Adam's as squares.
	On the right, a winning "strategy" for "Eve" which uses one state of memory for $v_0$, one state of memory for $v_1$, and two states of memory for $v_2$.
	Note that two "states of memory" for $v_2$ are required here: a "positional" "strategy" would always follow the same self-loop and therefore cannot "win".
	One can prove that any "game" with "objective" $W$ which is won by "Eve" can be won even when restricting to "strategies" with two "states of memory", such as the one above.
	To conclude, the memory requirements for $W$ is exactly two.
	}\label{fig:simple_example_intro}
\end{figure}

Given an "objective" $W$, the question we are interested in is:

\begin{center}
``What is the minimal strategy complexity required for "Eve"\\ to play optimally in all games with "objective" $W$?''
\end{center}

\paragraph{Positional objectives and universal graphs.}
As mentioned above, an important special case is that of "positional" "objectives", those for which "Eve" does not require any "memory" to play optimally.
A considerable body of research, with both theoretical and practical reach, has been devoted to the study of "positionality".
By now it is quite well-understood which "objectives" are "positional" for both players (bi-positional), thanks to the works of Gimbert and Zielonka~\cite{GZ05} for finite game graphs, and of Colcombet and Niwi\'nski~\cite{CN06} for arbitrary game graphs.
However, a precise understanding of which "objectives" are "positional" for "Eve" -- regardless of the opponent -- remains somewhat elusive, even though this is  a more relevant question in most application scenarios.

A recent progress in this direction was achieved by Ohlmann~\cite{Ohlmann23UnivJournal}, using "totally ordered" "monotone" "universal" "graphs".
Informally, an edge-coloured graph is "universal" with respect to a given "objective" $W$ if it "satisfies" $W$ (all paths "satisfy" $W$), and homomorphically "embeds" all "graphs" "satisfying" $W$.
An ordered "graph" is "monotone" if its edge relations are monotone:
\[
	v \geq u \re c u' \geq v' \implies v \re c v', \text{ for every colour } c.
\]
Ohlmann's main result is a characterisation of "positionality" (assuming existence of a neutral letter): an "objective" is "positional" if and only if it admits "well-ordered" "monotone" "universal" "graphs".

\paragraph{From positionality to finite memory.}
"Positional objectives" have good theoretical properties and do often arise in applications (in particular, parity, Rabin or energy objectives).
It is also true, however, that this class lacks in expressivity and robustness: only a handful of "objectives" are "positional", and very few closure properties are known to hold for "positional" "objectives"\footnote{Kopczy\'nski conjectured in his thesis~\cite{Kop08Thesis} that "positional" "prefix-independent" "objectives" are closed under union. This conjecture was recently disproved by Kozachinskiy~\cite{Kozachinskiy22EnergyGroups} over finite game graphs, but it remains open for infinite graphs.}.

In contrast, "objectives" admitting optimal finite "memory@@strategy" "strategies" are much more general; for instance they encompass all $\omega$-regular "objectives"~\cite{GH82} (in fact, it was recently established~\cite{BRV23TheoretiCS} that optimal finite "chromatic memory" for both players characterises $\omega$-regularity).
Moreover, in practice, finite memory strategies can be implemented by means of a program, and memory bounds for "Eve" directly translates in space and time required to implement controllers, which gives additional motivation for their systematic study.

Formally, when moving from "positionality" to finite "memory@@strategy", a few modelling difficulties arise, giving rise to a few different notions.
Most prominently, one may or may not include uncoloured edges ($\eps$-edges) in the "game", over which the "memory state" cannot be "updated"; additionally one may or may not restrict to "chromatic memories", meaning those that record only the colours that have appeared so far.
We now discuss some implications of these two choices.

It is known that allowing $\eps$-edges impacts the difficulty of the "games", in the sense that it may increase the "memory@@strategy" required for winning "strategies"~\cite{Casares22, Kop08Thesis, Zielonka98}, thus leading to two different notions of "memory" (that we call "$\eps$-memory" and "$\eps$-free" memory).
It is natural to wonder whether one of the two notions should be preferred over the other.
We argue that allowing $\eps$-edges turns out to be more natural in many applications.
First, we notice that currently existing characterisations of the "memory@@objective" (for "Muller objectives"~\cite{DJW97} and for "topologically closed objectives"~\cite{CFH14}) do only apply to the case of "$\eps$-memory".
More importantly, "games" induced by logical formulas in which players are interpreted as the existential player (controlling existential quantifiers and disjunctions) and the universal player (controlling universal quantifiers and conjunctions) naturally contain $\eps$-edges (along which the memory indeed should not be allowed to be "updated").

It was originally conjectured by Kopczyński~\cite{Kop08Thesis} that "chromatic@@strategy" "strategies" have the same power than non-chromatic ones. It was not until recently that this conjecture was refuted~\cite{Casares22}, and since then several works have provided new examples separating both notions~\cite{CCL22SizeGFG, Kozachinskiy22InfSeparation, Kozachinskiy22ChromaticMem}.
It now appears from recent dedicated works~\cite{BORV23Journal, BRV23TheoretiCS, BLORV22Journal, Casares22} that "chromatic memory" is an interesting notion in itself.

The main challenge in the study of 
strategy complexity is to prove upper bounds on "memory@@objective" requirements of a given "objective".
A great feature of Ohlmann's result~\cite{Ohlmann23UnivJournal} is that it turns a question about "games" to a question about "graphs", which are easier to handle.
Despite its recent introduction, Ohlmann's framework has already proved instrumental for deriving strong "positionality" results in the context of "objectives" recognised by finite B\"uchi automata~\cite{BCRV24HalfJournal}, and more recently for arbitrary $\omega$-regular objectives~\cite{CO24}.

\subsection{Contribution}

The present paper builds on the aforementioned work of Ohlmann by extending it to encompass the more general setting of finite (or infinite) "memory@@objective" bounds.
This yields the first known characterisation results for objectives with given memory bounds, and provides a (provably) general tool for establishing memory upper bounds.

Doing so requires relaxing from totally to partially ordered graphs, while keeping the same "monotonicity" requirement, along with some necessary technical adjustments.
We essentially prove that the memory of an "objective" corresponds to the size of antichains in its "well-founded" "monotone" "universal" graph; however it turns out that the precise situation is more intricate.
It is summed up in Figure~\ref{fig:summary_of_results} and explained in more details below.

\begin{figure}[h]
	\begin{center}
		\includegraphics[width= \linewidth]{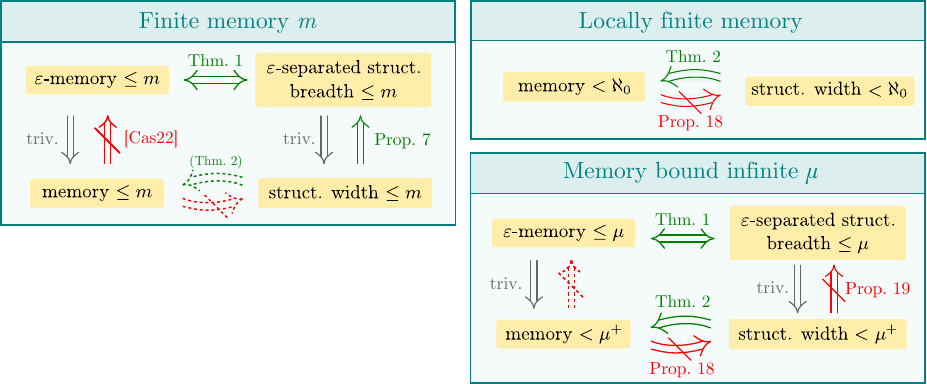}
	\end{center}
	\caption{A summary of our main contributions. The three larger boxes correspond to the three regimes encompassed by our results: finite memory, locally finite memory and larger "cardinal" bounds. Each of the smaller boxes correspond to classes of "objectives", where ``struct.'' stands for ``existence of "well-founded" "monotone" "universal" graphs''; for example, the box labelled ``$\eps$-separated struct. breadth $\leq m$'' stands for ``existence of "$\eps$-separated" "well-founded" "monotone" "universal" graphs of "breadth" $\leq m$''. The dotted implications follow from combining other implications in the figure. For $m=1$, all notions collapse to a single equivalence, which corresponds to Ohlmann's characterisation.}\label{fig:summary_of_results}
\end{figure}

It is convenient for us to define "strategies" directly as "graphs" (see Figure~\ref{fig:simple_example_intro} for an example, and Section~\ref{sec:preliminaries} for formal details), which allows us in particular to introduce new classes of "objectives" such as those admitting "locally finite memory", discussed in more details below.
For the well-studied case of finite memory bounds, our definition of "memory@@strategy" coincides with the usual one.

\paragraph{Universal structures for memory.}
Our main contribution lies in introducing generalisations of Ohlmann's structures, and proving general connections between existence of such "universal" structures for a given "objective" $W$, and "memory@@objective" bounds for $W$ 
(Section~\ref{sec:statement-main_results}).

The first variant we propose is obtained by relaxing the "monotonicity" requirement to "partially ordered" "graphs"; Theorem~\ref{thm:implication_non_eps} states that (potentially infinite) bounds on "antichains" of a "well-founded" "monotone" "universal" graph translate to memory bounds.

The second variant we propose, called "$\eps$-separated" structures, is tailored to capture "$\eps$-memory".
These are "monotone" graphs where the partial order coincides with $\re \eps$ and is constrained to be a disjoint union of "well-orders"; the "breadth" of such a "graph" refers to the number of such "well-orders".
Theorem~\ref{thm:characterisation_eps} states that the existence of such "universal" structures of "breadth" $\mu$ actually characterises having "$\eps$-memory" $\leq \mu$.
Additionally, we define "chromatic@@graph" "$\eps$-separated" structures (over which each colour acts uniformly), and establish that they capture "$\eps$-chromatic memory".

Applying (infinite) Dilworth's theorem we obtain that for finite $m$, one may turn any "monotone" graph of "width" $m$ to an "$\eps$-separated" one with "breadth" $m$ (Proposition~\ref{prop:non_eps-implies-eps}), and therefore in the setting of finite memory, the two notions collapse. 
We are able to establish most (but not all) of our results in the more general framework of quantitative "valuations"; similarly as Ohlmann~\cite{Ohlmann23UnivJournal}, we show how the notions instantiate in the qualitative case, how they can be simplified assuming prefix-invariance properties, and propose a general useful tool for deriving "universality" proofs (Lemma~\ref{lem:rongeur_de_croute}).

\paragraph{Counterexamples for a complete picture.}
We provide additional negative results (Section~\ref{sec:counter-examples}) which set the limits of our approach, completing the picture in Figure~\ref{fig:summary_of_results}.
Namely, we build two families of counterexamples that are robust to larger "cardinals"; these give general separations of "$\eps$-free memory" and "$\eps$-memory"\footnote{This result was already known for finite memory~\cite{Casares22}.} (Proposition~\ref{prop:infinite-antichains-no-eps-memory}), and negate the possibility of a converse for Theorem~\ref{thm:implication_non_eps} (Proposition~\ref{prop:eps-memory-greater-than-memory}).
This supports our informal claim that "$\eps$-memory" is better behaved than "$\eps$-free" "memory@@strategy".

\paragraph{Examples and applications.}
We argue (Section~\ref{sec:examples}) that our framework provides a very useful and flexible tool for studying memory requirements given concrete "objectives"; we provide a few illustrative examples for which we derive upper and lower bounds for each memory type.
We also illustrate the applicability of our tool by showing that the two available general characterisations of "memory@@objective" for special classes of "objectives", namely, the ones of Colcombet, Fijalkow and Horn~\cite{CFH14} for "topologically closed" "objectives", and of Dziembowski, Jurdzi\'nski and Walukiewicz~\cite{DJW97} for "Muller objectives", can both be understood as constructions of "monotone" "universal" graphs.

\paragraph{Closure properties.}
Finally, we discuss how our characterisations can be exploited for deriving closure properties on some classes of "objectives" (Section~\ref{sec:closure_properties}).
Apart from Ohlmann's result on "lexicographic products@@objectives" of "prefix-independent" "positional" "objectives"~\cite{Ohlmann23UnivJournal}, no such closure properties are known.
Extending Ohlmann's proof to our framework, we prove that if $W_1$ and $W_2$ are "prefix-independent" "objectives" with "$\eps$-memory" $m_1$ and $m_2$, then their "lexicographical product@@objectives" $W_1 \ltimes W_2$ has "$\eps$-memory" $\leq m_1 m_2$.
We also discuss a few implications of this result.

We then propose a new class of "objectives" with good properties, namely, "objectives" with "locally finite memory": for each "game", there exists a "strategy" which uses a finite (though possibly unbounded, even when the "game" is fixed) amount of "memory states" for each vertex.
These "objectives" are connected with the theory of "well-quasi orders" (wqo), since they correspond to "monotone" "universal" graphs which are "well-founded" and have finite "antichains".
We obtain from the fact that "wqo's" are closed under intersections, that intersections of "objectives" with finite "$\eps$-memory" have "locally finite memory"; an example is given by conjunctions of energy "objectives" which have unbounded finite memory even though energy "objectives" are "positional".
This hints at a general result, which is not implied by our characterisations but we conjecture to be true, that "objectives" with finite (possibly unbounded) memory are closed under intersection.

We end our paper by providing yet another application of our characterisation, establishing that prefix-independent $\kl{\Sigma_2^0}$ "objectives" with finite "memory@@objective" are closed under countable unions.
As of today, this is the only known (non-obvious) closure property pertaining to "objectives" with finite "memory@@objective".


\section{Preliminaries}
\label{sec:preliminaries}

For a finite or infinite word $w\in C^*\cup C^\oo$ we denote by $w_i$ the letter at position $i$ and by $|w|$ its length.
For notations concerning order and set theory we refer the reader to Appendix~\ref{sec:appendix_set_theory}.
 

\subsection{Graphs and morphisms}
\paragraph{Graphs, paths and trees.}
\AP A ""$C$-pregraph"" $G$, where $C$ is a (potentially infinite) set of colours, is given by a set of vertices $\intro*\Verts{G}$, and a set of coloured directed edges $\intro*\Edges{G} \subseteq V(G) \times C \times V(G)$.
We write $v \re c v'$ for an edge $(v,c,v')$, say that it is outgoing from $v$, incoming in $v'$ and has colour $c$.
\AP A ""$C$-graph"" $G$ is a "$C$-pregraph" without sinks: from all $v \in \Verts{G}$ there exists an outgoing edge $v \re c v' \in \Edges{G}$.
We often say $c$-edges to refer to edges with colour $c$, and sometimes $C'$-edges for $C' \subseteq C$ for edges with colour in $C'$.

\AP A ""path"" in a "pregraph" $G$ is a finite or infinite sequence of edges of the form $\pi=(v_0 \re {c_0} v_1)(v_1 \re {c_1} v_2) \dots$,
which for convenience we denote by $\pi = v_0 \re {c_0} v_1 \re {c_1} \dots$.
We say that $\pi$ is a "path" from $v_0$ in $G$.
By convention, the empty "path" is a "path" from $v_0$, for any $v_0 \in \Verts{G}$.
If $\pi$ is a finite "path", it is of the form $v_0 \re {c_0} v_1 \re {c_1} \dots \re {c_{n-1}} v_n$, and in this case we say that it is a "path" from $v_0$ to $v_n$ in $G$. 
 \AP We let $\intro*\Ipath{v_0}{G} \subseteq \Edges{G}^\omega$ and $\intro*\Fpath{v_0}{G} \subseteq \Edges{G}^*$ respectively denote the sets of infinite and finite paths from $v_0$ in $G$.  

\AP Given a subset $X \subseteq V(G)$ of vertices of a "pregraph" $G$, we let $\intro{G|_X}$ denote the ""restriction@@graph"" of $G$ to $X$, which is the graph given by $V(G|_X)=X$ and $E(G|_X) = E(G) \cap (X \times C \times X)$. 
\AP Given a vertex $v\in V(G)$, we let $\intro*\treerooted{G}{v}$ denote the "restriction@@graph" of $G$ to vertices that are reachable from $v$.

\AP A ""$C$-tree"" (resp. ""$C$-pretree"") $T$ is a "$C$-graph" (resp. "$C$-pregraph") with an identified vertex $t_0 \in V(T)$ called its ""root"", with the property that for each $t \in V(T)$, there is a unique "path" from $t_0$ to $t$.
Note that since "graphs" have no sinks, trees are necessarily infinite.
We remark that $\treerooted{T}{t}$ represents the ""subtree rooted at $t$"" (if $T$ is a "tree", $\treerooted{T}{t}$ is also a "tree" with "root" $t$).

When it is clear from context, we omit $C$ and simply say ``a graph'' or ``a tree''.

\AP The ""size@@graph"" of a graph $G$ (and by extension, of a tree) is the cardinality of $\Verts{G}$.

\paragraph{Morphisms and unfoldings.}
\AP A ""morphism"" $\phi$ between two "graphs" $G$ and $H$ is a map $\phi\colon \Verts{G} \to \Verts{H}$ such that for each edge $v \re c v' \in \Edges{G}$ it holds that $\phi(v) \re c \phi(v') \in E(H)$.
We write $\phi : G \to H$ in this case, and sometimes say that $H$ "embeds" $G$.
Note that "morphisms" preserve paths: if $v_0 \re {c_0} v_1 \re{c_1} \dots$ is a "path" in $G$, then $\phi(v_0) \re {c_0} \phi(v_1) \re{c_1} \dots$ is a "path" in $H$.
\AP An ""isomorphism"" is a bijective "morphism" whose inverse is a "morphism"; two graphs are isomorphic if they are connected by an "isomorphism" (stated differently, they are the same up to renaming the vertices).
The composition of two "morphisms" is a "morphism".

\AP Given a "graph" $G$ and an initial vertex $v_0 \in G$, the ""unfolding"" of $G$ from $v_0$ is the "tree" $U$ with vertex set $\Verts{U} = \Fpath{v_0}{G}$ and edges
\[
\Edges{U} = \{(v_0 \re{c_0} \dots \re{c_{n-1}} v_n) \re {c_n} (v_0 \re{c_0} \dots \re{c_{n-1}} v_n \re {c_n} v_{n+1}) \mid v_n \re {c_n} v_{n+1} \in \Edges{G}\}.
\]
Note that the map $(v_0 \re{c_0} \dots \re{c_{n-1}} v_n) \mapsto v_n$ (with the empty "path" mapped to $v_0$) defines a "morphism" from $U$ to $G$.
  
\subsection{Valuations, games, strategies and memory}

\paragraph{Valuations and objectives.}
\AP A ""$C$-valuation"" is a map $\val : C^\omega \to X$, where $X$ is a complete linear order (that is, a total order in which all subsets have both a supremum and an infimum). 
\AP The ""value@@graph"" $\intro*\val_G(v_0)$ of a vertex $v_0 \in \Verts{G}$ in a "graph" $G$ is the supremum value of infinite paths from $v$,
where the value of an infinite "path" $\pi = v_0 \re {c_0} v_1 \re {c_1} \dots$ is defined to be $\val(\pi)=\val(c_0c_1 \dots)$.

\AP In the important special case where $X=\{\bot,\top\}$, $\bot < \top$, we identify\footnote{When considering an objective as a set of infinite words rather than a valuation $C^\omega \to \{\bot,\top\}$, we lose the information that $C$ is the set of colours that we are considering.
 This may be important in some cases, for instance $\emptyset \subseteq \{0\}^\omega$ and $\emptyset \subseteq \{1,2\}^\omega$ are not the same objective.
 However, it will always be clear from context what the set of colours is, and therefore, by a slight abuse, we avoid the hassle of defining objectives as tuples $(W,C)$.}
   {}
 $\val$ with $W=\val^{-1}(\bot) \subseteq C^\omega$, and say that $\val$ (or $W$) is an ""objective"".
\AP In a "graph" $G$, a "path" with value $\bot$ (equivalently, whose sequence of colours belongs to $W$) is said to ""satisfy"" $W$, and a vertex $v_0$ with "value@@graph" $\bot$ (equivalently, all paths from $v_0$ satisfy $W$) is also said to satisfy $W$. A "graph" is said to satisfy $W$ if all its vertices satisfy it.

\paragraph{Games.}
\AP A ""$C$-game"" is a tuple $\G=(G,\VE,v_0,\val)$, where $G$ is a "$C$-graph", $\intro*\VE$ is a subset of $\Verts{G}$, $v_0 \in \Verts{G}$ is an identified initial vertex, and $\val: C^\omega \to X$ is a "$C$-valuation".
\AP We interpret $\VE$ to be the set of vertices controlled by the first player, ""Eve"", and we will write $\intro*\VA = \Verts{G} \setminus \VE$ for the vertices controlled by her opponent, ""Adam"".
A game is played as follows: starting from $v_0$, successive moves are played where the player controlling the current vertex $v$ chooses an outgoing edge $v \re c v'$ and proceed to $v'$.
This interaction goes on forever, producing and infinite "path" $\pi$ from $v_0$.
"Eve"'s goal is to minimise the "value@@graph" of the produced "path" $\pi$, whereas "Adam" aims to maximise it.

In this paper, we are interested in questions of strategy complexity for Eve: if she wins, how much memory is required/sufficient?
Formally, these are independent of questions of determinacy (is there a winner?).
As a result, we will only ever consider strategies for Eve.

\paragraph{Strategies}
\AP A ""strategy"" in the game $\G$ is a tuple $\S=(S,\pi_\S,s_0)$ where $S$ is a "graph", $\pi_\S$ is a "morphism" $\pi_S\colon S \to G$ called the ""$\S$-projection"" and $s_0\in \Verts{S}$ satisfying:
\begin{itemize}
	\item $\pi_\S(s_0)=v_0$,
	\item for all $v \in \VA$, all outgoing edges $v \re c v' \in \Edges{G}$ and all $s \in \pi_\S^{-1}(v)$, there is $s' \in \pi^{-1}(v')$ such that $s \re c s' \in E(S)$ (see Figure~\ref{fig:strategy}) .
\end{itemize}
Note that the requirements that $S$ is a "graph" and $\pi_\S$ a "morphism" impose that for all $v \in \VE$ and $s \in \pi_\S^{-1}(v)$, $s$ has an outgoing edge $s \re c s' \in E(S)$ satisfying $\pi_\S(s)=v \re c \pi_\S(s') \in\Edges{G}$.

\begin{figure}[h]
\begin{center}
\includegraphics[width=0.48 \linewidth]{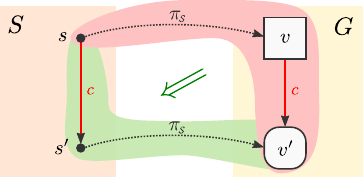}
\end{center}
\caption{Diagram illustrating the definition of a strategy. We use squares to represent vertices controlled by Adam and circles for vertices controlled by Eve. In this figure, it does not matter who controls $v'$.}\label{fig:strategy}
\end{figure}

We remark that we do not impose that for each $v\in \VE$ and $s \in \pi_\S^{-1}(v)$, $s$ has \emph{exactly one} outgoing edge. Stated differently, non-determinism is allowed in this definition of strategy. As the upcoming definition of "value@@strategy" of a "strategy" will clarify, we can interpret that "Adam" decides how to resolve this non-determinism.

On an informal level, a strategy $\S=(S,\pi_\S,s_0)$ from $v_0 \in G$ is used by Eve to play in the "game" $\G$ as follows:
\begin{itemize}
	\item whenever the game is in a position $v \in V(G)$, the strategy is in a position $s \in \pi_\S^{-1}(v)$;
	\item initially, the position in the game is $v_0$, and the position in the strategy is $s_0 \in \pi_\S^{-1}(v_0)$;
	\item if the position $v$ in the game belongs to $\VA$, and Adam chooses the edge $v \re c v'$ in $G$, then the strategy state is updated following an edge $s \re c s'$ in $S$ with $\pi_\S(s')=v'$, which exists by definition of $\S$ (if multiple options exist, Adam chooses one);
	\item if the position $v$ in the game belong to $\VE$, then the strategy specifies at least one successor $s \re c s'$ from the current $s \in \pi^{-1}(v)$, and the game proceeds along the edge $v \re c \pi(s')$ (if multiple options exist in the strategy, which corresponds to the non-determinism mentioned above, then "Adam" chooses one).
\end{itemize}

Note that infinite sequences of colours produced when playing as above are exactly labels of infinite paths from $s_0$ in $S$.

%

\AP The ""value@@strategy"" $\intro*\valStrat$ of a "strategy" $\S$ is $\val_S(s_0)$.
The ""value@@game"" $\intro*\valGame$ of a "game" is the infimum "value@@strategy" among its strategies.
If $\val$ is an "objective", we say that $\S$ is ""winning@@strategy"" if $\val_S(s_0)=\bot$, and we say that "Eve" ""wins"" a "game" $\G$ if $\val(\G)=\bot$.

The following observation is standard  (in fact, it   is usualy taken as the definition of a strategy).

\begin{lem}\label{lem:value_reached_on_trees}
The "value@@game" of a "game" is reached with "strategies" that are "trees".
\end{lem}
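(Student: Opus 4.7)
The plan is to show that for any strategy $\S = (S, \pi_\S, s_0)$ in $\G$, one can construct a tree-strategy $\S'$ with $\val(\S') = \val(\S)$; since $\val(\G)$ is the infimum of $\val(\S)$ over all strategies, this immediately yields $\val(\G) = \inf \{\val(\S') : \S' \text{ is a tree-strategy}\}$.

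The construction is to take the unfolding. Let $U$ be the unfolding of $S$ from $s_0$, which is a tree by definition, and let $u_0 \in V(U)$ be its root (the empty path). Recall from the preliminaries that the map $\psi \colon V(U) \to V(S)$ sending a finite path to its endpoint is a morphism $\psi \colon U \to S$. Then $\pi_{\S'} := \pi_\S \circ \psi$ is a morphism from $U$ to $G$, and I claim that $\S' := (U, \pi_{\S'}, u_0)$ is a strategy for $\G$. The root condition $\pi_{\S'}(u_0) = \pi_\S(s_0) = v_0$ is immediate. For the Adam condition, fix $v \in \VA$, an edge $v \re c v' \in E(G)$, and $u \in \pi_{\S'}^{-1}(v)$. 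Then $u$ is a finite path in $S$ whose endpoint $s := \psi(u)$ satisfies $\pi_\S(s) = v$. Since $\S$ is a strategy, there exists an edge $s \re c s' \in E(S)$ with $\pi_\S(s') = v'$. The extension $u' := u \cdot (s \re c s')$ belongs to $V(U)$, we have $u \re c u' \in E(U)$ by definition of the unfolding, and $\pi_{\S'}(u') = \pi_\S(s') = v'$, as required.

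It remains to verify that $\val(\S') = \val(\S)$, i.e., $\val_U(u_0) = \val_S(s_0)$. The map $\psi$ induces a bijection between infinite paths from $u_0$ in $U$ and infinite paths from $s_0$ in $S$: the inverse sends $\pi = s_0 \re{c_0} s_1 \re{c_1} \dots$ to the sequence of its finite prefixes, which forms a path in $U$ by construction. Since $\psi$ is a morphism, this bijection preserves colour sequences, and hence the values of corresponding paths coincide. Taking suprema gives $\val_U(u_0) = \val_S(s_0)$, concluding the argument.

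There is no real obstacle: the proof is a routine unfolding argument, and the only thing to check carefully is that the unfolding (a) remains a strategy (no Adam choice is lost since each edge of $S$ gives rise to a corresponding edge of $U$ at every occurrence) and (b) preserves the value (because paths correspond bijectively and colour-preservingly). Both facts follow directly from the definition of unfolding given in Section~\ref{sec:preliminaries}.
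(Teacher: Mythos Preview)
Your proof is correct and follows essentially the same approach as the paper: unfold the strategy, compose with the unfolding morphism to obtain a tree-strategy, and compare values. You verify the strategy conditions and the value equality more explicitly than the paper (which only notes the inequality $\val(\U) \leq \val(\S)$, sufficient for the infimum claim), but the argument is the same.
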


\begin{proof}
Let $\G$ be a "game" and $\S=(S,\pi_\S,s_0)$ a "strategy" over $\G$.
Consider the unfolding $U$ of $S$ from $s_0$, with "morphism" $\phi : U \to S$.
It is a direct check that $\U=(U,\pi_\U,\emptyword)$, where $\emptyword$ is the "root" of $U$ (represented by the empty "path"), and $\pi_\U = \pi_\S \circ \phi : U \to G$ is a "strategy".
Moreover, the fact that $\phi: U \to S$ is a "morphism" mapping $\emptyword$ to $s_0$ immediately yields $\val(\U) \leq \valStrat$.
\end{proof}

\paragraph{Memory.}
For a "strategy" $\S=(S,\pi_\S,s_0)$, we interpret the fibres $\pi_\S^{-1}(v)$ as memory spaces.
\AP Given a "cardinal" $\mu$, we say that $\S$ has ""memory@@strategy"" strictly less than $\mu$, (resp. less than $\mu$) if for all $v \in \Verts{G}$, $|\pi_\S^{-1}(v)| < \mu$ (resp. $|\pi_\S^{-1}(v)| \leq \mu$).
As it will appear later on, it is convenient for us to be able to use both strict and non-strict inequalities.
By means of clarity and conciseness, we usually simply write ``$\S$ has memory $< \mu$'' (resp. $\leq \mu$) instead of ``$\S$ has memory strictly less than $\mu$ (resp. less than $\mu$)''.

\AP We say that a "valuation" $\val$ has ""memory strictly less than"" $\mu$, or $< \mu$, (resp. less than $\mu$, or $\leq \mu$) if in all "games" with "valuation" $\val$, the "value@@game" is reached with "strategies" with "memory@@strategy" $<\mu$.

\AP Conversely, we say that $\val$ has ""memory at least"" $\mu$ (resp. strictly more than $\mu$), or $\geq \mu$ (resp. $>\mu$), if it does not have memory $<\mu$ (resp. $\leq \mu$): there exists a "game" with "valuation" $\val$ in which "Eve" cannot reach the "value@@game" with "strategies" with "memory@@strategy" $<\mu$ (resp. $\leq \mu$).

\AP Finally, if there exists\footnote{It might be that there is no cardinal $\mu$ such that $\val$ has memory exactly $\mu$ (intersections of energy objectives are an example, see Section~\ref{sec:combinations}).} $\mu$ such that $\val$ has memory $\geq \mu$, but memory $< \alpha$ for all $\alpha > \mu$, then we say that $\val$ has ""memory exactly"" $\mu$.
 
\AP We say that $\val$ is ""positional""\hspace*{1pt}\footnote{This is sometimes called \emph{half-positionality} in the literature.}   if it has "memory" $\leq 1$.

\paragraph{Product strategies, chromatic strategies.} \AP A "strategy" $\S = (S,\pi_\S,s_0)$ in the "game" $\G$ is a ""product strategy"" over a set $M$ if $\Verts{S} \subseteq \Verts{G} \times M$, with $\pi_\S(v,m) = v$.
We call the elements of $M$ ""memory states"".
Note that the "memory@@strategy" in a "product strategy" over $M$ is $\leq |M|$, since fibers are included in $M$.
\AP A "product strategy" is ""chromatic@@strategy"" if there is a map $\delta: M \times C \to M$ such that for all $(v,m) \re c (v',m') \in \Edges{S}$ we have $m' = \delta(m,c)$.
We say in this case that $\delta$ is the ""update function"" of $\S$.
In words, the update of the "memory state" in a "chromatic strategy" depends only on the current "memory state" and the colour that is read.
\AP A "valuation" $\val$ has ""chromatic memory"" $< \mu$ (resp. $\leq \mu$) if in all "games" with "valuation" $\val$, the "value@@game" is reached with "chromatic@@strategy" "strategies" with "memory@@strategy" $<\mu$ (resp. $\leq \mu$).

\paragraph{$\eps$-games and $\eps$-strategies.}
\AP Fix a set of colours $C$, a fresh colour $\eps \notin C$, and let $\intro*\Ceps=C \sqcup \{\eps\}$.
\AP The ""$C$-projection"" of an infinite sequence $w \in (\Ceps)^\omega$ is the (finite or infinite) sequence $w_C \in C^* \cup C^\omega$ obtained by removing all $\eps$'s in $w$.
Given a "$C$-valuation" $\val: C^\omega \to X$, define its ""$\eps$-extension"" $\intro*\vale$ to be given by
\begin{align*}
\vale(w)&=\begin{cases}
\val(w_C),& \tif |w_C| = \infty, \\
\inf\limits_{w' \in C^\omega} \val(w_C w'),& \tow.
\end{cases}
\end{align*}
It is the unique extension of $\val$ with $\eps$ as a strongly neutral colour, in the sense of Ohlmann~\cite{Ohlmann23UnivJournal}. In particular, if $W$ is an "objective" and $w\in C^*$, $w\eps^\oo \in \Weps$ unless $w$ has no winning continuation in $W$.  

\AP An ""$\eps$-game"" $\G$ is a $\Ceps$-game with "valuation" $\vale$.
An ""$\eps$-strategy"" over such a game is a "product strategy" $\S=(S,\pi_\S,s_0)$ over some set $M$ such that $(v,m) \re \eps (v',m') \in E(S) $ implies $m=m'$.
Intuitively, "Eve" is not allowed to update the state of the memory when an $\eps$-edge  is traversed.
\AP The ""memory of an $\eps$-strategy"" is defined to be $|M|$.
\AP A "valuation" $\val$ has ""$\eps$-memory <"" $\mu$ (resp. $\leq \mu$) if in all "$\eps$-games" with "valuation" $\vale$, the "value@@game" is attained by "$\eps$-strategies" with "memory@@strategy" $<\mu$ (resp. $\leq \mu$). \AP Having ""$\eps$-memory $\geq$""$\mu$, $>\mu$, and the ""exact $\eps$-memory"" is defined as before.

\begin{prop}\label{prop:non-strict-ineq-for-eps-memory}
	Let $W$ be an "objective". If $W$ has "$\eps$-memory" $<\mu$, for some "cardinal" $\mu$, then there is some "cardinal" $\aa < \mu$ such that $W$ has "$\eps$-memory" $\leq\aa$.
\end{prop}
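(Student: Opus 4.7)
The argument proceeds by case analysis on whether $\mu$ is a successor or a limit cardinal. If $\mu = \alpha^+$ is a successor cardinal, then for any cardinal $\kappa$, $\kappa < \alpha^+$ is equivalent to $\kappa \leq \alpha$, so the hypothesis that $W$ has "$\eps$-memory" $< \mu$ immediately yields "$\eps$-memory" $\leq \alpha$ and settles this case.

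Suppose now that $\mu$ is a limit cardinal, and argue by contradiction: assume that for every cardinal $\alpha < \mu$, $W$ does not have "$\eps$-memory" $\leq \alpha$. For each such $\alpha$, fix an "$\eps$-game" $\G_\alpha$ witnessing this; it necessarily satisfies $\val(\G_\alpha) = \bot$ (otherwise any "$\eps$-strategy" trivially attains the value $\top$) but admits no winning "$\eps$-strategy" with "memory@@strategy" $\leq \alpha$. Build a single "$\eps$-game" $\G$ by taking the disjoint union of $(\G_\alpha)_{\alpha < \mu}$, adding a fresh initial vertex $v_0 \in \VA$, and placing an $\eps$-edge from $v_0$ to the initial vertex $v_0^{\G_\alpha}$ of each $\G_\alpha$. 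Since each component is won by Eve and Adam's only choice at $v_0$ is the selection of a component, $\G$ is also won by Eve.

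Applying the hypothesis to $\G$, we obtain a winning "$\eps$-strategy" $\S = (S, \pi_\S, s_0)$, which is a "product strategy" over some set $M$ with $|M| < \mu$. Set $\alpha_0 := |M|$, a cardinal strictly below $\mu$. Since $v_0 \in \VA$ and $\S$ must respond to every outgoing edge of $v_0$, there is an edge of the form $(v_0, m_0) \re \eps (v_0^{\G_{\alpha_0}}, m_0)$ in $S$ (with the memory state preserved because $\S$ is an "$\eps$-strategy"). Restricting $\S$ to the vertices projecting into $\G_{\alpha_0}$, with $(v_0^{\G_{\alpha_0}}, m_0)$ as initial state, produces a winning "$\eps$-strategy" on $\G_{\alpha_0}$ that is a "product strategy" over (a subset of) $M$, and therefore has memory at most $|M| = \alpha_0$, contradicting the defining property of $\G_{\alpha_0}$.

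The crux of the argument, and its main conceptual point, is the observation that because an "$\eps$-strategy"'s memory is a single cardinal $|M|$ rather than a local per-vertex count, any strategy on the combined game $\G$ automatically yields a uniform memory bound across all components; this is what allows us to locate a specific $\alpha_0 < \mu$ violating our assumption. The remaining verifications, namely that the combined game is correctly defined and that the restriction of an "$\eps$-strategy" to a sub-game is itself an "$\eps$-strategy" of no greater memory, are routine.
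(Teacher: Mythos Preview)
Your proof is correct and follows essentially the same approach as the paper: assemble the witnessing games $\G_\alpha$ into a single $\eps$-game with an Adam-controlled $\eps$-fan at the root, apply the hypothesis to obtain a winning $\eps$-strategy over some $M$ with $|M|<\mu$, and restrict to the component $\G_{|M|}$ for a contradiction. The only difference is your preliminary case split on successor versus limit cardinals, which is harmless but unnecessary, since the contradiction argument already covers both cases uniformly.
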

Therefore, for "$\eps$-memory" and in the case of "objectives", we can restrict our study to non-strict inequalities without loss of generality. Moreover, the "exact $\eps$-memory" of an "objective" is always defined.

\begin{proof}[Proof of Proposition~\ref{prop:non-strict-ineq-for-eps-memory}]
	Suppose by contradiction that $W$ has "$\eps$-memory" $<\mu$ and that it has "$\eps$-memory" $>\aa$ for all $\aa < \mu$. By definition of having "$\eps$-memory" $>\aa$, for each $\aa < \mu$ there is a "game" in which "Eve" can "win", but she cannot do so with "strategies" with "$\eps$-memory@@strategy" $\leq\aa$. Let $\G_\aa=(G_\aa, V_{\mathrm{Eve},\aa}, v_{0,\aa}, \val)$ be such a "game", for $\aa < \mu$. We take the disjoint union of all these games and we let "Adam" choose the initial vertex among $v_{0,\aa}$. Formally, let $\G=(G, \VE, v_0, \val)$, where:
	\begin{itemize}
		\item $\Verts{G} = \bigsqcup_{\aa<\mu}\Verts{G_\aa} \cup \{v_0\}$,
		\item $\VE = \bigsqcup_{\aa<\mu}V_{\mathrm{Eve},\aa}$,
		\item $\Edges{G} = \bigsqcup_{\aa<\mu}\Edges{G_\aa} \cup \{v_0 \re \eps v_{0,\aa} \mid \aa < \mu\}$.
	\end{itemize}
	First, we remark that "Eve" "wins" this "game":  no matter Adam's choice, after the first $\eps$-move the play will take place in some game $\G_\aa$, where "Eve" can use a "winning@@strategy" "strategy".	
	Let $\S$ be a "winning@@strategy" "$\eps$-strategy" over some set $M$, $|M|=\aa<\mu$ (that exists since we have supposed that $W$ has "$\eps$-memory" $<\mu$). Let $s_0=(v_0,m_0)$. Since $(v_0,m_0) \re \eps (v_{0,\aa},m_0)$, and $\S$ is "winning@@strategy", all "paths" from $(v_{0,\aa},m_0)$ "satisfy" $W$. Therefore, the "restriction@@graph" of $\S$ to $\{(v,m) \mid v\in \Verts{G_\aa}\}$ is a "winning@@strategy" "$\eps$-strategy" with "$\eps$-memory@@strategy" $\leq \aa$, which contradicts the fact that "Eve" cannot "win" $\G_\aa$ using "strategies" with "$\eps$-memory@@strategy" $\leq\aa$.
\end{proof}

Note that by definition, a "chromatic strategy" over $M$ with "update function" $\delta$ is an "$\eps$-strategy" if and only if for all $m \in M$ it holds that $\delta(m,\eps) = m$.
\AP We call such a strategy an ""$\eps$-chromatic strategy"".
A "valuation" $\val$ has ""$\eps$-chromatic memory"" $<\mu$ (resp. $\leq \mu$) if in all "$\eps$-games" with "valuation" $\vale$, the "value@@game" is attained by "$\eps$-chromatic strategies" with "memory" $<\mu$ (resp. $\leq \mu$).  The ""exact $\eps$-chromatic memory"" is defined analogously. 

\AP Whenever we want to emphasise that we consider "games" (resp. "strategies", "memory@@objective") without $\eps$, we might add the adjective ""$\eps$-free"".

\subsection{Monotonicity and universality}
\paragraph{Monotonicity.}
\AP A partially ordered "graph" $(G,\leq)$ is ""monotone"" if
\[
u \geq v \re c v' \geq u' \text{ implies } u \re c u' \text{ in } G.
\]
\AP A partially ordered "graph" $(G,\leq)$ is called ""well-monotone"" if it is "monotone" and it is "well-founded" as a partial order.
\AP We say that the ""width"" of a partially ordered "graph" is $<\mu$ (resp. $\leq \mu$) if it does not contain "antichains" of size $\mu$ (resp. of size strictly greater than $\mu$).

\paragraph{$\eps$-separation.}
\AP An ""$\eps$-separated monotone graph over a set $M$"" is a $\Ceps$-graph $G$ such that $\re \eps$ defines a partial order making $G$ "monotone" ($v\leq v' \iff v'\re \eps v\in \Edges{G}$), and moreover $\Verts{G}$ is partitioned into $(V_m)_{m \in M}$ such that for all $m \in M$, $\re \eps$ induces a total order over~$V_m$, and there are no $\eps$-edges between different parts: $v \re \eps v' \in \Edges{G}$ implies that $v,v' \in V_m$ for some $m \in M$.
See Figure~\ref{fig:monotone_and_eps_separated}.
\AP We define the ""breadth"" of such a graph as $|M|$.

\begin{figure}[h]
\begin{center}
\includegraphics[width = 0.7 \linewidth]{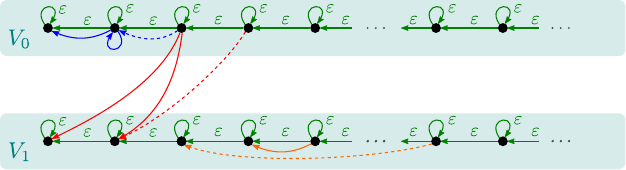}
\end{center}
\caption{An "$\eps$-separated" "chromatic@@epsGraph" "monotone" graph of "breadth" 2. Note that $\re \eps$ defines a total order on each $V_i$ (edges following from transitivity are not represented). Many edges which follow from "monotonicity" are not depicted, the dotted edges give a few examples.}\label{fig:monotone_and_eps_separated}
\end{figure}

\AP An "$\eps$-separated monotone graph" $G$ over $M$ is ""chromatic@@epsGraph"" if there is a map $\delta: M \times C \to M$ such that for all $v \re c v' \in \Edges{G}$ with $v \in V_m$ and $v' \in V_{m'}$ we have $m'=\delta(v,m)$.
We also say in this case that $\delta$ is the ""update function@@graph"" of $G$.

\paragraph{Universality.}
\AP Given a "$C$-valuation" $\val$, a "$C$-graph" $G$ and a "cardinal" $\kappa$, we say that $G$ is ""$(\kappa,\val)$-universal""\hspace*{1pt}\footnote{This definition is tailored to the general setting of quantitative "valuations", for which we are able to present most results. When specifying to "objectives" (more precisely, to "prefix-increasing" "objectives") the concept of universality can be simplified without loss of generality. This will be the object of Section~\ref{sec:prefix_increasing_objectives}.}
 if for all "$C$-trees" $T$ of cardinality $< \kappa$, there exists a "morphism" $\phi: T \to G$ such that
\[
\val_G(\phi(t_0)) \leq \val_T(t_0),
\]
where $t_0$ is the "root" of $T$.
\AP We say that $\phi$ ""preserves the value"" at the root to refer to this property (we remark that, in that case, $\val_G(\phi(t_0)) = \val_T(t_0)$, since the other inequality always holds).

\begin{rem}

	In the above definition, for a graph to be "universal", it needs to "embed" all \emph{trees} (up to a given cardinality bounds), and not all \emph{graphs} as in the case of "positionality"~\cite{Ohlmann23UnivJournal}.
	For (totally) well-ordered graphs, that is, in the case of "positionality", this does not make a difference; however for the current study of "memory@@objective", this difference is important. 
	An example where the definition with graphs is too constrained to capture "memory@@objective" is given in Proposition~\ref{prop:embed-tree-not-graphs}.
\end{rem}

\begin{rem}
	We remark that if $U$ is a "$(\kappa,\vale)$-universal" graph, then the graph $U'$ obtained by removing the edges labelled by $\eps$ is "$(\kappa,\val)$-universal". Moreover, if $U$ is an "$\eps$-separated" monotone graph of "breadth" $\mu$, then $U'$ is a monotone graph of "width" $\leq \mu$.
\end{rem}

\section{Main characterisation results}
\label{sec:main_results}

In this section, we state (Section~\ref{sec:statement-main_results}) and prove (Sections~\ref{sec:structure_to_finite_memory} and~\ref{sec:finite_memory_to_structure}) our two main results, Theorems~\ref{thm:characterisation_eps} and~\ref{thm:implication_non_eps}.
This is followed by additional general results (Section~\ref{sec:further_results}).

\subsection{Statement of the results}
\label{sec:statement-main_results}

We start with our characterisations of "$\eps$-memory" and "$\eps$-chromatic memory" via (chromatic) "$\eps$-separated" "universal" graphs.

\begin{thm}\label{thm:characterisation_eps}
Let $\val$ be a "valuation".
If for all "cardinals" $\kappa$ there exists an "$\eps$-separated" ("chromatic@@epsGraph") and "well-monotone" "$(\kappa,\vale)$-universal" graph of "breadth" $\leq \mu$, then $\val$ has $\eps$(-"chromatic@@epsMemory")-"memory@@eps" $\leq \mu$.
The converse holds if $\val$ is an "objective" (in both the chromatic and non-chromatic cases).
\end{thm}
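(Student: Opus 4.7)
The theorem splits into a forward implication (universal structures imply memory bounds, for general valuations) and a converse (for objectives), which I would handle separately.

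For the forward direction, fix an $\eps$-game $\G$ with value $\alpha = \val(\G)$, pick a cardinal $\kappa$ larger than the size of some optimal tree-strategy for $\G$ (available via Lemma~\ref{lem:value_reached_on_trees}), and let $U$ be an $\eps$-separated well-monotone $(\kappa,\vale)$-universal graph of breadth $\leq\mu$ with parts $(V_m)_{m\in M}$, $|M|\leq\mu$. The main tool is the lifted product game on $\Verts{G}\times\Verts{U}$, with moves $(v,u)\re{c}(v',u')$ requiring both $v\re{c}v'\in\Edges{G}$ and $u\re{c}u'\in\Edges{U}$, ownership inherited from $G$, and the $u$-coordinate controlled by Eve. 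Universality and value-preservation at the root applied to an optimal tree-strategy provide some $(v_0,u_0)$ with $\val_U(u_0)\leq\alpha$ from which Eve has a lifted strategy of value $\leq\alpha$. Let $R$ denote the set of all such pairs; monotonicity of $U$ makes $R$ upward-closed in the $u$-coordinate (Eve replays her lifted strategy from any larger $u$), and well-foundedness combined with totality of $\re{\eps}$ inside each $V_m$ then makes $u^*(v,m):=\min(R_v\cap V_m)$ well-defined whenever non-empty. The output strategy $\S$ has state set $\{(v,m):R_v\cap V_m\neq\emptyset\}$ projecting to $v$, initial state $(v_0,m_0)$ with $u_0\in V_{m_0}$, and edges obtained by simulating a lifted strategy started at each $u^*(v,m)$. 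The $\eps$-condition on $\S$ is automatic because $\eps$-edges in $U$ preserve the part, and the memory per fiber is $|M|\leq\mu$. The value bound comes from monotonicity: any infinite $\S$-play lifts to a path $u_0,u_1,\dots$ in $U$ via $u_{i+1}=u^*(v_{i+1},m_{i+1})$, and the sandwich $u_i\geq u_i\re{c_i}u'\geq u_{i+1}$ (the latter by minimality of $u^*$) yields $u_i\re{c_i}u_{i+1}\in\Edges{U}$, bounding the play's value by $\val_U(u_0)\leq\alpha$. The chromatic strengthening follows verbatim: if $U$ carries an update $\delta:M\times\Ceps\to M$ realising the chromatic $\eps$-separated property, the new index $m'$ along each edge is exactly $\delta(m,c)$, making $\S$ an $\eps$-chromatic strategy with the same update.

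For the converse, assume $W$ is an objective with $\eps$($\eps$-chromatic) memory $\leq\mu$ and fix a cardinal $\kappa$. I would construct $U_\kappa$ by considering a template $\eps$-game $\G_\kappa$ whose vertex set encodes every pair $(T,t)$ for $T$ a $\Ceps$-tree of size $<\kappa$ satisfying $\vale$ and $t\in\Verts{T}$, in which Adam moves along the current tree or ``switches'' between compatible pointed trees, and Eve must produce a play in $\vale$. By the hypothesis Eve has a winning ($\eps$-chromatic) $\eps$-strategy $\S$ on some memory $M$ with $|M|\leq\mu$; the extraction of $U_\kappa$ reads the partition $(V_m)_{m\in M}$ off the memory of $\S$, interprets $\eps$-edges as a well-founded partial order, and takes the non-$\eps$ edges from $\S$'s transitions. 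Universality holds because any $\Ceps$-tree $T$ of size $<\kappa$ satisfying $\vale$ is a valid start for Adam, and $\S$'s behaviour on the sub-game rooted at $T$ embeds $T$ into $U_\kappa$ preserving root value.

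The technical heart, and the step I expect to be the main obstacle, is the converse: Adam's ``switching'' moves in the template game $\G_\kappa$ must be designed delicately so that Eve's optimal strategy yields a monotone universal graph. If Adam's switches are too weak, $\S$ may not expose the required monotonicity structure; if too strong, the game could require more than $\mu$ memory, breaking the bound. Calibrating the template so that both universality and monotonicity emerge from the same strategy, while respecting the memory bound, adapts the delicate construction from Ohlmann's original positional converse, and is where the bulk of the formalisation effort would be concentrated.
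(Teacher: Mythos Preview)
Your forward direction is broadly on the right track, but the definition of $R$ as ``pairs from which Eve has a lifted strategy of value $\leq\alpha$'' is a genuine gap for general valuations: after one step of such a strategy the residual value (computed on the suffix) need not still be $\leq\alpha$, since $\vale$ is not prefix-independent, so $R$ is not closed under moves, and without that closure you cannot guarantee that the successor $(v',m')$ lies in $V(\S)$ nor that $u^*(v',m')\leq u'$. The fix is simply to drop the value condition from $R$ and take $R$ to be either the safety region of the product game (pairs from which Eve can keep playing forever) or, equivalently and closer to the paper, the image $\{(\pi_\T(t),\phi(t)) : t\in V(T)\}$ of a single fixed lifted tree strategy. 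With either choice, closure under moves is immediate, upward-closure in $u$ follows from monotonicity exactly as you say, and the value bound is recovered only at the very end from $\val_U(u^*(v_0,m_0)) \leq \val_U(\phi(t_0)) \leq \alpha$. The paper takes the second option: it fixes $\T$ and $\phi$, selects for each $(v,m)$ a node $t_{(v,m)}\in\pi_\T^{-1}(v)$ realising $\min(\phi(\pi_\T^{-1}(v))\cap U_m)$, and lets the edges of $\S$ be those of $T$ outgoing from $t_{(v,m)}$. Your product-game packaging is a perfectly valid reorganisation once $R$ is corrected.

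Your converse sketch is too vague to stand as a proof, and its game design does not match what is actually needed. The paper does not build one global template game over all pairs $(T,t)$; it first proves a structuration lemma (Lemma~\ref{lem:structuration}) for a \emph{single} arbitrary $\Ceps$-tree $T$, and only then obtains universality by applying the lemma to a big tree that $\eps$-connects a fresh root to one copy of every small tree satisfying $\Weps$. The game in the structuration lemma is over $V(T)\cup\powne(V(T))$: Adam controls $V(T)$ and may either follow a tree edge or $\eps$-jump to any set $A\ni t$; Eve controls the sets and must pick some element of $A$. It is precisely this power-set choice gadget that forces Eve's $\eps$-strategy to expose, for each memory state $m$, a \emph{total} well-founded preorder on $V(T)$ (her choice $t_{(A,m)}$ is a minimum of $A$ in the $m$-th order). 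Your ``switches between compatible pointed trees'' do not obviously yield totality on each part $V_m$, which is exactly the $\eps$-separation property you need. Even after extracting these $\eps$-edges, the paper still has to saturate under $\eps^*$-composition (to obtain transitivity and the monotonicity axiom) and then quotient by the induced equivalence (to obtain antisymmetry)---three non-trivial steps your outline omits.
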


As explained by Proposition~\ref{prop:non-strict-ineq-for-eps-memory}, strict inequalities, though they give more precise statements, are irrelevant for "$\eps$-memory".
Thus the use of non-strict inequalities in the statement above is not restrictive.

We state our second result in terms of strict inequalities, which is relevant in the case of "$\eps$-free" "memory@@epsFree", and allows for more precision.
However, we do not have a converse statement (as discussed in the introduction, the converse cannot hold, see also Figure~\ref{fig:summary_of_results} and Proposition~\ref{prop:infinite-antichains-no-eps-memory}).

\begin{thm}\label{thm:implication_non_eps}
Let $\val$ be a "valuation".
If for all "cardinals" $\kappa$ there exists a "well-monotone" "$(\kappa,\val)$-universal" "graph" of "width" $< \mu$, then $\val$ has "$\eps$-free memory" $< \mu$.
\end{thm}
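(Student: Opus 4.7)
The plan is to follow the scheme of Ohlmann's proof for the positional case, widening the one-vertex-per-fibre construction into an antichain-per-fibre one. Fix a game $\G = (G, \VE, v_0, \val)$. By Lemma~\ref{lem:value_reached_on_trees}, I may pick a tree strategy $\mathcal{T} = (T, \pi_T, t_0)$ attaining the value, so that $\val_T(t_0) = \valGame$. Taking some cardinal $\kappa > |T|$ and applying the hypothesis, I obtain a well-monotone $(\kappa,\val)$-universal graph $U$ of width $< \mu$ together with a value-preserving morphism $\phi \colon T \to U$.

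Next, I would define the fibres of the target strategy as antichains in $U$. For each $v \in \Verts{G}$ set $\S_v = \{\phi(t) : t \in \Verts{T},\, \pi_T(t) = v\} \subseteq \Verts{U}$ and let $F(v) \subseteq \S_v$ be its set of $\leq$-minimal elements, which exist by well-foundedness of $U$. Each $F(v)$ is an antichain, hence $|F(v)| < \mu$. I then build $\S$ with $\Verts{\S} = \{(v,u) : u \in F(v)\}$, projection $(v,u) \mapsto v$, and an edge $(v,u) \re{c} (v',u')$ exactly when $v \re{c} v' \in \Edges{G}$, $u' \in F(v')$, and $u \re{c} u' \in \Edges{U}$; the initial vertex is $s_0 = (v_0, u_0)$ for some minimal $u_0 \in \S_{v_0}$ with $u_0 \leq \phi(t_0)$.

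The crux is checking that $\S$ is a strategy, which is where well-foundedness and monotonicity interact. At any $(v,u) \in \Verts{\S}$ we have $u = \phi(t)$ for some $t \in \pi_T^{-1}(v)$; each outgoing edge of $t$ in $T$ (at least one if $v \in \VE$, one per Adam move if $v \in \VA$) yields some $t \re{c} t''$ with $\pi_T(t'') = v'$ and, via $\phi$, an edge $u \re{c} \phi(t'')$ in $U$. By well-foundedness, I pick $u' \in F(v')$ with $u' \leq \phi(t'')$; monotonicity of $U$ then lifts this to $u \re{c} u'$ (since $u' \leq \phi(t'')$ and $u \geq u$), providing the required edge $(v,u) \re{c} (v',u') \in \Edges{\S}$. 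This simultaneously ensures absence of sinks and Adam-closure, while $|\pi_\S^{-1}(v)| = |F(v)| < \mu$ as desired.

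It remains to verify the value bound $\val_S(s_0) \leq \valGame$. The second projection $(v,u) \mapsto u$ is itself a morphism $S \to U$ by construction, so infinite paths from $s_0$ in $S$ project to paths from $u_0$ in $U$ with the same colour sequences, giving $\val_S(s_0) \leq \val_U(u_0)$. Since $u_0 \leq \phi(t_0)$, iterated monotonicity allows any path from $u_0$ to be mirrored by a colour-identical path from $\phi(t_0)$, so $\val_U(u_0) \leq \val_U(\phi(t_0)) \leq \val_T(t_0) = \valGame$, where the last inequality is the value-preserving property of $\phi$. The main difficulty I anticipate is the aforementioned interplay between well-foundedness and monotonicity: shrinking each fibre to an antichain of minimal elements is only safe because monotonicity of $U$ recovers the missing edges by lifting every edge to one ending in a minimal successor.
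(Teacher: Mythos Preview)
Your proposal is correct and follows essentially the same approach as the paper: both construct the small-memory strategy by taking, for each game vertex $v$, the antichain of $\leq$-minimal elements of $\phi(\pi_T^{-1}(v))$ in the universal graph, and then verify that the second projection $(v,u)\mapsto u$ is a morphism into $U$ to bound the value. The only cosmetic difference is that the paper fixes a representative $t_{(v,m)}\in\pi_T^{-1}(v)$ with $\phi(t_{(v,m)})=m$ and defines edges of $S$ via the outgoing edges of that representative in $T$, whereas you take the more permissive edge set $\{(v,u)\re{c}(v',u')\mid v\re{c}v'\in E(G)\text{ and }u\re{c}u'\in E(U)\}$; since the latter contains the former and still projects to $U$, both work.
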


As we will see in Section~\ref{sec:finitely_bounded_antichains}, the two results above collapse for finite cardinals $\mu$.

\begin{rem}\label{rmk:arena-independent-memory}
	We remark that we say that the "($\eps$-)chromatic memory" of an "objective" is $\leq\mu$ if for all "games", the "value@@game" can be attained with a "chromatic@@memory" "product strategy" over some structure $M$, $|M|\leq \mu$, with "update function" $\dd$.
	We could ask if it is possible to modify the order of the quantifiers in this definition, that is, if we could fix the structure $M$ and its "update function" in advance, regardless of the game. 
	The notion obtained in that way is called \AP ""arena-independent memory"" in the recent literature~\cite{BLORV22Journal}.
	
	Over "$\eps$-games", the size of a minimal "arena-independent memory" for an "objective" coincide with its "$\eps$-chromatic" memory (this is proved for the case of finite memory in~\cite[Proposition~8.9]{Kop08Thesis}).
	We note that this result can be easily derived from Theorem~\ref{thm:characterisation_eps} and its proof: the existence of an "$\eps$-separated" "chromatic@@epsGraph" "universal" graph over the structure $M$ implies that $M$ is an "arena-independent memory" (see Section~\ref{sec:structure_to_finite_memory}), and the existence of such a graph is guaranteed by the implication from right to left of this theorem.
	
	We do not know whether the sizes of a minimal "$\eps$-free" "arena-independent memory" and the "$\eps$-free chromatic" memory also coincide.
\end{rem}

\subsection{From structure to finite memory}
\label{sec:structure_to_finite_memory}

The goal of this section is to prove Theorem~\ref{thm:implication_non_eps} and the first implication in Theorem~\ref{thm:characterisation_eps}.
The two proofs are very similar; we start with Theorem~\ref{thm:implication_non_eps}.

\begin{proof}[Proof of Theorem~\ref{thm:implication_non_eps}]
	Let $\val : C^\omega \to X$ be a "valuation" , $\G = (G, \VE, v_0, \val)$ a "game" and $\T = (T,\pi_\T,t_0)$ be a "strategy" for $\G$ such that $T$ is a "tree".
	Our aim is to define a "strategy" with "memory@@strategy" $<\mu$ and "value@@strategy" $\leq \val(\T)$; this proves that $\val$ has memory $<\mu$ thanks to Lemma~\ref{lem:value_reached_on_trees}.
	
	Take a "well-monotone" "$(|T|,\val)$-universal" graph $(U,\leq)$ with "width" $<\mu$, and consider a "morphism" $\phi: T \to U$ "preserving the value" at the "root", $\val_T(t_0) = \val_{U}( \phi(t_0))$.
	For each $v \in \Verts{G}$, we consider the set $M_v \subseteq V(U)$ of minimal elements of $\phi(\pi_\T^{-1}(v))$ (see Figure~\ref{fig:structure_to_finite_memory}).

	\begin{figure}[h]
		\begin{center}
			\includegraphics[width = 0.8 \linewidth]{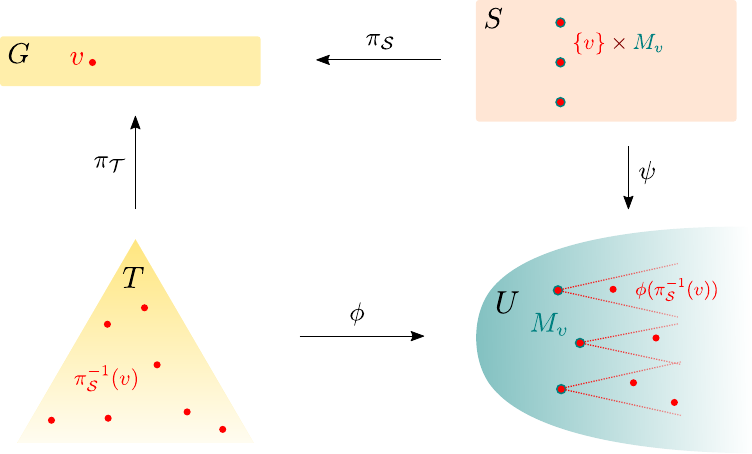}
		\end{center}
		\caption{An illustration for the construction of the bounded-memory strategy $\S$ in the proof of Theorem~\ref{thm:implication_non_eps}.}\label{fig:structure_to_finite_memory}
	\end{figure}

	We define our "strategy" $\S=(S,\pi_\S,s_0)$ over 
	\[
	V(S) = \bigsqcup_{v \in \Verts{G}} \{v\} \times M_v,
	\]
	with "projection@@strategy" $\pi_\S: (v,m) \mapsto v$, and let $s_0=(v_0,m_0)$ where $m_0 \in M_{v_0}$ is an element below $\phi(t_0)$ in $V(U)$.
	Note that for all $v \in \Verts{G}$, $M_v$ is an antichain of $V(U)$ and therefore $|\pi_S^{-1}(v)| = |M_v| < \mu$, as required.
	
	For each element $(v,m) \in V(S)$, fix a choice of a $t_{(v,m)} \in \pi_\T^{-1}(v)$ such that $\phi(t)=m$.
	We now let
	\[
	E(S) = \{(v,m) \re c (v',m') \mid \exists t' \in \pi_{\T}^{-1}(v'), t_{(v,m)} \re c t' \in E(T) \tand \phi(t') \geq m'\},
	\]
	which concludes the definition of $\S$.
	
	Let us verify that $\S$ is indeed a "strategy" over $\G$.
	It is clear that $\pi_\S(s_0)=v_0$.
	Now observe that for any $(v,m) \in V(S)$, and any edge $t_{(v,m)} \re c t' \in E(T)$, if we denote $v'=\pi_\T(t')$, there is an element $m' \leq \phi(t')$ in $M_{v'}$.
	This induces an edge $(v,m) \re c (v',m') \in E(S)$.
	This implies, since $T$ is a graph (it has no sink), that $S$ is a graph.
	Moreover, for all $v \in \VA$ and outgoing edge $v \re c v' \in \Edges{G}$, since $T$ is a "strategy" $t_{(v,m)}$ has an outgoing edge in $T$ towards some $t'$ with $\pi_\T(t') = v'$, thus by the above observation, $(v,m)$ has an outgoing edge in $S$ towards an element $(v',m')$ (which has "projection@@strategy" $\pi_\S(v',m')=v'$, as required) and $\S$ is a "strategy".
	
	There remains to see that $\valStrat \leq \val(\T)$.
	We will in fact prove that $\psi : (v,m) \mapsto m$ is a "morphism" from $S$ to $U$, which implies that 
	\[
	\valStrat = \val_S(s_0) \leq \val_U(\psi(s_0)) = \val_U(m_0) \leq \val_U(\phi(t_0)) = \val(\T),
	\]
	the wanted result.
	Let $(v,m) \re c (v',m') \in E(S)$, we aim to prove that $m \re c m' \in E(U)$.
	Let $t'$ be such that $\pi_\T(t')=v'$, $t_{(v,m)} \re c t'$ and $\phi(t') \geq m'$.
	Since $\phi$ is a "morphism" we have in $U$
	\[
	m = \phi(t_{(v,m)}) \re c \phi(t') \geq m',
	\]
	thus by monotonicity, $m \re c m' \in E(U)$.
\end{proof}

The proof of the first implication in Theorem~\ref{thm:characterisation_eps} is essentially the same, with a few minor adjustments.
We spell it out for completeness.

\begin{proof}[Proof of $\implies$ in Theorem~\ref{thm:characterisation_eps}]
	Let $\val : C^\omega \to X$ be a "valuation" , $\G=(G,\VE,v_0,\vale)$ an $\eps$-game and $\T = (T,\pi_T,t_0)$ a "strategy" for $\G$ such that $T$ is a "tree".
	Our aim is to define an "$\eps$-strategy" with "memory@@strategy" $\leq \mu$ and "value@@strategy" $\leq \val(\T)$.
	Take an "$\eps$-separated" "well-monotone" "$(|T|,\vale)$-universal" graph $(U,\re \eps)$ with partition $(U_m)_{m \in M}$ of "width" $|M| \leq \mu$, and consider a "morphism" $\phi: T \to U$ "preserving the value" at the root.
	We define the "product strategy" $\S=(S,\pi_\S,s_0)$ by
	\[
	V(S) = \{(v,m) \in \Verts{G} \times M \mid \phi(\pi_\T^{-1}(v)) \cap U_m \neq \emptyset\},
	\]
	with $s_0 = (v_0,m_0)$, where $m_0$ is such that $\phi(t_0) \in U_{m_0}$,
	and with "projection@@strategy" $\pi_\S: (v,m) \mapsto v$.
	To define $E(S)$, we pick for each $(v,m) \in V(S)$ an element $t_{(v,m)} \in V(T)$ such that $\phi(t_{(v,m)})=\min\{\phi(\pi_\T^{-1}(v)) \cap U_m\}$, and let
	\[
	E(S) = \{(v,m) \re c (v',m') \mid \exists t' \in \pi_\T^{-1}(v'), t_{(v,m)} \re c t' \in E(T) \tand \phi(t') \in U_{m'}\}.
	\]
	
	We verify that $\S$ is indeed a "strategy" over $\G$.
	By definition, we have $\pi_\S(s_0)=v_0$.
	Observe that for any $(v,m) \in V(S)$, and any edge $t_{(v,m)} \re c t' \in E(T)$, there is an edge $(v,m) \re c (v',m') \in E(S)$ where $v' = \pi_\T(t')$ and $m'$ is such that $\phi(t') \in U_{m'}$.
	This implies that $\S$ is a "strategy" since $\T$ is.
	
	We now prove that $\psi:(v,m) \mapsto \min\{\phi(\pi_\T^{-1}(v)) \cap U_m\}$ is a "morphism" from $S$ to $U$.
	This implies
	\[
	\valStrat = \val_S(s_0) \leq \val_U(\psi(v_0,m_0))  \leq \val_U(\phi(t_0)) = \val(\T),
	\]
	the wanted result.
	Let $(v,m) \re c (v',m') \in E(S)$, and let $t'$ be such that $\pi_\T(t')=v'$, $t_{(v,m)} \re c t' \in E(T)$ and $\phi(t') \in U_{m'}$.
	We have by definition $\phi(t_{(v,m)})=\psi(v,m)$ and $\phi(t') \geq \psi(v',m')$, therefore we conclude by monotonicity of $U$ that $\psi(v,m) \re c \psi(v',m')$.
	Finally, remark that if $c = \eps$, since $\psi(v,m) \in U_m$ and $\psi(v',m') \in U_{m'}$ and there are no $\eps$-edges in $U$ between different partitions, it must be that $m=m'$ which concludes our proof for the non-chromatic case: $\S$ is indeed an "$\eps$-strategy".
	
	For the "chromatic@@memory" case, it suffices to show in the construction above that if $U$ is in fact chromatic, then so is the constructed "strategy" $\S$.
	For this, we observe that the "morphism" $\psi$ above maps $(v,m) \in V(S)$ to a vertex in $U_m$, therefore if $(v,m) \re c (v',m') \in E(S)$ and $\delta$ is the "update function@graph" of $U$, it must be that $\delta(m,c)=m'$.
	We conclude that $\S$ is indeed a "chromatic strategy" with "update function" $\delta$.
\end{proof}

\subsection{From finite memory to structure}
\label{sec:finite_memory_to_structure}

In this section, we prove the converse implication in Theorem~\ref{thm:characterisation_eps}.
The main difficulty lies in proving the following result, which holds at the level of "valuations", and which we refer to as a structuration lemma for $C^\eps$-trees.

\begin{lem}[Structuration of $C^\eps$-trees]\label{lem:structuration}
	Let $\val: C^\omega \to X$ be a "valuation"  with $\eps$(-"chromatic@@epsMemory")-"memory@@eps" $\leq \mu$ and let $T$ be a $\Ceps$-tree with root $t_0 \in V(T)$.
	There exists an "$\eps$-separated" "well-monotone" ("chromatic@@epsGraph") "graph" $U$ of "breadth" $\leq \mu$ and a "morphism" $T \to U$ preserving the "value@@graph" at the "root". 
\end{lem}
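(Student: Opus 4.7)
The plan is to apply the $\eps$-memory hypothesis to the free game on $T$ and read off the desired $\eps$-separated structure from the resulting bounded-memory strategy. Concretely, consider the game $\G_T = (T, V_\Eve = \emptyset, t_0, \vale)$ in which Adam controls every vertex of $T$. Since Eve has no real choices in $\G_T$, $\val(\G_T) = \val_T(t_0)$, computed with respect to $\vale$. The hypothesis that $\val$ has $\eps$-memory (respectively, $\eps$-chromatic memory) $\leq \mu$ then furnishes an $\eps$-strategy (respectively, chromatic $\eps$-strategy) $\S = (S, \pi_\S, s_0)$ of memory $\leq \mu$ achieving this value. After normalisation, we can write $V(S) \subseteq V(T) \times M$ with $|M| \leq \mu$ and $\pi_\S(t,m) = t$, and in the chromatic case $\S$ carries an update function $\delta : M \times \Ceps \to M$ fixing $\eps$.

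Then I would construct $U$ from $\S$ as follows. Its vertex set is a quotient of $V(S)$, partitioned by the memory component into $(V_m)_{m \in M}$, which immediately gives breadth $\leq \mu$. Since $\S$ is an $\eps$-strategy, $\eps$-edges of $S$ preserve the memory component and thus induce a preorder on each $V_m$. I would quotient each $V_m$ by a suitable equivalence relation that collapses elements whose $\eps$-tail behaviour in $\S$ agrees, so as to ensure well-foundedness, then extend the induced partial order to a well-order $\leq_m$ by transfinite recursion. The edges of $U$ are the images of $E(S)$ through the quotient, together with the $\eps$-edges realising the well-orders, closed under monotonicity. In the chromatic case, $\delta$ descends to the quotient, giving $U$ the structure of a chromatic $\eps$-separated graph with the same update function.

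The morphism $\phi : T \to U$ is obtained by threading: each $t \in V(T)$ has a unique path from $t_0$ in $T$, along which $\S$ selects successors and records a sequence of memory states, ending at some $(t, m_t) \in V(S)$; then set $\phi(t)$ to the equivalence class of $(t, m_t)$. That $\phi$ is a morphism follows immediately from the defining property of $\S$. The structural properties of $U$ (being $\eps$-separated monotone of breadth $\leq \mu$, chromatic when $\S$ is) are by construction, and well-monotonicity is built into each $\leq_m$. Value-preservation at the root, that is $\val_U(\phi(t_0)) \leq \val_T(t_0)$, is then obtained by lifting any infinite path from $\phi(t_0)$ in $U$ back to a path in $S$ with the same colour sequence (using monotonicity to absorb the added $\eps$-order edges), which has value at most $\val(\S) \leq \val_T(t_0)$.

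The main obstacle I anticipate is the design of the quotient and of the well-order on each $V_m$: $S$ may contain infinite strictly $\eps$-descending chains within some $V_m$ (for instance when $T$ itself contains an infinite $\eps$-path), so the quotient must collapse all such chains into single classes. Leveraging the value bound $\val(\S) \leq \val_T(t_0)$ together with the strongly-neutral behaviour of $\eps$ in $\vale$ should provide exactly the right equivalence relation; once the well-order is in place, the remaining verifications are routine, provided the monotone closure step is handled carefully so as not to introduce paths in $U$ whose value exceeds $\val_T(t_0)$.
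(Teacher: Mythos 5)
The construction does not get off the ground because of the game you apply the memory hypothesis to. In $\G_T = (T, V_\Eve=\emptyset, t_0, \vale)$, Eve controls no vertex, so \emph{every} strategy is optimal; in particular Eve wins with memory $1$ (a positional strategy exists trivially). The hypothesis that $\val$ has $\eps$-memory $\leq \mu$ therefore yields no information: the strategy $\S$ you obtain is just $T$ decorated with an essentially arbitrary $\eps$-respecting labelling $t \mapsto m_t$, and it records nothing about how two vertices $(t,m)$ and $(t',m)$ in the same memory fibre compare. This is fatal for the step you yourself flag as the main obstacle. Declaring $t' \leq_m t$ and closing under monotonicity amounts to asserting that any play arriving at $t$ may be rerouted to continue from $t'$ without increasing the value; for two arbitrary vertices of the same fibre there is no reason either direction should be safe, and the value bound $\val(\S) \leq \val_T(t_0)$ on the bare tree cannot certify it. An arbitrary well-order chosen by transfinite recursion, followed by monotone closure, will in general create new paths of strictly larger value.

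The paper's proof supplies exactly the missing ingredient: it plays not on $T$ but on $T$ augmented with an Eve-controlled vertex for each non-empty subset $A \subseteq V(T)$, with $\eps$-edges $t \re{\eps} A$ for all $t \in A$ and $A \re{\eps} t$ for $t \in A$. A winning $\eps$-strategy with memory $\leq \mu$ must then, for each memory state $m$ and each subset $A$, commit to a single element $t_{(A,m)} \in A$ from which the play can safely continue \emph{no matter which element of $A$ it arrived from}. These forced choices are precisely the $\eps$-edges certifying the order: they make $\re{\eps}$ a total, well-founded preorder on each fibre (totality from two-element sets, well-foundedness from arbitrary sets having a minimum $t_{(A,m)}$), and since they are realised by genuine plays of the strategy, adding them and saturating under monotonicity provably preserves the value. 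Without some such gadget forcing Eve to reveal comparability within fibres, the memory hypothesis cannot be converted into an $\eps$-separated well-monotone structure, so your argument has a genuine gap rather than a routine verification left to do.
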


Before proving the lemma, we show that it implies the Theorem.

\begin{proof}[Proof of $\impliedby$ in Theorem~\ref{thm:characterisation_eps} assuming Lemma~\ref{lem:structuration}]
	We consider an "objective" $W \subseteq C^\omega$ which has $\eps$-memory $\leq \mu$, and fix a "cardinal" $\kappa$.
	We consider the disjoint union of all $C^\eps$-trees of cardinality $< \kappa$ whose roots satisfy $\Weps$, up to "isomorphism", and we let $T$ be the tree with root $t_0$ obtained from this disjoint union by adding an $\eps$-edge from $t_0$ to the root of each tree (see Figure~\ref{fig:big_tree}).
	Note that $t_0$ satisfies $\Weps$.

	\begin{figure}[h]
		\begin{center}
			\includegraphics[width = 0.53 \linewidth]{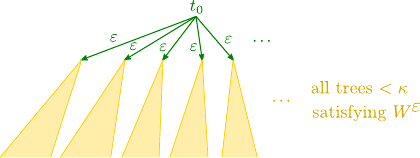}
		\end{center}
		\caption{The tree on which Lemma~\ref{lem:structuration} is applied.} \label{fig:big_tree}
	\end{figure}
	
	We now apply Lemma~\ref{lem:structuration} to $T$ and obtain an "$\eps$-separated" "well-monotone" ("chromatic@@epsGraph") graph $U$ of "breadth" $\leq \mu$ with a "morphism" $\phi: T \to U$ such that $\phi(t_0)$ "satisfies" $\Weps$ in $U$.
	There remains to prove that $U$ is $(\kappa,\Weps)$-"universal".
	Consider a $\Ceps$-tree $T'$ of cardinality $< \kappa$ and whose root "satisfies" $\Weps$.
	By definition of $T$, there is $t'$ in $T$ with $t_0 \re \eps t'$ such that the tree rooted at $t'$ in $T$ is isomorphic to $T'$.
	We then obtain a "morphism" $\phi': T' \re {} U$ simply as a restriction of $\phi$ (composed with the "isomorphism").
	Since $\phi(t_0)$ satisfies $\Weps$ in $T$, so does $\phi(t')$, and therefore $\phi'$ preserves the "value@@graph" at the root, as required.
	
	To accommodate trees whose root do not satisfy $W$, in the non-chromatic case it suffices to add an additional vertex $\top$ (in any chosen part $U_m$) with $c$-edges towards all $U_m$ (including itself) for all $c \in C^\eps$.
	This preserves being "$\eps$-separated" "well-monotone" of "breadth" $\leq \mu$, does not increase the "value@@graph" of vertices $\neq \top$, and allows to embed (while preserving the "value@@graph" at the root) any tree $T'$ whose root does not satisfy $W$ simply by mapping everything to $\top$.
	
	The "chromatic@@memory" case requires being slightly more careful.
	Let $\delta$ be the "update function@graph" of $U$.
	For each $m \in M$ we add a vertex $\top_m \in U_m$, with $c$-edges towards all $U_{m'}$ (including $\top_{m'}$) whenever $\delta(m,c)=m'$.
	This preserves being "$\eps$-separated", "well-monotone", "chromatic@@epsGraph" and of "breadth" $\leq \mu$, and does not increase the "value@@graph" of vertices $\notin \{\top_m \mid m \in M\}$.
	Now, if $T'$ is a "tree" whose "root" $t'_0$ does not "satisfy" $W$, we easily embed it in a top-down fashion, by mapping $t'_0$ to $\top_{m_0}$ (for any choice of $m_0$), and mapping $t' \in V(T)$ to $\delta^*(m_0,w)$, where $w$ is the label of the unique "path" from $t'_0$ to $t'$ in $T$.
\end{proof}

We now prove Lemma~\ref{lem:structuration}; our proof extends the one of~\cite[Theorem~3.3]{Ohlmann23UnivJournal}.

\begin{proof}[Proof of Lemma~\ref{lem:structuration}]
	Let $\val:C^\omega \to X$ be a "valuation"  with $\eps$(-"chromatic@@epsMemory")-"memory@@eps" $\leq \mu$ and $T$ be a $C^\eps$-tree with root $t_0$.
	We consider the "$\eps$-game" $\G=(G,\VE,v_0,\vale)$ obtained by adding an "Eve" vertex for each non-empty set $A$ of vertices of $T$, and $\eps$-edges back and forth from $t$ to $A$ whenever $t \in A$, with the control given to "Adam" over $V(T)$.
	Formally, it is given by
	\[
	\begin{array}{rcl}
	\Verts{G} & = & V(T) \cup \powne(V(T)) \\
	\VE & = & \powne(V(T)) \\
	\Edges{G} & = & E(T) \cup \{ t \re \eps A \mid t \in A \} \cup \{ A \re \eps t \mid t \in A \},
	\end{array}
	\]
	and $v_0 = t_0$. See Figure~\ref{fig:choice_arena} for an illustration.

	\begin{figure}[h]
		\begin{center}
			\includegraphics[width = 0.6 \linewidth]{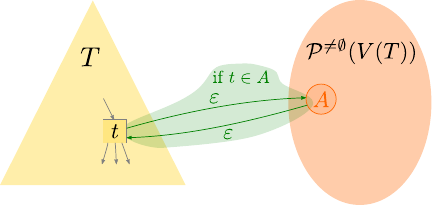}
		\end{center}
		\caption{The game $\G$.}\label{fig:choice_arena}
	\end{figure}
	
	We claim that the "value@@game" of $\G$ is $\leq \val_T(t_0)$.
	Indeed, consider the "strategy" for "Eve" which, whenever arriving at $A \in \VE$ via an edge $t \re \eps A$, follows the edge $A \re \eps t$ back towards $t$.
	Consider an infinite "path" $\pi$ from $t_0$ in that "strategy", and let $\pi'$ be obtained from $\pi$ by removing all occurrences of $t \re \eps A \re \eps t$.
	Note that $\pi'$ defines a "path" from $t_0$ in $T$.
	There are two cases.
	\begin{itemize}
		\item If $\pi'$ is infinite, then by neutrality of $\eps$ it has the same "value@@graph" as $\pi$.
		\item If $\pi'$ is finite, then any continuation of $\pi'$ in $T$ has "value@@graph" $\geq \vale(\pi)$ by definition of $\vale$.
	\end{itemize}
	This proves that for each infinite "path" $\pi$ from $t_0$ in the "strategy", there exists an infinite "path" of "value@@graph" $\geq \vale(\pi)$ from $t_0$ in $T$, and thus $\vale(\G) \leq \val_T(t_0)$.
	
	Since $\val$ has $\eps$(-"chromatic@@epsMemory")-"memory@@eps" $\leq \mu$, there exists an $\eps$(-"chromatic@@epsStrategy") "strategy@@eps" $\S=(S,\pi_\S,s_0)$ over $\G$ with "value@@strategy" $\vale(\S)=\vale(\G)$ and "memory@@eps" $\leq \mu$.
	By definition we have $V(S) \subseteq \Verts{G} \times M$ with $|M| \leq \mu$, $\pi_\S : (v,m) \mapsto v$ and $(v,m) \re \eps (v',m') \in E(S)$ implies $m=m'$.
	In particular, we have $s_0=(t_0,m_0)$ for some $m_0 \in M$.

	For each $(t,m) \in V(S)$ with $t \in V(T)$, and each edge $t \re c t' \in E(T)$, it holds that $t \re c t' \in \Edges{G}$ and $t \in \VA$ therefore there is $(t',m') \in V(S)$ with $(t,m) \re c (t',m') \in E(S)$ since $S$ is a "strategy".
	This allows to define a "morphism" $\phi:T \to S$ by proceeding top-down: we set $\phi(t_0) = s_0 = (t_0,m_0)$, and assuming $\phi(t) = (t,m)$ is defined and $t \re c t' \in E(T)$ we let $\phi(t')=(t',m')$ with $(t,m) \re c (t',m') \in E(S)$.
	Since $\vale(S)=\vale(\G)$, it holds that $\phi$ "preserves the value" at the root; moreover, note that the image of $\phi$ is included in $V(T) \times M \subseteq V(S)$.
	
	Observe that for each $(t,m) \in V(S)$ with $t \in V(T)$, and each $A \ni t$, since $t \re \eps A \in \Edges{G}$ and $t \in \VA$, the edge $(t,m) \re \eps (A,m)$ belongs to $E(S)$.
	Moreover, for each $(A,m) \in V(S)$, with $A \in \powne(V(T))$ there is an element $t_{(A,m)} \in A$ such that $(A,m) \re \eps (t_{(A,m)},m) \in E(S)$; we fix such a $t_{(A,m)}$ for each $(A,m)$.
	Combining these two observations, we have for each $(t,m) \in V(S)$ with $t \in V(T)$ and each $A \ni t$, the edges
	\[
	(t,m) \re \eps A \re \eps (t_{(A,m)},m)
	\]
	in $E(S)$.
	
	We now let $U^{(0)}$ be the "graph" over $V(U^{(0)}) = V(S) \cap (V(T) \times M)$ given by
	\[
	E(U^{(0)}) = E(S) \cap \big[V(U^{(0)}) \times C \times V(U^{(0)})\big] \cup \{(t,m) \re \eps (t_{A,m},m) \mid t \in A\}.
	\]
	In words, the graph $U^{(0)}$ is obtained by first restricting $S$ to $V(T) \times M$, and then adding all edges $(t,m) \re \eps (t_{(A,m)},m)$.
	Note that $\phi: T \to S$ defined above restricts to a "morphism" $\phi^{(0)} : T \to U^{(0)}$.
	Moreover, any "path" $\pi$ from $s_0$ in $U^{(0)}$ can be turned to a "path" $\pi'$ from $s_0$ in $S$ by replacing each occurrence of edges $(t,m) \re \eps (t_{(A,m)},m)$ by $(t,m) \re \eps (A,m) \re \eps (t_{(A,m)},m)$.
	Since the "path" $\pi'$ obtained in this way has the same "value@@graph" as $\pi$, we have $\val_{U^{(0)}}(s_0) \leq \val_{S}(s_0) = \val_T(t_0)$; stated differently $\phi^{(0)}$ "preserves the value" at the "root".
	Since it is the case in $S$, and we added only $\eps$-edges which preserve the "memory state" $m$, it holds that $(t,m) \re \eps (t',m') \in E(U^{(0)})$ implies $m=m'$.
	
	Note that for each $(t,m) \in V(U^{(0)})$ it must be that $t_{(\{t\},m)} = t$ (since by definition $t_{(A,m)} \in A$), and thus there is a loop $(t,m) \re \eps (t,m) \in E(U^{0})$.
	We then let $U^{(1)}$ be given by $V(U^{(1)}) = V(U^{(0)})$ and
	\[
	E(U^{(1)}) = \{u \re c u' \mid \exists v,v' \in V(U^{(0)}), u \rp{\eps^*} v \re c v' \rp{\eps^*} u' \tin U^{(0)}\},
	\]
	where the notation $x \rp{\eps^*} y$ means that there exists a "path" of $\eps$-edges from $x$ to $y$.
	By the observation above, it holds that $E(U^{(0)}) \subseteq E(U^{(1)})$ or stated differently the identity is a "morphism" from $U^{(0)}$ to $U^{(1)}$; we thus obtain a "morphism" $\phi^{(1)} : T \to U^{(1)}$ by composition.
	We now argue that $\phi^{(1)}$ "preserves the value" at the root: any "path" $\pi$ from $s_0$ in $U^{(1)}$ can be transformed into a "path" $\pi'$ in $U^{(0)}$ with same "value@@graph" by replacing occurrences of $u \re c u'$ by $u \rp{\eps^*} v \re c v' \rp{\eps^*} u'$, thus $\val_{U^{(1)}}(s_0) \leq \val_{U^{(0)}}(s_0) \leq \val_T(t_0)$.
	Moreover, $\eps$-edges in $U^{(1)}$ cannot modify the "memory state" $m$ since this is the case of $\eps^*$-paths in $U^{(0)}$.
	
	Observe now that it holds that $u \re \eps v \re c v' \re \eps u'$ in $U^{(1)}$ implies $u \re c u' \in E(U^{(1)})$.
	Applying to $c = \eps$ gives transitivity of $\re \eps$.
	Moreover, defining the partition of $V(U^{(1)})$ by $(V^{(1)}_m)_{m \in M}$ with $V^{(1)}_m = V(U^{(1)}) \cap (V(T) \times \{m\})$, we have that for each $m \in M$ and each non-empty subset $A \times \{m\}$ of $V^{(1)}_m$, for each $(t,m) \in A \times \{m\}$ there is an $\eps$-edge in $E(U^{(1)})$ towards $(t_{(A,m)},m)$.
	This implies that $\re \eps$ induces a "well-founded" total preorder over $V^{(1)}_m$, satisfying the monotonicity axiom.
	
	The only remaining caveat is that $\re \eps$ is not necessarily antisymmetric over $V(U^{(1)})$.
	However, in $U^{(1)}$, vertices $v,v'$ such that both $v \re \eps v'$ and $v' \re \eps v$ have the same incoming and outgoing edges.
	Defining such vertices to be $\sim$-equivalent, we thus let $U^{(2)}$ be given over $V(U^{(2)}) = V(U^{(1)}) / \sim$ by
	\[
	E(U^{(2)}) = \{[v] \re c [v'] \mid v \re c v' \in E(U^{(1)})\},
	\]
	where $[v]$ is the $\sim$-class of $v$; note that this is well defined since $v \re c v' \in E(U^{(1)})$ does not depend on the choices of representatives $v$ and $v'$ in $[v]$ and $[v']$.
	It is easy to verify that the "morphism" $v \mapsto [v]$ "preserves all values" from $U^{(1)}$ to $U^{(2)}$, and that $U^{(2)}$ is an "$\eps$-separated" monotone graph of "width" $\mu$, with the partition $(V^{(2)}_m)_{m \in M}$ defined by $V^{(2)}_m = V^{(1)}_m/\sim$.
	This concludes the proof in the "non-chromatic@@memory" setting.
	
	For the "chromatic@@memory" case, there remains to verify that $U^{(2)}$ is "chromatic@@epsGraph".
	We let $\delta: M\times C \to M$ be the "update function" of $\S$. Let $[u] \re c [u'] \in E(U^{(2)})$, we will show that $\delta$ witnesses the fact that $U^{(2)}$ is chromatic.
	Unraveling the definitions, we obtain that $u \re c u' \in E(U^{(1)})$, and in turn $u \rp {\eps^*} v \re c v' \rp{\eps^*} u'$ in $U^{(0)}$ for some $v,v' \in V(U^{(0)})$.
	Since in $U^{(0)}$, $\eps$-edges preserve the "memory state", we get that $u$ and $v$, as well as $u'$ and $v'$ have the same "memory state"; let us write them $m$ and $m'$. We aim to show that $m' = \delta(m,c)$.
	If $c = \eps$, there is nothing to prove, we already know that $\eps$-edges preserve the "memory state" in $U^{(2)}$.
	Otherwise, by definition of $U^{(0)}$ we get that $v \re c v' \in E(S)$, which yields $m'=\delta(m,c)$ as required.
\end{proof}

\subsection{Further results}\label{sec:further_results}

Before going to applications in subsequent sections, we prove a few further general results that are useful for constructing "universal" graphs.
We start by proving (Section~\ref{sec:finitely_bounded_antichains}) that in the case of memory $\leq m$ for some finite $m \in \N$, and with some further technical assumptions, our two notions of universal structures ("well-monotone" graphs with bounded "antichains" on one hand, and "$\eps$-separated" "well-monotone" graphs with bounded "breadth") collapse.

We proceed to show how our definitions instantiate in the important special cases of "prefix-increasing" (Section~\ref{sec:prefix_increasing_objectives}) and "prefix-independent" (Section~\ref{sec:prefix_independent_objectives}) "objectives" (these are defined later).
Last, we show (Section~\ref{sec:almost_universality}) how the convenient notion of "almost universality" (which serves as a lever for deriving universality results) from~\cite{Ohlmann23UnivJournal} adapts to the setting at hands.

We urge the reader to jump to Section~\ref{sec:examples} and come back to~\ref{sec:further_results} when required.

\subsubsection{Finitely bounded antichains determine the $\ee$-memory}
\label{sec:finitely_bounded_antichains}

Dilworth's Theorem (c.f. Appendix~\ref{sec:appendix_set_theory}) states that if the size of the "antichains" of
an "ordered set" $(P,\leq)$ is bounded by a finite number $k$, then $P$ can be decomposed in $k$ disjoint "chains"~\cite{Dilworth50}.
Therefore (assuming "well-foundedness" of the set of values), this allows to construct "$\eps$-separated" "universal" structures from arbitrary "monotone" ones, whenever we have a finite bound on the "width".

\begin{prop}\label{prop:non_eps-implies-eps}
	Let $\val: C^\omega \to X$ be a "valuation", and $m \in \N$; we further assume that $X$ is "well-founded".
	If for all "cardinals" $\kappa$ there exists a "well-monotone" graph which is "$(\kappa,\val)$-universal" and has "width" $\leq m$, then for all "cardinals" $\kappa$ there is also an "$\eps$-separated" "well-monotone" "$(\kappa,\vale)$-universal" graph of "breadth" $\leq m$, and therefore $\val$ has "$\eps$-memory" $\leq m$.
\end{prop}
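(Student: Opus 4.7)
The plan is to construct the "$\eps$-separated" graph directly from the given "well-monotone" "universal" graph via Dilworth's chain decomposition, and then verify "universality" by reducing each $\Ceps$-"tree" problem to a $C$-"tree" one. Fix a "cardinal" $\kappa$ and let $U$ be a "well-monotone" "$(\kappa,\val)$-universal" "graph" of "width" $\leq m$. Applying Dilworth's theorem (in its infinite version, which covers the finite "width" case), I partition $V(U)$ into at most $m$ chains $V_1,\dots,V_m$; since $U$ is "well-founded", each $V_i$ is "well-ordered". Let $U'$ have the same vertex set and edges as $U$, augmented by an $\eps$-edge $v \re \eps v'$ whenever $v,v'$ lie in the same chain and $v \geq v'$ in $U$. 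Then $U'$ is an "$\eps$-separated" "well-monotone" "graph" of "breadth" $\leq m$: the partition $(V_i)$ witnesses the $\eps$-separation, and "monotonicity" of $U'$ with respect to $\re \eps$ follows from "monotonicity" of $U$, since the chain-induced partial order is contained in the original order.

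The central task is to establish "$(\kappa,\vale)$-universality" of $U'$. Given a $\Ceps$-"tree" $T$ with $|V(T)| < \kappa$, I build a $C$-"tree" $\hat T$ whose main vertices are the empty "path" $()$ together with the finite $\Ceps$-"paths" from $t_0$ in $T$ ending at a non-$\eps$ edge, with an edge $\pi \re c \pi'$ (for $c \in C$) whenever $\pi' = \pi \cdot \rho$ for some $\rho \in \eps^* c$ in $T$. For each main vertex $\pi$ whose endpoint $t_\pi$ admits an $\eps$-only infinite continuation in $T$, I append an infinite linear branch at $\pi$ labeled by a word $w_\pi \in C^\omega$ realizing $\inf_{w \in C^\omega}\val(\pi_C \cdot w)$; here "well-foundedness" of $X$ is crucial, as it turns this infimum into an attained minimum. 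A case split on the outgoing behavior from $t_\pi$ (either some $\eps^* c$ extension exists, or all continuations from $t_\pi$ are $\eps$-only) shows that every main vertex has an outgoing edge in $\hat T$, so $\hat T$ is a $C$-"tree" of cardinality $\leq |V(T)| < \kappa$; a parallel case analysis on infinite "paths" of $\hat T$ then yields $\val_{\hat T}(()) = \vale_T(t_0)$.

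Applying the "universality" hypothesis, I obtain a "morphism" $\hat\phi \colon \hat T \to U$ "preserving the value" at the "root". Define $\phi \colon T \to U'$ by $\phi(t) = \hat\phi(\pi_t)$, where $\pi_t$ denotes the longest prefix of the unique "path" from $t_0$ to $t$ in $T$ ending at a non-$\eps$ edge (or $()$ if no such prefix exists). For a $C$-edge $t \re c t' \in E(T)$, the extension $\pi_{t'} = \pi_t \cdot \rho$ with $\rho \in \eps^* c$ witnesses an edge $\pi_t \re c \pi_{t'} \in E(\hat T)$, so $\phi(t) \re c \phi(t') \in E(U) \subseteq E(U')$; for an $\eps$-edge $t \re \eps t'$, one has $\pi_{t'} = \pi_t$, and the required edge is the self-loop $u \re \eps u$, present in $U'$ since $u \geq u$ in its chain. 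Exploiting "monotonicity" of $U$ to shortcut sequences of $\eps$-edges along any infinite path of $U'$ yields $\val_{U'}(u) \leq \val_U(u)$ for every vertex $u$ (in the finite-$C$-edge case, well-definedness of the bound uses that $U$ has no sinks); combining this with $\val_U(\hat\phi(())) \leq \val_{\hat T}(()) = \vale_T(t_0)$ shows that $\phi$ "preserves the value" at the "root". The "$\eps$-memory" bound then follows from Theorem~\ref{thm:characterisation_eps}.

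The main obstacle lies in the construction of $\hat T$ and the faithful translation of $\vale$-values of $T$ into $\val$-values of $\hat T$: paths in $T$ containing only finitely many $C$-edges must be represented inside a genuine $C$-tree, and their $\vale$-value (defined via an infimum over completions) must be realized by an actual infinite "path" of $\hat T$. This is precisely where the "well-foundedness" assumption on $X$ is essential.
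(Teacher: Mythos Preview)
Your proof is correct and follows essentially the same approach as the paper: apply Dilworth's theorem to the given well-monotone universal graph to add chain-internal $\eps$-edges, then prove $(\kappa,\vale)$-universality by contracting $\eps$-edges in a given $\Ceps$-tree, appending $C$-branches at the resulting sinks using well-foundedness of $X$ to realise the infimum, and pulling back the morphism obtained from $(\kappa,\val)$-universality. The only cosmetic differences are that you index the contracted tree by paths rather than vertices, and you establish the value inequality $\vale_{U'}(u)\leq\val_U(u)$ after building the morphism rather than before; neither changes the substance of the argument.
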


Unfortunately, proving this proposition requires dealing with some slight technical complications arising from creation of sinks when contracting $\eps$'s in an infinite tree.
This is what leads to the assumption that $X$ is "well-founded", we do not know whether it can be dropped.
Note however that "objectives" are "valuations" with $X=\{\bot,\top\}$, which is "well-founded", and moreover many other interesting examples of valuations have well-founded sets of values (for instance, energy valuations over $\N$).

Proposition~\ref{prop:non_eps-implies-eps} is very useful in practice (see examples in Section~\ref{sec:examples}) for establishing finite "$\eps$-memory": it suffices to construct universal structures with bounded "width", which is often easier in practice than "$\eps$-separated" structures.
One can also see the result in a negative light: for finite bounds (for instance, $\omega$-regular objectives), one cannot use Theorem~\ref{thm:implication_non_eps} to derive "$\eps$-free" "memory@@epsFree" upper bounds smaller than the "$\eps$-memory".

\begin{proof}
	Let $(G,\leq)$ be a "well-monotone" "($\kappa,\val$)-universal" "$C$-graph" of "width" $\leq m$.
	Applying Dilworth's Theorem yields a partition of $V(G)$ into $(V_j)_{j \in m}$ so that the restriction of $\leq$ to each $V_j$ is a "total order".
	We let $G^\eps$ be the graph over $V(G)$ defined by adding $\re \eps$'s according to this decomposition, that is,
	\[
		E(G^\eps) = E(G) \cup \{v \re \eps v' \mid v \geq v' \tin G \tand \exists j \in m \text{ such that } v,v' \in V_j\}.
	\]
	Note that $G^\eps$ is indeed an "$\eps$-separated" "monotone" graph over $m$, as required.
	We first prove that values in $G^\eps$ are the same as in $G$, that is, for any $v \in V(G)= V(G^\eps)$ it holds that
	\[
		\val_G(v) = \vale_{G^\eps}(v).
	\]
	We remark that $\val_G(v) \leq \vale_{G^\eps}(v)$, since $G$ is a subgraph of $G^\eps$. For the other inequality, let $v \in V(G)$ and consider a path $\pi: v_0 \re {c_0} v_1 \re{c_1} \dots$ from $v=v_0$ in $G^\eps$, our aim is to construct a path from $v$ in $G$ with "value@graph" larger than $\pi$; for this we proceed in two steps.
	First, we replace in $\pi$ any block of the form
	\[
		v_i \re {\eps} v_{i+1} \re{\eps} \dots \re{\eps} v_{j-1} \re {c} v_j,
	\]
	where $c \in C$, by
	\[
		v_i \re {c} v_j.
	\]
	This does not increase the $\vale$-value by definition, and yields a path $\pi'$ in $G^\eps$ by "monotonicity".
	Now if the original path $\pi$ had infinitely many occurrences of colours in $C$, we are done; otherwise $\pi'$ is of the form $\pi'_0 \pi'_1$, where $\pi'_0$ is a finite path avoiding $\eps$-edges whereas $\pi'_1$ is an infinite path comprised only of $\eps$-edges.
	Note that $\pi'_0$ is thus a finite path from $v$ in $G$, let $v'$ denote its endpoint.
	Now append to $\pi'_0$ any infinite path starting from $v'$ in $G$, which yields a path $\pi''$ in $G$ with value $\geq \vale(\pi)$, by definition of $\vale$.

	We now proceed to proving "$(\kappa,\vale)$-universality" of $G^\eps$: let $T^\eps$ be a "$C^\eps$-tree" of cardinality $< \kappa$ and let $t_0 \in V(T^\eps)$ denote its root.
	We first remove $\re \eps$'s from $T^\eps$ by contracting them, formally we let $T$ be the "$C$-pretree" given over
	\[
		V(T) = \{t \in V(T^\eps) \mid \text{ the unique path from $t_0$ to $t$ in $T^\eps$ does not end with an $\eps$-edge}\}
	\]
	by
	\[
		E(T) = \{t \re c t' \mid t \rp {\eps^*} t'' \re c t' \tin T\}.
	\]
	Note that $T$ is rooted at $t_0 \in V(T)$, and that there may be sinks in $T$, namely, the vertices from which all paths visit only $\eps$-edges in $T^\eps$; let
	\[
		\begin{array}{lcl}
		S &=& \{t \in V(T) \mid t \text{ is a sink in } T\} \\
		&=& \{t \in V(T) \mid \text{ all paths from $t$ in $T^\eps$ see only $\eps$-edges} \}.
		\end{array}
	\]
	For each $s \in S$, let $u_s \in C^*$ be the coloration of the unique path from $t_0$ to $s$ in $T$, and let $w_s \in C^\omega$ be an infinite word such that
	\[
		\vale(u_s \eps^\omega) = \val(u_s w_s),
	\]
	whose existence is guaranteed by "well-foundedness" of $X$ and the definition of $\vale$.

	We then append to each sink $s \in S$ an infinite path with label $w_s$, formally we let $T'$ be the $C$-tree over
	\[
	V(T') = V(T) \cup (S \times \NN)
	\]
	given by
	\[
		E(T') = E(T) \cup \{(s,i) \re{w_{s,i}} (s,i+1) \mid s \in S \tand i \in \N\},
	\]
	where it is understood that $(s,0)=s$ and we write $w_s=w_{s,0} w_{s,1} \dots$.
	By construction, we get that $\val_{T'}(t_0) = \vale_{T^\eps}(t_0)$; moreover, $T'$ has cardinality $< \kappa$ (unless $\kappa$ is finite, in which case there is no tree with cardinality $< \kappa$ and the proof is vacuous).
	There is a "morphism" $\phi': T' \to G$ "preserving the value" at the "root" by "($\kappa,\val$)-universality" of $G$.

	Finally, we define a map $\phi^\eps : T^\eps \to G^\eps$ by letting $\phi^\eps(t) = \phi(t')$, where $t'$ is the unique vertex in $V(T)$ such that $t' \rp{\eps^*} t$ is a path in $T^\eps$.
	It is a direct check that $\phi^\eps$ is a "morphism", since $G^\eps$ includes $\eps$-loops around all vertices.
\end{proof}

\subsubsection{The case of prefix-increasing objectives}
\label{sec:prefix_increasing_objectives}

\AP A "$C$-valuation" $\val$ is ""prefix-increasing"" (resp. prefix-decreasing) if adding a prefix can only increase (resp. decrease) values, meaning that for all $u \in C^*$ and $w \in C^\omega$ we have $\val(uw) \geq \val(w)$ (resp. $\val(uw) \leq \val(w)$).
\AP We say that $\val$ is ""prefix-independent"" if it is both "prefix-increasing" and prefix-decreasing, that is, for all $u \in C^*$ and $w \in C^\omega$, $\val(uw)=\val(w)$.
An "objective" $W$ is thus "prefix-increasing" (resp. deacreasing, independent) if for all $c\in C$, $cW \supseteq W$ (resp. $\subseteq$, $=$).

Just as in~\cite{Ohlmann23UnivJournal}, we may simplify the notions under study when the "objective" has such properties. First, note that for a "prefix-increasing" "objective" $W$ and a "tree" $T$, it is equivalent that the "root" of $T$ "satisfies" $W$, and that $T$ itself (meaning, all vertices in $T$) "satisfies" $W$. 

Now fix a "prefix-increasing" objective $W \subseteq C^\omega$ and consider a "well-monotone" graph~$U$.
Consider moreover the "restriction@@graph" $U'$ of $U$ to vertices which "satisfy" $W$ (note that $U$ is "well-monotone", as is any restriction of a "well-monotone" graph).
Last, let $U^\top$ be the "well-monotone" graph obtained from $U'$ by appending an additional fresh vertex $\top$, with all possible outgoing edges (and only incoming edges from itself); formally $V(U^\top) = V(U') \sqcup \{\top\}$ and $E(U^\top) = E(U') \cup \{\top\} \times C \times V(U^\top)$. 
The following lemma states that the (hypothetical) "universality" of $U$ transfers to $U^\top$.

\begin{lemC}[{\cite[Lemma~3.9]{Ohlmann23UnivJournal}}]
    Let $\kappa$ be a "cardinal".
    The following conditions are equivalent:
    \begin{enumerate}[(i)]
        \item $U$ is "$(\kappa,W)$-universal";
        \item $U^\top$ is "$(\kappa,W)$-universal";
        \item all "$C$-trees" of cardinality $< \kappa$ "satisfying" $W$ have a "morphism" into $U'$.
    \end{enumerate}
\end{lemC}

Intuitively, the lemma states that in the case of a "prefix-increasing" "objective" and when looking for a universal structure, vertices which do not satisfy the "objective" are irrelevant, and can simply be replaced by $\top$.
Observe moreover that "antichains" are not larger in $U'$ or $U^\top$ than they are in the original graph $U$.

\AP In this way, we can simplify without loss of generality the definition of universality when dealing with "prefix-increasing" "objectives". In the remainder of the paper, if $W$ is a "prefix-increasing" "objective", we will say that a graph $U$ is ""$(\kappa,W)$-universal for prefix-increasing objectives"" if:
\begin{itemize}
	\item $U$ "satisfies" $W$; and
	\item it embeds all trees of cardinality $< \kappa$ that "satisfy" $W$.
\end{itemize}
When it is clear from the context that $W$ is "prefix-increasing", we will just say "$(\kappa,W)$-universal@@prefixIncreasing".

That is, we may always disregard vertices of universal graphs not satisfying the objective under consideration. We note that the definition of universality that we have just given coincides with the one introduced (for "prefix-independent" objectives) by Colcombet and Fijalkow~\cite{CF18}.

\subsubsection{The case of prefix-independent objectives}
\label{sec:prefix_independent_objectives}

Recall that an "objective" $W$ is "prefix-independent" if for all $u \in C^*$ and $w \in C^\omega$,
\[
	uw\in W \Leftrightarrow w\in W.
\]
When dealing with "prefix-independent" objectives, it is often more natural to consider "pretrees", which leads to a stronger definition of universality that may lend itself better to inductive arguments (see for example Sections~\ref{sec:muller_games} and~\ref{sec:lexicographic_products}).
\AP We say that a vertex in a "pregraph" ""satisfies@@pregraph"" an "objective" if all infinite paths from the vertex "satisfy" the "objective" (regardless of finite paths), and that a pregraph \emph{satisfies} an "objective" if all its vertices do.
This may be unsatisfactory for modelisation purposes, for instance, in the case of a safety condition, since this definition allows for non-safe finite paths; however it poses no issue in the context of "prefix-independent" objectives for which finite paths are indeed irrelevant.

\AP Given a "prefix-independent" "objective" $W$, we say that a "graph" $U$ is ""$(\kappa,W)$-universal for prefix-independent objectives"" if
\begin{itemize}
		\item $U$ "satisfies" $W$; and
		\item $U$ embeds all "pretrees" of cardinality $< \kappa$ that "satisfy@@pregraph" $W$.
\end{itemize}
When it is clear from the context that $W$ is "prefix-independent", we will just say that $U$ is "$(\kappa,W)$-universal@@prefixIndependent".

We prove that for "prefix-independent" objectives, this stronger definition of universality can in fact be used without loss of generality. First, we remark that as "prefix-independent" objectives are a special case of "prefix-increasing" ones, all remarks from the previous subsection apply.

\begin{lem}
Let $W \subseteq C^\omega$ be a nonempty "prefix-independent" "objective", let $U$ be a "$C$-pregraph" and let $\kappa$ be an infinite "cardinal".
The following are equivalent:
\begin{enumerate}[(i)]
    \item all "trees" of cardinality $<\kappa$ which "satisfy" $W$ embed in $U$;
    \item all "pretrees" of cardinality $<\kappa$ which "satisfy@@pregraph" $W$ embed in $U$.
\end{enumerate}
\end{lem}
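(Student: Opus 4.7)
The plan is to prove the two implications separately; the direction $(ii) \implies (i)$ is essentially immediate. Every "$C$-tree" is a "$C$-pretree" (a "$C$-graph" is by definition a "$C$-pregraph" without sinks), and for a "prefix-independent" "objective" the notion of satisfying $W$ as a "graph" and as a "pregraph" coincide: both amount to requiring that every infinite "path" from every vertex lies in $W$, finite "paths" being irrelevant by "prefix-independence". Hence any "tree" "satisfying" $W$ is in particular a "pretree" "satisfying@@pregraph" $W$, and $(ii)$ provides the required embedding into $U$.

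The substantive direction is $(i) \implies (ii)$. Given a "pretree" $T$ of cardinality $<\kappa$ "satisfying@@pregraph" $W$ with "root" $t_0$, the plan is to plug the sinks of $T$ with infinite $W$-tails so as to obtain a "tree" $T'$ "satisfying" $W$, apply $(i)$ to $T'$, and restrict the resulting "morphism" to $T$. Using non-emptiness of $W$, I fix some $w^* = w^*_0 w^*_1 \dots \in W$; then at each sink $s$ of $T$, I glue a fresh infinite path $s \re{w^*_0} (s,1) \re{w^*_1} (s,2) \re{w^*_2} \dots$ labelled by $w^*$. The result $T'$ is clearly a "$C$-tree" rooted at $t_0$ into which $T$ embeds as an induced sub-pregraph, so the restriction of any "morphism" $T' \to U$ to $V(T)$ yields a "morphism" $T \to U$.

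The central verification is that $T'$ "satisfies" $W$, and this is where "prefix-independence" is really used. Any infinite "path" $\pi$ in $T'$ from a vertex $v$ falls in one of three cases: if $v$ lies on some appended tail, then $\pi$ is a suffix of $w^*$, lying in $W$ by "prefix-independence"; if $v \in V(T)$ and $\pi$ stays inside $T$, then $\pi \in W$ by the hypothesis on $T$; otherwise $v \in V(T)$ and $\pi$ eventually crosses into a tail, so $\pi$ decomposes as a finite prefix in $T$ followed by a suffix of $w^*$, and again belongs to $W$ by "prefix-independence".

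The only (minor) technical point I anticipate is the cardinality bound on $T'$: we add countably many vertices per sink and there are at most $|T|$ sinks, so $|T'| \leq |T| \cdot \alephno$, which remains $<\kappa$ since $\kappa$ is infinite (the factor $\alephno$ being absorbed whenever it matters, and disappearing when $T$ already has no sinks in which case $T' = T$). Applying $(i)$ to $T'$ then yields an embedding $T' \to U$, whose restriction to $V(T)$ provides the desired "morphism" $T \to U$.
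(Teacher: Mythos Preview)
Your proof is correct and follows essentially the same route as the paper: both argue that $(ii)\Rightarrow(i)$ is immediate, and for $(i)\Rightarrow(ii)$ both fix some $w^*\in W$, append a fresh $w^*$-labelled ray at every sink of the given pretree, check via prefix-independence that the resulting tree still satisfies $W$, and then restrict the morphism supplied by $(i)$. Your case analysis and your explicit treatment of the cardinality bound are slightly more detailed than the paper's version, which simply asserts the construction and its correctness.
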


\begin{proof}
    The implication $(ii) \implies (i)$ is trivial and therefore we concentrate on the other one.
    Fix an infinite word $w=w_0w_1\dots \in W$ and consider a pretree $T'$ of cardinality $< \kappa$ which "satisfies" $W$.
    Let $S \subseteq V(T')$ be the set of sinks in $T'$.
    Now let $T$ be the tree obtained by appending a path labelled with $w$ to all sinks in $T'$, formally, $V(T)=V(T') \cup (S \times \NN)$, and
    \[
        E(T) = E(T') \cup \{(s,i) \re {w_i} (s,i+1) \mid s \in S \tand i \in \N\};
    \]
    where it is understood that we identify $(s,0)$ with $s$ for all $s \in S$.
    Paths in $T$ are either paths in $T'$, or their label end with $w$; thus $T$ "satisfies" $W$ by "prefix-independence".
    Thus there is a "morphism" $T \to U$, whose restriction to $V(T')$ is then a "morphism" $T' \to U$, and the lemma is proved.
\end{proof}

\subsubsection{Almost universality}
\label{sec:almost_universality}

In this section, we show how the technically convenient notion of almost universality defined by Ohlmann~\cite{Ohlmann23UnivJournal} adapts to our setting.
Recall that $G[v]$ denotes the "restriction@@graph" of $G$ to vertices reachable from $v$.

\AP For a "prefix-independent" "objective" $W$, we say that a "graph" $U$ is ""almost $(\kappa,W)$-universal"" if
\begin{itemize}
    \item $U$ "satisfies" $W$; and
    \item all "pretrees" $T$ "satisfying@@pregraph" $W$ have a vertex $t$ such that $\treerooted{T}{t} \to U$.
\end{itemize}

The following technical result allows us to build "well-monotone" "universal@@prefixIndependent" "graphs" from "almost universal" graphs, without any blowup on the size of "antichains".
Given a "well-monotone" graph $U$ and an "ordinal" $\alpha$, we let\footnote{Using the vocabulary from Section~\ref{sec:lexicographic_products}, $U \ltimes \alpha$ is the "lexicographic product@@graphs" of $U$ and the edgeless pregraph over $\alpha$; this explains the common notation.} $U \ltimes \alpha$ be the "well-monotone" graph given by $V(U \ltimes \alpha) = V(U) \times \alpha$ and
\[
E(U \ltimes \alpha) = \{(v, \lambda) \re c (v', \lambda') \mid \lambda > \lambda' \tor [\lambda = \lambda' \tand v \re c v' \in E(U)]\};
\]
it is illustrated in Figure~\ref{fig:repeat_graph}.
\begin{figure}[h]
\begin{center}
\includegraphics[width = 1 \linewidth]{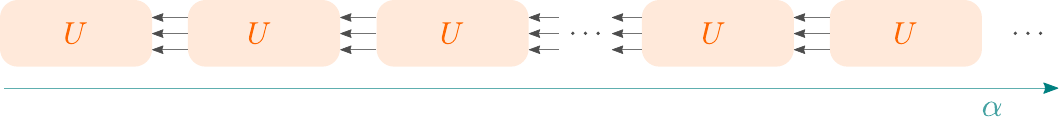}
\end{center}
\caption{An illustration of the graph $U \ltimes \alpha$.}
\label{fig:repeat_graph}
\end{figure}

\begin{lem}\label{lem:rongeur_de_croute}
Let $W$ be a "prefix-independent" "objective", $\kappa$ a "cardinal", and assume that $U$ is "almost $(\kappa,W)$-universal".
Then $U \ltimes \kappa$ is "$(\kappa,W)$-universal@@prefixIndependent" (for "prefix-independent" objectives).
\end{lem}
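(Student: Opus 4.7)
The plan is to construct, for any "pretree" $T$ of cardinality $< \kappa$ which "satisfies@pregraph" $W$, a "morphism" $\phi : T \to U \ltimes \kappa$ by "transfinite recursion", using the second ("ordinal") coordinate to record the stage at which each vertex is processed. The essential idea is that almost universality lets us peel off, at each stage, a subtree whose root embeds into $U$, and the $\ltimes$-structure allows us to glue these partial embeddings together for free.

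First I will observe that $U \ltimes \kappa$ itself "satisfies" $W$: in any infinite "path" the second coordinate is non-increasing by construction of $\Edges{U \ltimes \kappa}$, so by "well-foundedness" of the "ordinals" it is eventually constant; from that point on the path projects onto an infinite path in $U$, which satisfies $W$ since $U$ does. By "prefix-independence" of $W$, the entire original path also satisfies $W$.

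For the universality claim, I proceed by "transfinite recursion" on ordinals $\lambda$, maintaining a set $S_\lambda \subseteq \Verts{T}$ of remaining vertices with $S_0 = \Verts{T}$. At a successor stage, if $S_\lambda \neq \emptyset$, I consider the restriction $T_\lambda = T|_{S_\lambda}$; by almost $(\kappa,W)$-universality applied to $T_\lambda$ I obtain $t_\lambda \in \Verts{T_\lambda}$ and a "morphism" $\psi_\lambda : \treerooted{T_\lambda}{t_\lambda} \to U$. I then define $\phi(v) = (\psi_\lambda(v), \lambda)$ for every $v \in \Verts{\treerooted{T_\lambda}{t_\lambda}}$, and set $S_{\lambda+1} = S_\lambda \setminus \Verts{\treerooted{T_\lambda}{t_\lambda}}$. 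At limit stages I take $S_\lambda = \bigcap_{\mu < \lambda} S_\mu$. Since the $S_\lambda$ form a strictly decreasing chain of subsets of a set of cardinality $< \kappa$, the process terminates at some $\lambda^* < \kappa$, so $\phi$ is defined on all of $\Verts{T}$ and takes values in $U \ltimes \kappa$.

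Two verifications remain. The easier one is that each $T_\lambda$ is indeed a "pretree" "satisfying@pregraph" $W$ to which almost universality applies: because we always remove whole subtrees, the unique path from the "root" of $T$ to any surviving vertex never crosses a removed region, so the root property is preserved, and infinite paths in $T_\lambda$ are infinite paths in $T$, hence satisfy $W$. The main obstacle, which I expect to require the most care, is the morphism property of $\phi$. For an edge $v \re c v' \in \Edges{T}$, let $\lambda_v, \lambda_{v'}$ denote the stages at which $v$ and $v'$ are removed; I will show $\lambda_{v'} \leq \lambda_v$, for otherwise $v'$ would still lie in $S_{\lambda_v}$, the edge $v \re c v'$ would survive in $T_{\lambda_v}$, and $v'$ would therefore be reachable from $t_{\lambda_v}$, forcing $\lambda_{v'} = \lambda_v$. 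The morphism condition then splits into two cases: if $\lambda_v = \lambda_{v'}$, the required edge in $U \ltimes \kappa$ comes from $\psi_{\lambda_v}$ being a "morphism" into $U$; if $\lambda_v > \lambda_{v'}$, it follows directly from the definition of $\Edges{U \ltimes \kappa}$, since a strict decrease of the second coordinate produces an edge of any colour.
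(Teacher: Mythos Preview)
Your proof is correct and follows essentially the same approach as the paper: both argue first that paths in $U \ltimes \kappa$ eventually stabilise in the second coordinate and hence project to paths in $U$, then build the embedding of a pretree by transfinitely peeling off subtrees via almost universality and recording the stage in the second coordinate. The only cosmetic difference is that you organise the recursion through the decreasing family $S_\lambda$ with an explicit limit clause, whereas the paper indexes directly by the chosen vertices $v_\lambda$; your verification that $\lambda_{v'} \leq \lambda_v$ for every edge is exactly the paper's observation that edges go from $T_\beta$ to $T_{\beta'}$ only for $\beta' \leq \beta$.
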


The proof is directly adapted from~\cite[Lemma~4.5]{Ohlmann23UnivJournal} to this setting.

\begin{proof}
Consider an infinite path $(u_0,\lambda_0) \re {c_0} (u_1,\lambda_1) \re {c_1} \dots$ in $U \ltimes \kappa$.
Since $\lambda_0 \geq \lambda_1 \geq \dots$, it must be that this sequence is eventually constant by "well-foundedness".
Therefore, some suffix $u_i \re {c_i} u_{i+1} \re{c_{i+1}} \dots$ defines a path in some copy of $U$, which implies that $c_i c_{i+1} \dots \in W$.
We conclude by "prefix independence" that $U \ltimes \kappa$ indeed "satisfies" $W$.

Let $T$ be a tree of cardinality $\kappa$ which "satisfies" $W$.
We construct by "transfinite recursion" an "ordinal" sequence of vertices $\{v_\alpha\}_{\alpha < \lambda_0} \in V(T)$ (for some $\lambda_0 < \kappa$) where for each $\beta < \lambda$, $v_\lambda$ is not reachable from $v_\beta$ in $T$, together with a "morphism" $\phi_\lambda: T_\lambda \to U$, where $T_\lambda$ is the restriction of $T$ to vertices reachable from $v_\lambda$ but not from $v_\beta$ for $\beta < \lambda$.

Assuming the $v_\beta$'s for $\beta < \lambda$ already constructed (this assumption is vacuous for the base case $\lambda=0$), there are two cases.
If all vertices in $T$ are reachable from some $v_\beta$, then the process stops.
Otherwise, we let $T_{\geq \lambda}$ be the "restriction@@graph" of $T$ to vertices not reachable from any $v_\beta$ for $\beta < \lambda$.
It is a "pretree" of cardinality $< \kappa$. By "almost $(\kappa,W)$-universality" of $U$, there exists some $t\in T_{\geq \lambda}$ such that $\treerooted{T_{\geq \lambda}}{t}$ has a "morphism" towards $U$. We let $v_\lambda= t$ and $\phi_\lambda$ be this "morphism".

Since all the $T_\lambda$'s are nonempty, the process must terminate in $\lambda_0$ steps for some "ordinal" $\lambda_0$ satisfying $\lambda_0 \leq \card{V(T)} < \kappa$.
Now observe that any edge in $T$ is either from $T_\beta$ to itself, for some $\beta \leq \lambda_0 < \kappa$, or from $T_\beta$ to $T_{\beta'}$ for $\beta' < \beta \leq \lambda_0< \kappa$.
This proves that the map $\phi : V(T) \to V(U \ltimes \kappa)$ defined by $\phi(v)=(\phi_\lambda(v),\lambda)$, where $\lambda$ is so that $v \in V(T_\lambda)$, is a "morphism" from $T$ to $U\ltimes\kappa$.
\end{proof}

\section{Examples}
\label{sec:examples}

In this section we show how Theorems~\ref{thm:characterisation_eps} and~\ref{thm:implication_non_eps} can provide upper bounds on the "memory" of different "objectives" by constructing "well-monotone" "universal" "graphs".
In general, proving tight bounds for the "memory" of "objectives" is a hard task, and only the memory of a few classes of "objectives" has been characterised, notably, for "topologically closed" "objectives"~\cite{CFH14} and "Muller objectives"~\cite{DJW97}.

As a warm-up and to illustrate our tool, we start (Section~\ref{sec:other-examples}) with a few concrete examples.
We then turn our focus to "topologically closed objectives" (Section~\ref{sec:generalised_safety}) for which we derive a variant of the result of~\cite{CFH14}.
Finally, we show how the upper bound of~\cite{DJW97} for the "memory@@eps" of "Muller objectives" can be understood in our framework (Section~\ref{sec:muller_games}).

In Table~\ref{table:examples}, we compile the examples appearing throughout the paper and their "exact memory" requirements for the different notions of memory that we consider.
\AP For an infinite word $w\in C^\omega$ we write $\intro*\minf(w)=\{ c\in C \mid w_i=c \text{ for infinitely many } i \}$.
\AP For a word $u\in C^*$, we write $\intro*\infOften(u) = \{w\in C^\oo \mid w \text{ contains infinitely many factors } u\}$ and we let $\intro*\finOften(u)$ denote its complement.
We let $C^{\geq n} = C^n C^*$.

\newcolumntype{M}[1]{>{\centering\arraybackslash}m{#1}} 
\newcolumntype{N}{@{}m{0pt}@{}}
\renewcommand{\thefootnote}{\fnsymbol{footnote}}
\newcommand{\mycolwidth}{1.3cm}
\begin{table}[h]
	\begin{tabular}{|M{3.2cm}||M{1.7cm}|M{2.1cm}|M{1.45cm} M{2.15cm}|M{2.1cm}|N} 
		\hline
		\textbf{Objective} & \begin{tabular}{@{}c@{}}"$\eps$-free" \\ "memory@@epsFree"\end{tabular}  & "$\eps$-memory" & \multicolumn{1}{c|}{\begin{tabular}{@{}c@{}}"$\eps$-free" \\ "chromatic@@memory"\end{tabular}}  & "$\eps$-chromatic" & \begin{tabular}{@{}c@{}}Minimal \\ det. parity  \\ automaton\end{tabular} \\[3mm]
		\hline
		\hline
		
$\infOften(a) \cap \infOften(b)$  &    2    & 2     & \multicolumn{1}{M{1.3cm}|}{2}         & 2 & 2       &\\[3mm]
		\hline

$\forall i, \, w_i \neq w_{i+1}$  & 2 (Prop.~\ref{prop:eps-memory-greater-than-memory})   &   $\card{C}$ &   \multicolumn{1}{M{\mycolwidth}|}{$\card{C}$\footnotemark[3]}   &  $\card{C}$\footnotemark[3] &  $\card{C}+2$\footnotemark[2]  & \\[3mm]
\hline

$(C^*a)^mC^{\geq n} a C^\oo$	&  $n+1$  &   $n+1$  &   \multicolumn{1}{M{\mycolwidth}|}{$n+1$}  &  $n+1$  &  $m + n + 2$\footnotemark[2] &\\[3mm]
\hline

$\infOften(bb) \; \cup $ \newline $( \finOften(b) \cap \finOften(aa))$	&   2  & 2 &  \multicolumn{1}{M{\mycolwidth}|}{2}   &  2   &   3\footnotemark[3]          &\\  
	\hline
	
$|\minf(w)| = 2$  & 2   &   2  \cite{DJW97}    &    \multicolumn{1}{M{\mycolwidth}|}{$\card{C}$ \cite{Casares22}}   & $\card{C}$ \cite{Casares22} & $\card{C}(\card{C}+1)$ \cite{Casares22, CCFL24FromMtoP}  & \\[3mm] \hline

	
Topologically closed objectives  (Section~\ref{sec:generalised_safety})	&   Unknown     &  "Width" of "left quotients" \cite{CFH14}    &   \multicolumn{1}{M{\mycolwidth}|}{Unknown}       &  $\mathtt{NP}$-complete\footnotemark[5] \cite{BFRV23Regular}   &   "Left quotients"\footnotemark[2] &\\[3mm]
\hline

Muller objective $\F$ \newline
	(Section~\ref{sec:muller_games})	&   Unknown     &  $\memory(\F)$ \cite{DJW97}   &  \makecell{\hspace{4mm} $\mathtt{NP}$-complete\footnotemark[5] \\
	\hspace{1mm}Both notions coincide\\ \cite{Casares22}}        &    &     Leaves of the "Zielonka tree" \cite{CCFL24FromMtoP}       &\\[3mm]
		\hline                      
	\end{tabular}
\caption{Examples of objectives appearing in the paper and their memory requirements. We also include sizes of minimal parity automata, which give upper bounds to the $\eps$-chromatic memory.
}
\label{table:examples}

\end{table}

\subsection{Concrete objectives}
\label{sec:other-examples}

We start by illustrating the notions presented until now and some methods to derive "universality" proofs with a few simple concrete examples of "objectives".

\paragraph*{Objective $W_1  = \{ w\in \{a,b\}^\oo \mid \normalfont{ a \text{ and } b \text{ occur infinitely often in } w }\} = \infOften(a) \cap \infOften(b) $.}

Objective $W_1$ is an example of a "Muller objective" ($W_1 = \Muller{\{a,b\}}$; see Section~\ref{sec:muller_games} for details).
It is known that its "$\eps$-memory" is exactly $2$~\cite{DJW97}.
We show, for each "cardinal" $\kappa$, an "$\eps$-separated" "chromatic@@graph" and "well-monotone" "$(\kappa, W_1^\eps)$-universal@@prefixIndependent" graph of "breadth" $2$.
(Since $W_1$ is "prefix-independent", we use the corresponding notion of "universality@@prefixIndependent", from Section~\ref{sec:prefix_independent_objectives}).
By Theorem~\ref{thm:characterisation_eps}, this implies that the "$\eps$-chromatic memory" of $W_1$ is exactly $2$.
 
Fix a "cardinal" number $\kappa$ and consider the graph $U$ from the left hand side of Figure~\ref{fig:univGraph-muller-a-b}.
It is easy to check that $U$ is an "$\eps$-separated monotone graph over" the set $M = \{a,b\}$ and that it is indeed "chromatic@@graph" and "satisfies" $W$.
We sketch a "universality" proof; formal details are given for general "Muller objectives" in Section~\ref{sec:muller_games}.

\footnotetext[2]{Since these objectives are topologically closed or topologically open, they can be recognised by a weak automaton, and the size of a minimal deterministic parity (resp. weak) automaton recognising them is given by the number of "left quotients" of the objective.}

\footnotetext[3]{The proof of these claims can be found in Appendix~\ref{sec:appendix_bounds_memory}.}

\footnotetext[5]{When the objective $W$ is $\oo$-regular (it has a finite number of "left quotients") the decision problem is: given a deterministic parity automaton recognising $W$ and $k\in \NN$, decide whether the "$\eps$-chromatic memory" of $W$ is $\leq k$.}
\renewcommand{\thefootnote}{\arabic{footnote}}

\begin{figure}[h]
	\begin{center}
		\includegraphics[width =  \linewidth]{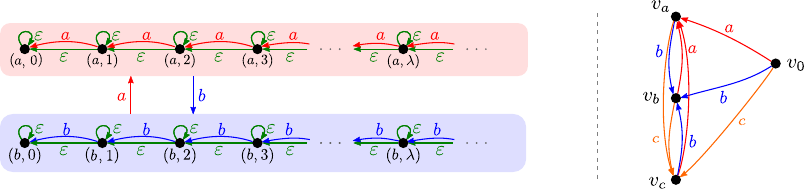}
	\end{center}
	\caption{Universal graphs for $W_1$ (on the left) and $W_2$ (on the right, for $C=\{a,b,c\}$). For the graph on the left (as required by the definition of $\eps$-separated graphs), the order coincides with $\re \eps$; on the right, it is given by $v_0 \geq v_a,v_b,v_c$ and $v_a,v_b,v_c$ incomparable.
	Edges following from monotonicity are not represented. An edge between boxes indicates that all edges are put between vertices in the respective boxes.}\label{fig:univGraph-muller-a-b}
\end{figure}

Let $T$ be a "$C$-tree" of "size@graph" $< \kappa$ which "satisfies" $W$, and let $t_0$ be its root.
Note that all paths from $t_0$ eventually visit a $b$-edge; there is in fact an "ordinal" $\lambda_0<\kappa$ (defined by "induction@@transfinite") which counts the maximal amount of $a$-edges seen from $t_0$ before a $b$-edge is seen; we set $\phi(t_0)$ to be $(a,\lambda_0)$.

Then for each edge $t_0 \re c t \in E(T)$ we proceed as follows.
\begin{itemize}
	\item If $c \in \{a, \eps\}$, we iterate exactly the same process on $t$, but the "ordinal" count on the number of $a$'s will have decreased (or even strictly decreased if $c=a$) from $t_0$ to $t$, which guarantees that $\phi(t_0) \re a \phi(t)$ is indeed an edge in $U$.
	\item If $c = b$, then we iterate the same process of $t$ but inverting the roles of $a$ and $b$; thus $\phi(t)$ is of the form $(b,\lambda_b)$ for some $\lambda_b < \kappa$, and the edge $\phi(t_0) \re b \phi(t)$ belongs to $U$, as required.
\end{itemize}
This concludes the top-down construction of $\phi$ and the universality proof.

It is not difficult to find lower bounds to see that the "$\eps$-free" memory of $W_1$ (and therefore all the other notions of memory) is $\geq 2$. For example, a "game" with just one vertex controlled by "Eve" where she can choose to produce $a$ or $b$ provides this lower bound.
Therefore, the exact memory of $W_1$ is $2$, for all the different notions of memory.


\paragraph*{Objective $W_2 = \{w_0w_1w_2\dots \in C^\oo \mid \forall i,\, w_i \neq w_{i+1}\}$.}

Note that $W_2$ is "prefix-increasing", and therefore we use the definition of "universality@@prefixIncreasing" from Section~\ref{sec:prefix_increasing_objectives}.
Consider the graph $U$ with vertices $V(U) = \{v_0\} \cup \{v_c \mid c \in C\}$ and edges
\[
	E(U) = \{v \re c v_c \mid v \in V(U),c \in C\};
\]
see right hand side of Figure~\ref{fig:univGraph-muller-a-b}.
With the order with maximal element $v_0$ and otherwise no comparable elements, the graph $U$ is well-monotone of width $|C|$.
We prove that it is $W_2$-universal, which implies, by Theorem~\ref{thm:implication_non_eps}, that the "$\eps$-memory" of $W_2$ is $\leq |C|$.
To do so, it suffices to remark that for any "tree" $T$ "satisfying@@graph" $W$, mapping the "root" to $v_0$ and every other node $t$ is to $v_c$, where $c$ is the colour of the unique edge towards $t$, defines a "morphism". 

Proposition~\ref{prop:non_eps-implies-eps} implies the existence of an "$\eps$-separated" "well-monotone" "$(\kappa, W_2^\eps)$-universal" graph of "breadth" $2$.
In fact, an "$\eps$-separated" "graph" given by the proof of Proposition~\ref{prop:non_eps-implies-eps} can be obtained just by adding $\eps$-edges $v_0 \re \eps v_c$ for $c \in C$ and $\eps$-loops over all vertices.
Since the graph obtained in this way is "chromatic@@graph", we get that the "$\eps$-chromatic memory" of $W_2$ is also $\leq |C|$.
Proposition~\ref{prop:eps-memory-greater-than-memory} below proves that the "$\eps$-memory" is exactly $|C|$, and that the "$\eps$-free" memory is in fact just $2$.

\paragraph*{Objective $W_{3} = (C^*a)^mC^{\geq n} a C^\omega$ with $C=\{a,b\}$ and $m,n\geq 1$.}
We provide a "universal graph" of "width" $n+1$ which proves that the "$\eps$-memory" is $\leq n+1$.
A matching lower bound on the "$\eps$-free memory" follows from the "game" depicted on Figure~\ref{fig:w3_lowerbound}.
We remark that from the minimal automaton for the regular language $L= (C^*a)^mC^{\geq n} a$ we only obtain a straightforward upper bound of $n+m+1$ on the "memory".

\begin{figure}[h]
	\begin{center}
		\includegraphics[width=0.45\linewidth]{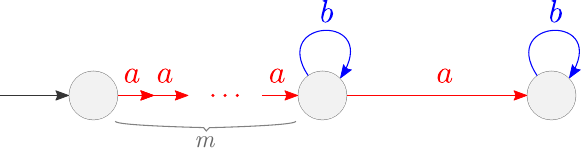}
	\end{center}
	\caption{A "game" where "Eve" requires memory $n+1$ to ensure "objective" $W_3$.}\label{fig:w3_lowerbound}
\end{figure}

Fix a "cardinal" $\kappa$, and consider the "graph" $U$ with vertex set
\[
	\Verts{U} = \{q_0,\dots,q_{m-1},p_{n}\} \times \kappa \cup \{p_0,\dots,p_{n-1}\} \cup \{\top\},
\]
and edges as in Figure~\ref{fig:constr_w3}.

\begin{figure}[h]
	\begin{center}
		\includegraphics[width=0.9\linewidth]{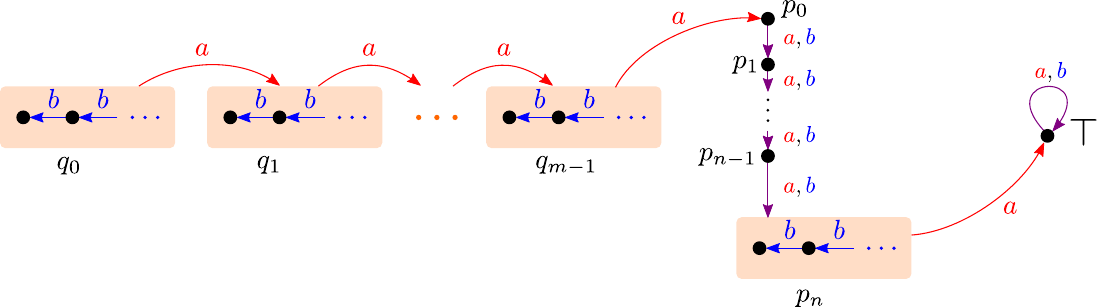}
	\end{center}
	\caption{The "well-monotone" "graph" $U$ which has "width" $n+1$ and is "universal" for $W_3$. Edges between boxes represent all possible edges between vertices from these boxes.
	For readability, second coordinates of vertices are not displayed.
	The order is as follows: $\top$ is maximal, the $p_i$'s are pairwise incomparable and greater than the $q_j$'s, vertices are ordered within boxes, and the $q_i$'s are ordered.
	Many edges that follow from "monotonicity" (for instances, $a$'s pointing down, and edges from $p_i$'s to $q_j$'s) are omitted for clarity.
	}\label{fig:constr_w3}
\end{figure}

Let us sketch a proof of "universality".
Observe that the vertices "satisfying" $W$ in $U$ are exactly those of the form $(q_0,\lambda)$.
Consider a tree $T$ whose root $t_0$ satisfies $W_3$; we aim to build a "morphism" $T \to U$ mapping $t_0$ to one of the $(q_0,\lambda)$'s.
Given a vertex $t \in V(T)$, let $w_t \in C^*$ denote the unique word labelling a path from $t_0$ to $t$.

A vertex $t \in V(T)$ such that $w_t$ has $j<m$ occurrences of $a$ is mapped to a vertex of the form $(q_j,\lambda) \in V(U)$, where $\lambda$ is an "ordinal" capturing the distance until the next $a$ in $T$.
Then a vertex $t \in V(T)$ such that $w_t$ is of the form $w_t=w'au$, where $w'$ has exactly $m-1$ occurrences of $a$ and $|u|=i < n$ is mapped to $p_i$.
A vertex $t \in V(T)$ as above with $|u| \geq n$ is mapped to a vertex of the form $(p_{n},\lambda)$, where $\lambda$ captures the distance to the next $a$ (which must occur since $t_0$ satisfies $W_3$).
Finally, remaining vertices $t \in V(T)$ "satisfy" $w_t \in (C^* a)^m C^{\geq n} a C^*$, and we map them to $\top$.
It is easy to verify that the map constructed above indeed defines a "morphism".

One may make the "graph" $U$ "$\eps$-separated" without blowing up its "width" (for instance, using Proposition~\ref{prop:non_eps-implies-eps}); however the obtained graph is not "chromatic@@graph".
Nevertheless, with a slightly more involved construction depicted in Figure~\ref{fig:constr_chromatic_w3}, we obtain a "chromatic@@graph" "$\eps$-separated" "graph" of "breadth" $n+1$, yielding an upper bound of $n+1$ also on the "$\eps$-chromatic memory".
We omit a proof of "universality" as it follows roughly the same lines as the one above. 

\begin{figure}[h]
	\begin{center}
		\includegraphics[width=0.9\linewidth]{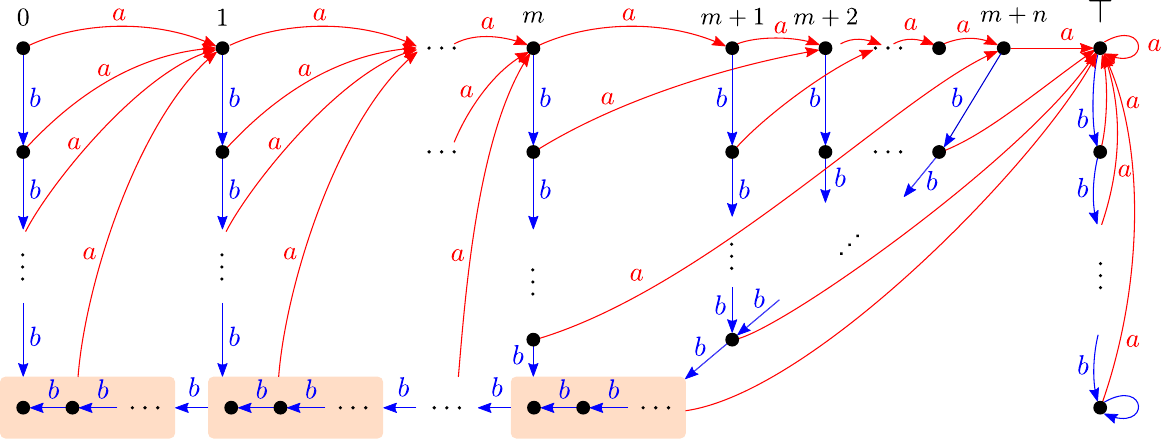}
	\end{center}
	\caption{An "$\eps$-separated" "chromatic@@graph" "well-monotone" graph of "breadth" $n+1$ which is "universal" for $W_3$.}
	\label{fig:constr_chromatic_w3}
\end{figure}

\paragraph*{Objective $W_{4} = \infOften(bb) \; \cup ( \finOften(b) \cap \finOften(aa))$ over $C=\{a,b,c\}$.}
Note that $W_4$ is "prefix-independent".
Figure~\ref{fig:constr_w4} depicts a deterministic "parity automaton"\footnotemark $\A$ of size 3 recognising $W_4$ (it is shown in Appendix~\ref{sec:appendix_bounds_memory} that there is no smaller automaton for~$W_4$); so this yields an upper bound of~$3$ on the "memory" of $W_4$. 
We claim that "memory"~$2$ is actually sufficient.
The "game" depicted on the right witnesses that "Eve" requires "$\eps$-free memory" $\geq 2$: "positional strategies" are "losing@@strategy", but she "wins" by answering $b$ to $b$ and $a$ to $c$.

\footnotetext{\AP We recall that a ""parity automaton"" is an automaton over infinite words with transitions labelled by natural numbers called ""priorities"". A run in the automaton is accepting if the maximum of the priorities produced infinitely often is even.}

\begin{figure}[h]
	\begin{center}
		\includegraphics[width=\linewidth]{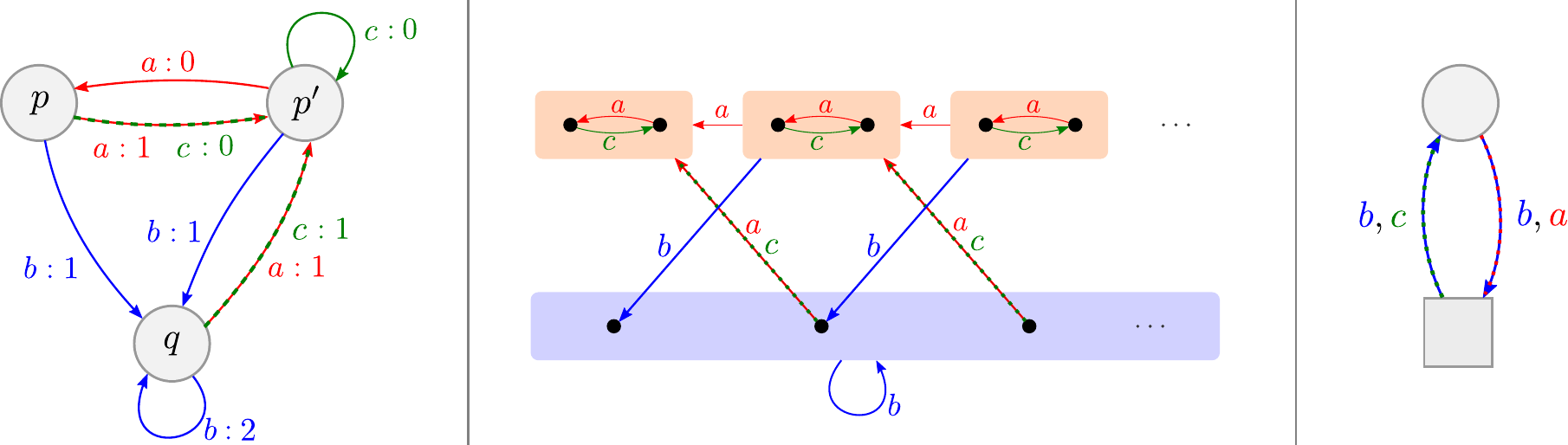}
	\end{center}
	\caption{On the left, a deterministic "parity automaton" $\A$ with three states recognising $W_4$ (we use max-parity semantics).
	In the middle, an "$\eps$-separated" "chromatic@@graph" "universal" "graph" $U$ of "breadth" $2$ for $W_4$; as always, edges following from "monotonicity" are omitted. On the right, a "game" witnessing that "Eve" requires "$\eps$-free memory" $\geq 2$.}\label{fig:constr_w4}
\end{figure}

Consider the "graph" $U$ depicted in the middle of Figure~\ref{fig:constr_w4}; formally it is defined over $V(U) = \{q,p,p'\} \times \kappa$ by the order given by two chains, as in the figure, all $\eps$-edges following the order, and all edges of the form
\begin{enumerate}[(1.)]
\item $(q,\lambda) \re b (q,\lambda')$;
\item $(r,\lambda) \re b (q,\lambda')$ with $r \in \{p,p'\}$ and $\lambda > \lambda'$;
\item $(q,\lambda) \re d r$ with $d \in \{a,c\}, r \in \{p,p'\}$ and $\lambda > \lambda'$;
\item $(r,\lambda) \re d (r',\lambda')$ with $d \in \{a,c\}, r,r' \in \{p,p'\}$ and $\lambda > \lambda'$;
\item $(r,\lambda) \re c (r',\lambda)$ with $r,r' \in \{p,p'\}$; and
\item $(p',\lambda) \re a (p,\lambda)$.
\end{enumerate}
Note that $U$ is "well-monotone", "$\eps$-separated" and "chromatic@@graph" and that it has "breadth" $2$.

To prove that $U$ is "universal" for $W_4$, which implies that $W_4$ has "$\eps$-chromatic memory" $\leq 2$, we proceed as follows.
Take a "tree" $T$ of "cardinality" $< \kappa$ "satisfying@@graph" $W_4$, and label it top-down by $\rho : V(T) \to \{p,p',q\}$ by following a run in the deterministic automaton $\A$, say, starting from state $q$ (this choice does not matter).
Since $T$ "satisfies@@graph" $W$, every branch corresponds to an accepted run, thus on each branch the maximal "priority" appearing infinitely often is even.
To obtain a morphism into $U$, it suffices to append to $\rho(t)$ an ordinal $\lambda \in \kappa$ capturing the number of $1$'s appearing before the next $2$ on paths starting from $t$.

\paragraph*{Objective $W_{5} = \{w\in C^\oo \mid |\minf(w)| = 2\}$.}

In~\cite{Casares22}, Casares uses this "Muller objective" to provide a separation between "chromatic@@memory" and "non-chromatic memory".
By the characterisation of the "$\eps$-memory" of "Muller objectives"~\cite{DJW97} (see also Section~\ref{sec:muller_games} below), we know that the "$\eps$-memory" of $W_5$ is exactly $2$.
However, the size of the alphabet is a lower bound for the "$\eps$-free chromatic memory" (and therefore, also for the "$\eps$-chromatic memory"~\cite{Casares22}).
Figure~\ref{fig:constr_w5} depicts two universal graphs for $W_5$ which give the two upper bounds ("$\eps$-memory" 2 and "$\eps$-chromatic memory" $|C|$).

\begin{figure}[h]
	\begin{center}
		\includegraphics[width=\linewidth]{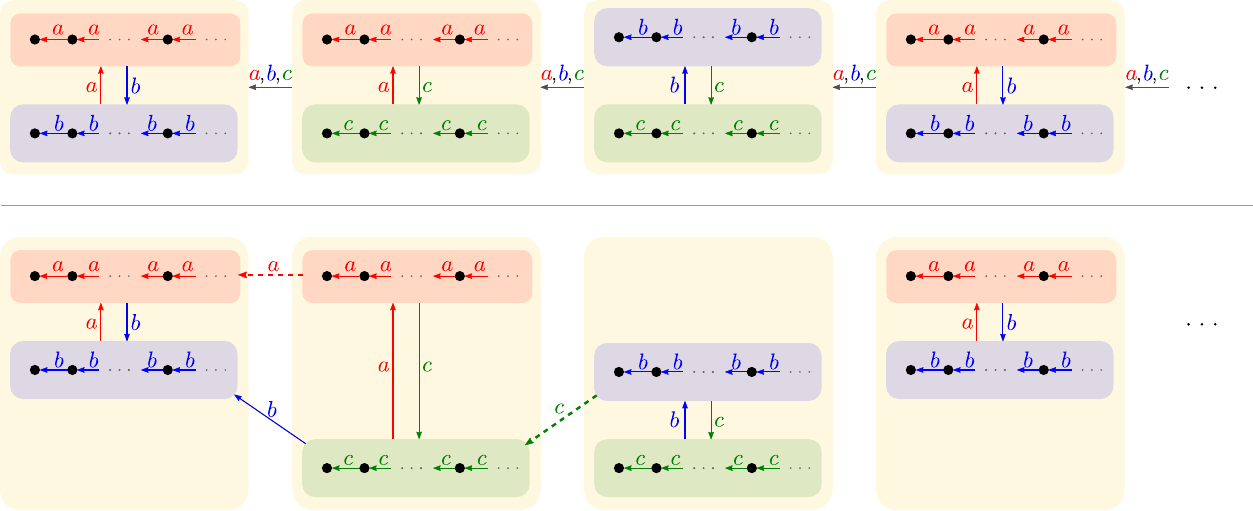}
	\end{center}
	\caption{Two "well-monotone" "universal" graphs for $W_5$ in the case where $C=\{a,b,c\}$.
	The one at the top has "width" $2$ and gives the "$\eps$-memory" upper bound (it coincides with the graph built in Section~\ref{sec:muller_games}) and the one at the bottom is "$\eps$-separated" and "chromatic" and gives the bound on the "$\eps$-chromatic memory". Many edges which follow from "monotonicity" (such as the dashed ones) are omitted.}\label{fig:constr_w5}
\end{figure}

\subsection{Topologically closed objectives}
\label{sec:generalised_safety}

Let $C$ be a set of colours and $L\subseteq C^*$ be a language of finite words. \AP The ""safety objective associated to $L$"" is defined by
\[
	\intro*\Safe{L}= \{ w\in C^\oo \mid w \text{ does not contain any prefix in } L\}.
\]
\AP An "objective" $W$ is ""topologically closed"" if $W=\Safe{L}$ for some $L\subseteq C^*$. (This notation is justified since "objectives" of the form $\Safe{L}$ are exactly the closed subsets of $C^\oo$ for the Cantor topology.)
Colcombet, Fijalkow and Horn~\cite{CFH14} characterised the memory\footnote{Although the authors do not explicitly mention $\eps$-transitions, the lower bound of \cite[Lemma 5]{CFH14} makes implicit use of games with $\eps$-transitions.}
for "topologically closed" "objectives" using the notion of "left quotient". We show next how to recover a variant of their result by applying Theorem~\ref{thm:characterisation_eps}. It has been recently proven~\cite{BFRV23Regular} that given a finite automaton recognising a regular language $L$ and a number $k\in \NN$, it is $\mathtt{NP}$-complete to decide whether the "$\eps$-chromatic memory" of $\Safe{L}$ is $\leq k$ .

Let $W\subseteq C^\oo$ be an "objective" and let $u\in C^*$. \AP We define the ""left quotient"" of $W$ with respect to $u$ by
\[
	\intro*\lquot{u}{W} = \{ w\in C^\oo \mid uw\in W\}.
\] 
\AP We denote $\intro*\Res(W)$ the set of "left quotients" of $W$, and we consider it ordered by inclusion.
We will also write $[u]=\lquot{u}{W}$ for $u\in C^*$, whenever $W$ is clear from the context.
We remark that $[u]\subseteq [v]$ implies $[uc]\subseteq [vc]$ for every $c\in C$.

The following result is a version of \cite[Theorem 6]{CFH14}, but the two statements differ in some slight assumptions\footnote{In \cite{CFH14}, authors only consider finite branching graphs and "objectives" over finite alphabets. Nonetheless, they do not need to suppose that $\ResW$ is well-founded. In this respect, the two results are incomparable.}.

\begin{thm}
	Let $W\subseteq C^\oo$ be a "topologically closed" "objective". Suppose that $(\ResW,\subseteq)$ is "well-founded" of "width" $< \mu$.
	Then $W$ has "$\eps$-free" "memory@@epsFree" $<\mu$.	
	Moreover, if $\mu$ is finite, objective $W$ has "$\eps$-memory exactly" $\mu$.
\end{thm}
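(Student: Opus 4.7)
The plan is to prove the upper bound on "$\eps$-free memory" by exhibiting, for every "cardinal" $\kappa$, a "well-monotone" "$(\kappa,W)$-universal" graph $U$ whose "width" matches that of $(\Res(W),\subseteq)$, and then applying Theorem~\ref{thm:implication_non_eps}; for finite $\mu$, Proposition~\ref{prop:non_eps-implies-eps} together with Theorem~\ref{thm:characterisation_eps} upgrades this into an "$\eps$-memory" upper bound, and a matching lower bound is extracted from a maximum "antichain" of $\Res(W)$.

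Concretely, I would take $V(U) = (\Res(W)\setminus\{\emptyset\}) \cup \{\top\}$, ordered by inclusion with $\top$ adjoined as a fresh maximum element playing the standard $\top$-trick role: $\top$ carries the self-loops $\top \re c \top$ together with the edges $\top \re c v$ forced by "monotonicity", but no non-$\top$ vertex has an outgoing edge into $\top$, so that $\top$ is unreachable from $\Res(W)\setminus\{\emptyset\}$. Edges among non-$\top$ vertices are $[u] \re c [v]$ exactly when $[v] \subseteq c^{-1}[u] = [uc]$. "Well-foundedness" and the bound on "width" are inherited from the assumption on $(\Res(W),\subseteq)$, and monotonicity of the edge relation is automatic because $X \mapsto c^{-1}X$ preserves inclusion: from $[a] \supseteq [b] \re c [b'] \supseteq [a']$ one reads $[a'] \subseteq [b'] \subseteq c^{-1}[b] \subseteq c^{-1}[a]$.

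For "$(\kappa,W)$-universality", given a "$C$-tree" $T$ with "root" $t_0$, let $w_t$ denote the label of the unique path from $t_0$ to $t$, and define $\phi(t) = [w_t]$ whenever $[w_t] \neq \emptyset$, and $\phi(t) = \top$ otherwise; this is a "morphism" by the very definition of $E(U)$. If $T$ "satisfies" $W$, then every $w_t$ has a safe continuation in $T$, so every $[w_t]$ is non-empty, $\phi(t_0) = [\epsilon] = W$, and along any infinite path $[\epsilon] \re{c_0} X_1 \re{c_1} \dots$ in $U$ the inclusion $X_j \subseteq [c_0 \cdots c_{j-1}]$ combined with the non-emptiness of each $X_j$ forces $c_0 \cdots c_{j-1}$ to avoid $L$, hence $c_0 c_1 \dots \in W$ and $\val_U(\phi(t_0)) = \bot = \val_T(t_0)$. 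When $\val_T(t_0) = \top$, the value condition is vacuous. Theorem~\ref{thm:implication_non_eps} then gives "$\eps$-free memory" $<\mu$; and for finite $\mu$, since $\{\bot,\top\}$ is "well-founded", Proposition~\ref{prop:non_eps-implies-eps} converts $U$ into an "$\eps$-separated" "well-monotone" "universal" "graph" of the same "breadth", which via Theorem~\ref{thm:characterisation_eps} yields the matching "$\eps$-memory" upper bound.

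For the matching lower bound in the finite regime, I would start from a maximum "antichain" $\{[u_1], \dots, [u_k]\} \subseteq \Res(W)$ and, using incomparability witnesses $v_{ij} \in [u_i] \setminus [u_j]$, assemble a "game" in which "Adam" can first pick any of the $u_i$'s as an initial prefix and then repeatedly probe "Eve" along the distinguishing continuations $v_{ij}$, so that any "strategy" identifying two of the $[u_i]$'s into a single "memory state" is defeated by one of Adam's challenges. I expect this "game" construction to be the technically most delicate part of the proof, since the upper bounds follow essentially by unwinding the definitions once $U$ is in place, whereas turning each abstract incomparability witness $v_{ij}$ into a concrete trap forcing the $k$-th memory state requires carefully orchestrating Adam's choices.
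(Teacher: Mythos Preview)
Your upper-bound construction coincides with the paper's: both take $V(U)=(\Res(W)\setminus\{\emptyset\})\cup\{\top\}$ ordered by inclusion with $\top$ on top, put $[u]\re c[v]$ whenever $[v]\subseteq[uc]$, give $\top$ all outgoing edges, verify well-monotonicity, and embed a tree $T$ satisfying $W$ via $t\mapsto[w_t]$. The paper breaks this into a lemma (each vertex $[u]$ satisfies $u^{-1}W$, so in particular $[\eps]$ satisfies $W$) and Proposition~\ref{prop:universal-graph-safety}, and then invokes Theorem~\ref{thm:implication_non_eps} and, for finite $\mu$, Proposition~\ref{prop:non_eps-implies-eps} exactly as you do.

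There is one small slip in your morphism. Your unified definition, $\phi(t)=[w_t]$ if nonempty and $\phi(t)=\top$ otherwise, is not a morphism on trees whose root fails $W$: if $[w_t]\neq\emptyset$ but $[w_tc]=\emptyset$ along an edge $t\re c t'$, you would need an edge $[w_t]\re c\top$ in $U$, which you explicitly excluded. The paper avoids this by splitting into two cases---send the whole tree to $\top$ when the root fails $W$, and use $t\mapsto[w_t]$ when it satisfies $W$ (in which case, as you observe, every $[w_t]$ is nonempty so the issue never arises).

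On the ``moreover'' clause: the paper does not supply a lower-bound argument in this section; it obtains the $\eps$-memory upper bound as you propose and tacitly relies on~\cite{CFH14} for the matching bound (the accompanying footnote remarks that their lower bound implicitly uses $\eps$-transitions). Your game sketch therefore goes beyond what the paper writes out. Its shape is reasonable, but be aware that for an $\eps$-memory lower bound the game must force all of Adam's initial branches $u_i$ to converge at a \emph{common} game vertex (this is precisely where the $\eps$-edges enter); if the $u_i$'s terminate at distinct vertices, a product strategy can use the vertex component to tell them apart regardless of $|M|$, and you only obtain an $\eps$-free lower bound.
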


\begin{rem}
	As shown in Section~\ref{sec:unbounded_antichains-not-epsilon-memory}, if $\mu$ is infinite, we cannot deduce anything about the "$\eps$-memory" of $W$ by showing "$(\kappa,W)$-universal" graphs of "width" 
	$< \mu$ for $W$. 
\end{rem}

Let $W\subseteq C^\oo$ be a "topologically closed" "objective" such that $(\ResW,\subseteq)$ is "well-founded" of width $<\mu$.
We prove the theorem by giving a construction of a "well-monotone" "$(\kappa,W)$-universal" graph of "width" $<\mu$.
Let $(U,\leq)$ be the partially ordered "graph" given by
\begin{itemize}
	\item $V(U)= \ResW\setminus \{\emptyset\} \cup \{ \top\}$ (where $\top$ is a fresh element).
	\item For $[u],[v]\in \ResW$ we define $[u]\leq [v]$ if $[u]\subseteq [v]$. We let $x\leq \top$ for all $x\in V(U)$.
	\item $[u]\re{c}[v] \in E(U)$ for all $[v]\leq [uc]$. 
	Also, $\top \re{c} x$ for all $x\in V(U)$ and all $c\in  C$.
\end{itemize}

\begin{lem}\label{lem: [u]_satisfies_u-1W}
	 A vertex $[u]\in V(U)\setminus\{\top\}$ "satisfies" the "objective" $\lquot{u}{W}$. In particular, vertex~$[\eps]$ "satisfies" $W$.
\end{lem}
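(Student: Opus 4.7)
\medskip

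\noindent\textbf{Proof plan.}
The proof is a direct chase through the definitions, relying on three facts: (i) the ordering on vertices of $U$ is semantic (inclusion of left quotients), (ii) the edge relation reflects how letters act on left quotients (combined with the remark that $[u]\subseteq [v]$ implies $[uc]\subseteq [vc]$), and (iii) $W$ being topologically closed means membership can be tested on finite prefixes.

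The first step is to observe that $\top$ has no incoming edges in $U$ (the definition only specifies outgoing edges from $\top$), so any infinite "path" from a vertex $[u]\in V(U)\setminus\{\top\}$ remains entirely inside $\ResW \setminus\{\emptyset\}$. Fix such a "path"
\[
[u]=[u_0]\re{c_0}[u_1]\re{c_1}[u_2]\re{c_2}\cdots
\]
in $U$. I aim to show that $uc_0c_1c_2\cdots \in W$, which by definition of left quotient is exactly what it means for the sequence of colours to belong to $\lquot{u}{W}$.

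The heart of the argument is the inductive claim
\[
[u_n]\subseteq [uc_0c_1\cdots c_{n-1}] \quad \text{for all } n\geq 0.
\]
The base case $n=0$ is trivial. For the inductive step, the edge $[u_n]\re{c_n}[u_{n+1}]$ gives $[u_{n+1}]\subseteq [u_nc_n]$ by definition of $E(U)$; combining with the induction hypothesis $[u_n]\subseteq [uc_0\cdots c_{n-1}]$ via the remark preceding the theorem yields $[u_nc_n]\subseteq [uc_0\cdots c_{n-1}c_n]$, and hence $[u_{n+1}]\subseteq [uc_0\cdots c_n]$.

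Now since $[u_n]\neq\emptyset$ (as $\emptyset\notin V(U)$), the inclusion forces $[uc_0\cdots c_{n-1}]\neq\emptyset$ for every $n$, i.e.\ for each $n$ there exists some $w\in C^\omega$ with $uc_0\cdots c_{n-1}w\in W$. Writing $W=\Safe{L}$, this means $uc_0\cdots c_{n-1}w$ has no prefix in $L$, and in particular the finite word $uc_0\cdots c_{n-1}$ has no prefix in $L$. As this holds for every $n$, the infinite word $uc_0c_1c_2\cdots$ has no prefix in $L$, so $uc_0c_1\cdots \in \Safe{L}=W$, as required. Specialising to $u=\eps$ gives $[\eps]\in \Res(W)$ "satisfying" $\lquot{\eps}{W}=W$. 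No step should present a real obstacle; the main thing to be careful about is checking that $\top$ cannot be reached from an ordinary vertex, so that the edge condition applies uniformly along the "path".
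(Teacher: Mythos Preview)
Your proof is correct and follows essentially the same approach as the paper's: both establish the key inductive invariant $[u_n]\subseteq [uc_0\cdots c_{n-1}]$ along the path and conclude via nonemptiness of the left quotients that no finite prefix lies in $L$. Your treatment is slightly more explicit about why the path never reaches $\top$ (the paper leaves this implicit by writing all vertices as $[v_i]$), but otherwise the arguments are the same.
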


\begin{proof}
	Let $L\subseteq C^*$ be a language such that $W=\Safe{L}$. Let $w\in C^\oo$ be a word labelling an infinite "path" from $[v_0]=[u]$ in $U$:
	\[ [v_0]\re{w_0}[v_1] \re{w_1}[v_2] \re{w_2} \dots .\] We need to show that for any finite prefix $w'$ of $w$, $uw'\notin L$. We remark that this is equivalent to $[uw']\neq \emptyset$. We prove by induction that $[v_i]\leq [uw_0\cdots w_{i-1}]$. By definition of $E(U)$, $[v_i]\leq [v_{i-1}w_{i-1}]$. By induction hypothesis $[v_{i-1}]\leq [uw_0\cdots w_{i-2}]$, so $[v_{i-1}w_{i-1}]\leq [uw_0\cdots w_{i-2}w_{i-1}]$ and by transitivity, $[v_i]\leq [uw_0\cdots w_{i-1}]$. Therefore, for any finite prefix $w'=w_0\dots w_{k-1}$ it holds that $\emptyset < [v_k] \leq [uw_0\dots w_{k-1}]$, which concludes the proof.
\end{proof}

\begin{prop}\label{prop:universal-graph-safety}
	For all "cardinals" $\kappa$, $(U,\leq)$ is a "well-monotone" "$(\kappa,W)$-universal" graph of "width" $<|\mu|$.
\end{prop}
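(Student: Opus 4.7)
The plan is to verify in turn the four properties: $(U,\leq)$ has no sinks, is monotone, is well-founded with width $<\mu$, and is $(\kappa,W)$-universal. The first three are essentially routine. For absence of sinks, pick any $[u] \in V(U) \setminus \{\top\}$ and any $c_0 c_1 \dots \in [u]$; then $c_1 c_2 \dots \in [uc_0]$ shows $[uc_0] \in V(U)$ and gives an outgoing $c_0$-edge from $[u]$, while $\top$ is handled by construction. Monotonicity on pairs from $\ResW \setminus \{\emptyset\}$ reduces to the observation that $[v] \subseteq [u]$ implies $[vc] \subseteq [uc]$: given $[u] \geq [v] \re c [v'] \geq [u']$, the chain $[u'] \subseteq [v'] \subseteq [vc] \subseteq [uc]$ produces the required edge; cases involving $\top$ reduce, by maximality, to $u = \top$ and are immediate from the definition. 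Well-foundedness transfers from $(\ResW, \subseteq)$ since $\top$ is merely a new maximum, and any antichain in $U$ is either $\{\top\}$ or an antichain of $\ResW \setminus \{\emptyset\}$, hence of size $<\mu$.

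For universality, given a $C$-tree $T$ of cardinality $<\kappa$ with root $t_0$, I would split into cases based on whether $t_0$ satisfies $W$ in $T$. If it does not, then $\val_T(t_0) = \top$ and mapping every vertex of $T$ to $\top$ yields a valid morphism. If $t_0$ does satisfy $W$, define $\phi(t) = [w_t]$, where $w_t \in C^*$ is the label of the unique path from $t_0$ to $t$. The key verification is that each $[w_t]$ is non-empty: since $T$ has no sinks, some infinite path from $t$ in $T$ carries a label $w$, and then $w_t w$ labels an infinite path from $t_0$, which lies in $W$ because $t_0$ satisfies $W$, so $w \in [w_t]$. For edges: if $t \re c t' \in E(T)$ then $w_{t'} = w_t c$, hence $\phi(t') = [w_t c]$ and $\phi(t) \re c \phi(t')$ is an edge by the definition of $E(U)$. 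Value preservation at the root follows since $\phi(t_0) = [\eps]$ satisfies $W$ by Lemma~\ref{lem: [u]_satisfies_u-1W}.

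The only mildly delicate aspect---and not really an obstacle---is that topologically closed objectives are \emph{not} in general prefix-increasing, so the simplification of Section~\ref{sec:prefix_increasing_objectives} does not apply verbatim: the root of a tree satisfying $W$ does not force all internal vertices to satisfy $W$, and trees whose roots fail $W$ must still be embedded somewhere. The $\top$ vertex of the construction absorbs both difficulties, and the canonical map $t \mapsto [w_t]$ then handles the case where the root satisfies $W$.
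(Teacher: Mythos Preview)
Your proof is correct and follows essentially the same route as the paper's: the monotonicity argument via $[v]\subseteq[u]\Rightarrow[vc]\subseteq[uc]$, the case split on whether $t_0$ satisfies $W$, and the canonical map $t\mapsto[w_t]$ with value preservation via Lemma~\ref{lem: [u]_satisfies_u-1W} are exactly what the paper does. You are in fact slightly more careful than the paper in two places: you explicitly verify that $U$ has no sinks, and you justify why $[w_t]\neq\emptyset$ for each $t$ (the paper simply asserts $\phi(T)\subseteq\ResW\setminus\{\emptyset\}$ without comment); both additions are welcome.
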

\begin{proof}
	By the hypothesis of well-foundedness and on the size of the antichains of $\ResW$, graph $(U,\leq)$ is "well-founded" and has width $<|\mu|$.
	
	For the "monotonicity", suppose that $x\geq y \re{c} y' \geq x'$. If $x=\top$, then $x\re{c}x'$ by definition. If not, $x=[u]$, $y=[v]$, $y'=[v']$ and $x'=[u']$ for some words $u,v,v',u'$. By definition of $E(U)$, $[v']\leq[vc]$. Since $[v]\leq [u]$ implies $[vc]\leq [uc]$, we deduce by transitivity that $[u']\leq [uc]$ and therefore $u \re{c} u'$.
	
	Finally, we prove the "$(\kappa, W)$-universality" of $(U,\leq)$. Let $T$ be a "$C$-tree"  with "root" $t_0$. If $t_0$ does not "satisfy" $W$, the map $\phi(t)=\top$ for all $t\in V(T)$ is a "morphism" that "preserves the value" at $t_0$. If $t_0$ satisfies $W$, we define a "morphism" $\phi:T \to U$ satisfying that $\phi(T)\subseteq \ResW\setminus \{\emptyset\}$ in a top-down fashion: $\phi(t)=[u]$, for $u\in C^*$ the unique word labelling a "path" from $t_0$ to $t$ in $T$. In particular, $\phi(t_0)=[\eps]$, so by Lemma~\ref{lem: [u]_satisfies_u-1W} $\phi$ "preserves the value" at $t_0$. Finally, we verify that $\phi$ is a "morphism": let $t\re{c} t'$ be an edge in $T$. If $u$ is the word labelling the "path" from $t_0$ to $t$, the word labelling the "path" from $t_0$ to $t'$ is $uc$, so $\phi(t)=[u]$ and $\phi(t')=[uc]$. By definition, $[u]\re{c}[uc]\in E(U)$, so $\phi$ is a "morphism". 
\end{proof}



\subsection{Muller objectives}
\label{sec:muller_games}

Recall that for an infinite word $w\in C^\omega$ we let $\minf(w)=\{ c\in C \mid w_i=c \text{ for infinitely many } i \}$.
A ""Muller objective"" over a finite set of colours $C$ is given by a family $\F \subseteq \powne(C)$ of non-empty subsets of $C$ and defined by
\[
\intro*\Muller{\F} = \{ w \in C^\omega \mid \minf(w) \in \F\}. 
\]
By a slight abuse, we will say that $\F$ is a "Muller objective" over $C$.

The exact "$\eps$-memory" for "Muller objectives" was characterised by Dziembowski, Jurdzi\'nski and Walukiewicz~\cite{DJW97} using the notion of Zielonka trees, introduced by Zielonka to study the positionality of "Muller objectives"~\cite{Zielonka98}. It has been recently shown that the "$\eps$-memory" of a "Muller objectives" also coincides with the minimal size of a good-for-games\footnote{A good-for-games automaton is a non-deterministic automaton for which the non-determinism can be resolved based on the input processed so far.}
 Rabin automaton recognising it~\cite{CCL22SizeGFG}.
Concerning their "$\eps$-free" and "chromatic@@memory" memory, Casares showed~\cite{Casares22} that (1) there are "Muller objectives" whose "$\eps$-free memory" is strictly smaller than its "$\eps$-memory", (2) for all Muller objectives the exact "$\eps$-chromatic memory" and the exact "$\eps$-free chromatic memory" coincide and (3) deciding if the "chromatic memory" of a "Muller objective" is $\leq k$ is $\mathtt{NP}$-complete.

In this section we will focus on the study of the "$\eps$-memory", and we will show how to recover the upper bound presented in~\cite{DJW97} by means of "well-monotone" "universal" graphs.
We now present the necessary definitions to recall their characterisation.

\AP We say that a "Muller objective" $\F \subseteq \powne(C)$ over $C$ is ""positive"" if $C \in \F$, and that it is ""negative"" otherwise.
Given a subset $C' \subseteq C$ of colours, we define the ""restriction@@Muller"" $\intro*\restr \F {C'}$ of $\F$ to $C'$ to be the "Muller objective" over $C'$ given by
\[
\restr \F {C'} = \{F \in \F \mid F \subseteq C'\}.
\]
\AP The ""children"" of a "positive" (resp. "negative") "Muller objective" $\F$ are the "restrictions@@Muller" $\restr{\F} {C'}$ of $\F$ to maximal subsets $C' \subseteq C$ of colours such that $C' \notin \F$ (resp. $C' \in \F$).
"Muller objectives" with no children are called ""basic"", they are exactly those of the form $\F=\emptyset$ or $\F = \pow(C)$ over~$C$.

Note that "children" of a "non-basic" "positive" "Muller objective" are "negative" and vice-versa, and that they are defined over strictly smaller sets of colours.
Observe finally that a "Muller objective" defined over a singleton set of colours is necessarily "basic".
\AP The "$\eps$-memory" of a "Muller objective" can be computed bottom-up from its ""Zielonka tree"": a structure displaying the parenthood relation for the "children" of the condition and all its descendants (defined recursively).
Proposition~\ref{prop: optimalMemoryDJW} details this computation.
For a formal exposition of the Zielonka tree and its uses, see~\cite{DJW97, Horn07PhDThesis, CCL22SizeGFG}.

\begin{propC}[\cite{DJW97}]\label{prop: optimalMemoryDJW}
	\AP Let $\F$ be a "Muller objective".
	The "exact $\eps$-memory" of $\Muller{\F}$ is given by
	\[
	\intro*\memory(\F) = \begin{cases}
	1 & \tif \F \text{ is "basic"},\\
	\sum\limits_{\F' \text{ child of } \F} \memory(\F') &\tif \F \text{ is "positive" "non-basic"}, \\
	\max\limits_{\F' \text{ child of } \F} \memory(\F') &\tif \F \text{ is "negative" "non-basic"}.
	\end{cases}
	\]
\end{propC}

\begin{rem}
	As remarked by Casares~\cite{Casares22}, this characterisation no longer holds for "$\eps$-free" memories or "chromatic@@memory" ones.
\end{rem}

As an example, let $C= \{a,b,c\}$ and consider the "Muller objective" given by
\[
	\F = \{\{a,b\}, \{a,c\}, \{b\}\}.
\]
In Figure~\ref{fig:ZielonkaTree-example} we show the set of colours of the descendants of $\F$ arranged in a "Zielonka tree". 
For this objective, $\memory(\F)=2$.

\begin{figure}[ht]
	\begin{center}
		\includegraphics[width=0.28 \linewidth]{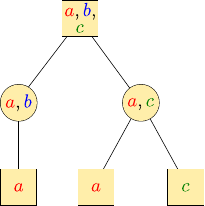}
	\caption{Zielonka tree for 
		$\F=\{ \{a,b\}, \{a,c\}, \{b\}\}$.
		A subtree rooted at a circle (resp. square) node labelled $C'$ corresponds to a "positive" (resp. "negative") "Muller objective" over $C'$.}
	\label{fig:ZielonkaTree-example}
	\end{center}
\end{figure}

The remainder of this section is devoted to obtaining a construction of a "well-monotone" "$(\kappa, \Muller{\F})$-universal@@prefixIndependent" "graph" of width $\leq \memory(\F)$ for a "Muller objective" $\F$ over a finite set of colours $C$ and a "cardinal" $\kappa$.
As always for "prefix-independent" objectives (see Section~\ref{sec:prefix_independent_objectives}), recall that being $\kappa$-universal means satisfying the objective, and embedding all $C$-"pretrees" of cardinality $< \kappa$ whose infinite branches satisfy the objective.

We start with positive and negative "basic" objectives, which are dealt with separately.
\begin{description}
	\item[If $\F$ is "positive" "basic", $\F=\pow(C)$] In this case, the objective is trivially winning:\linebreak$\Muller{\F}=C^\omega$.
	It is easy to see that the graph consisting in just one vertex with a self loop for each colour in $C$ is "well-monotone" "$(\kappa, \Muller{\F})$-universal@@prefixIndependent" and of "width" 1, as required.
	
	\item[If $\F$ is "negative" "basic", $\F=\emptyset$] In this case, the objective is trivially losing: $\Muller{\F}=\emptyset$.
	Let us define $U$ over $V(U) = \kappa$ by
	\[
	x \re c y \in E(U) \iff x > y;
	\]
	it is a well-monotone pregraph with "width" $1$.
Note that graphs satisfying $\Muller{\F}$ are exactly those without infinite paths, which is the case of $U$.
Now any "$C$-pretree" $T$ of cardinality $<\kappa$ without infinite branches can be embedded in $U$ by a "morphism" $\phi$ defined in a bottom-up fashion: if $t \in V(T)$ is a sink then $\phi(t)=0$, and otherwise $\phi(t)= \sup\{\phi(t') \mid t \re{c} t' \in E(T)\}$.
\end{description}

We now assume that $\F$ is "non-basic", with "children" $\F_1, \dots, \F_s$ respectively over $C_1, \dots,$ $C_s \subsetneq C$.
For each $i$, we obtain by induction a "well-monotone" $(\kappa,\Muller {\F_i})$-universal graph $U_i$ with "width" $\leq \memory(\F_i)$.
For convenience, we assume that the $V(U_i)$'s are pairwise disjoint.
\begin{description}	
	\item[If $\F$ is "positive" "non-basic"] We define the desired graph $(U,\leq)$ by putting the $U_i$ parallel to each other, and adding edges in between in a cycling fashion. See the left-hand side of Figure~\ref{fig:universal_graph_muller}.
	
	\begin{figure}[h]
		\begin{center}
		\includegraphics[width=\linewidth]{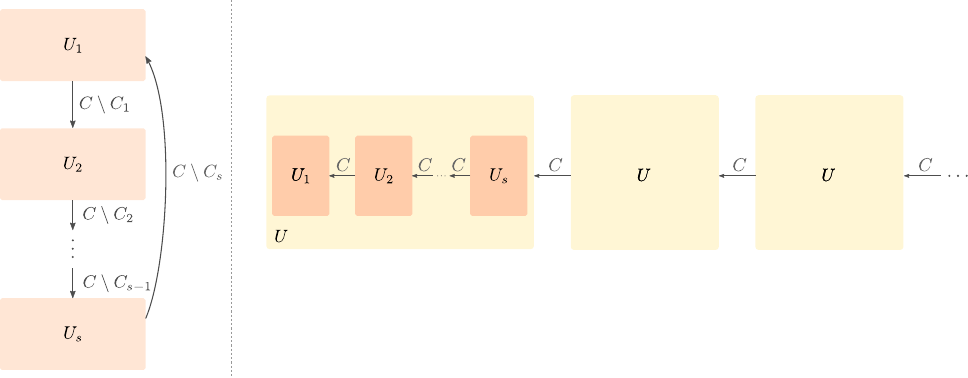}
		\end{center}
		\caption{On the left, the construction for "positive" "Muller objectives" (putting $U_i$'s in parallel); on the right, the construction for "negative" "Muller objectives" (putting $U_i$'s in series).}\label{fig:universal_graph_muller}
	\end{figure}

	Formally, we put $V(U) = \bigcup\limits_{i=1}^s V(U_i)$, and set
	\[
	E(U) = \bigcup_{i=1}^s E(U_i) \cup \bigcup_{i=1}^{s} \{v \re c v' \mid v \in V_i, v' \in V_{i+1} \tand c \notin C_i\}.
	\]
	where it is understood that $s+1$ is identified with $1$.
	The partial order on $U$ is given by
	\[
	v \geq v' \tin U \iff \exists i, [v,v' \in V(U_i) \tand v \geq v' \tin U_i].
	\]
	There remains to prove the following claim.
	
	\begin{clm}
		Graph $(U,\leq)$ is "well-monotone", "$(\kappa, \Muller{\F})$-universal@@prefixIndependent" and has "width" $\leq \memory(\F)=\sum_{i=1}^s \memory{\F_i}$.
	\end{clm}
	
	\begin{claimproof}
		"Well-monotonicity" of $U$ follows directly from "well-monotonicity" of the $U_i$'s.
		The bound its "width" $U$ is also direct.
			
		We now prove that $U$ is "$(\kappa,\Muller{\F})$-universal@@prefixIndependent".
		First, we show that $U$ satisfies $\Muller{\F}$.
		Infinite paths that eventually remain in some $U_i$ satisfy $\Muller{\F_i} \subseteq$ $\Muller{\F}$.
		Colour sequences $w \in C^\omega$ of other infinite paths have infinitely many occurrences of colours not in $C_i$, for each $i$, and therefore $\minf(w)$ is not a subset of any of the $C_i$'s: it has to belong to $\F$ by maximality of the $C_i$'s.
		
		We now let $T$ be a "pretree" of cardinality $<\kappa$ with root $t_0$ satisfying $\Muller{\F}$.
		Let us first label vertices of $T$ by integers from $\{1,\dots,s\}$ in a top-down fashion; these labels $\ell : V(T) \to \{1,\dots,s\}$ will determine in which $U_i$ will the vertices be mapped, and are defined as follows:
		\begin{itemize}
			\item we set $\ell(t_0)=1$, and
			\item for each $t \re c t' \in E(T)$, assuming $\ell(t)$ is defined, if $c \in C_{\ell(t)}$ we let $\ell(t')=\ell(t)$ and otherwise we let $\ell(t')=\ell(t)+1$ (or $1$ if $\ell(t)=s$).
		\end{itemize}
		For each $i$, we let $G_i$ be the pregraph obtained as the "restriction@@graph" of $T$ to $\ell^{-1}(i)$.

		Observe that $G_i$ is a $C_i$-"pregraph" satisfying $\Muller{\F}$, and therefore it satisfies $\Muller{\F_i}$ since $\F_i$ is the "restriction@@Muller" $\restr{\F}{C_i}$ of $\F$ to $C_i$.
		Moreover, $G_i$ is a disjoint union of "pretrees" (as a "restriction@graph" of a "pretree"), each of which is of cardinality $< \kappa$.
		Thus by induction we define, for each $i$, a "morphism" $\phi_i : G_i \to U_i$.
		
		Observe finally that each edge in $E(T)$ either belongs to $E(G_i)$ for some $i$, or is of the form $v \re c v'$ with $\ell(v)=i$, $\ell(v')=i+1$ (or $1$ if $i=s$) and $c \notin C_i$.
		This precisely ensures that the sum of the $\phi_i$'s defines a "morphism" $T \to U$, as required.
	\end{claimproof}

	\item[If $\F$ is "negative" "non-basic"] 
	We will in this case construct an almost universal graph "well-monotone" graph $U$ and conclude thanks to Lemma~\ref{lem:rongeur_de_croute}; it is defined by putting the $U_i$'s in series (see right-hand side of Figure~\ref{fig:universal_graph_muller}).
	Formally, we let $U$ be given by $V(U) = \sum_{i=1}^s V(U_i)$, ordered by
	\[
	v \geq v' \tin U \iff i > i' \tor [i=i' \tand v \geq v' \tin U_i],
	\]
	where $v \in V(U_i)$ and $v' \in V(U_{i'})$, and with edges given by
	\[
	E(U) = \sum_{i=1}^s E(U_i) \cup \{v \re c v' \mid v \in V(U_i) \tand v' \in V(U_{i'}) \text{ with } i > i'\}.
	\]
	We now concentrate on the following claim which, together with Lemma~\ref{lem:rongeur_de_croute}, implies that $U \ltimes \kappa$ is $(\kappa, \Muller \F)$-universal, which concludes the proof.

	\begin{clm}
	The "well-monotone" graph $U$ is almost $(\kappa,\Muller{\F})$-universal and has "antichains" bounded by $\memory(\F)=\max_{i} \memory(\F_i)$.
	\end{clm}

	\begin{claimproof}
		"Well-monotonicity" as well as the bound on the "width" are both a direct proof; we focus on almost universality.
		First, observe that an infinite path in $U$ eventually remains in some $U_i$ and therefore satisfies $\Muller{\F_i} \subseteq \Muller{\F}$.
		
		We now let $T$ be a $C$-pretree of cardinality $< \kappa$ which satisfies $\Muller{\F}$, and aim to show that there is a vertex $t \in V(T)$ such that $\treerooted{T}{t}$, embeds in some $U_i$ (and thus, by composition with the inclusion morphism, in $U$).
		Since such $\treerooted{T}{t}$'s are pretrees of cardinality $< \kappa$ satisfying $\Muller{\F}$, it suffices by induction that for some $t$ and some $i$, all colours appearing in $\treerooted{T}{t}$ belong to $C_i$.
		
		Towards a contradiction, assume otherwise.
		Then we will construct an infinite path 
		\[
		\pi = t_0 \rp {w_0} t_0' \re {c_1} t_1 \rp{w_1} t_1' \re {c_2} \dots
		\]
		in $T$ as follows.
		Assume $\pi$ constructed up to $t_j$ with $j=ks + i$.
		Since $T[t_j]$ contains an edge colour $c_{j+1} \notin C_j$, we let $t_{j}' \re {c_j} t_{j+1}$ be such an edge, and extend $\pi$ with the path $t_j \rp {w_j} t_j'$ followed by the above edge.
		Since there are finitely many colours this implies that $\minf(w)$ is not a subset of any of the $C_i$'s, and therefore $\pi$ does not satisfy $\Muller{\F}$, the wanted contradiction.
	\end{claimproof}
\end{description}

Figure~\ref{fig:example_univgraph_muller} depicts the universal graph obtained using this construction for the Muller condition from Figure~\ref{fig:ZielonkaTree-example}.

\begin{figure}[h]
\begin{center}
\includegraphics[width = \linewidth]{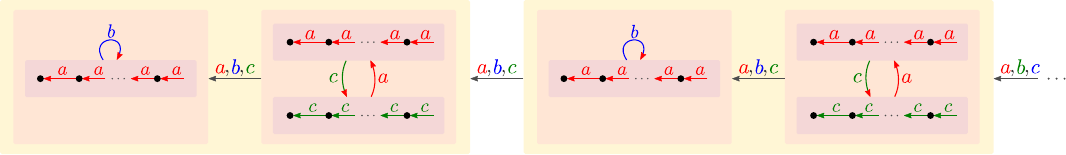}
\end{center}
\caption{The "universal" "graph" obtained for the "Muller objective" given by $\F=\{ \{a,b\}, \{a,c\}, \{b\}\}$ (the corresponding "Zielonka tree" is given in Figure~\ref{fig:ZielonkaTree-example}).} \label{fig:example_univgraph_muller}
\end{figure}

"Objective" $W_1$ from Section~\ref{sec:other-examples} is another simpler example of a "Muller objective". The graph shown in Figure~\ref{fig:univGraph-muller-a-b} coincides with the one obtained by following the above procedure.

Thanks to Proposition~\ref{prop:non_eps-implies-eps} and Theorem~\ref{thm:characterisation_eps}, we conclude with the upper bound in Proposition~\ref{prop: optimalMemoryDJW}; for the lower bound we refer to~\cite{DJW97}.

\section{Counterexamples}
\label{sec:counter-examples}

A few counterexamples which set the limits of our approach are given in this section.

\subsection{No structuration theorem for \texorpdfstring{$\eps$}{ε}-free memory}

In this section, we show that the converse of Theorem~\ref{thm:implication_non_eps} does not hold, even in the case of "objectives".
The counterexample we provide is a generalisation to infinite cardinals of the example proposed by Casares~\cite{Casares22} for showing that the "$\eps$-memory" can be strictly smaller than the "$\eps$-free" one.

\begin{prop}\label{prop:eps-memory-greater-than-memory}
	For each "cardinal" $\mu$ there is a set of colours $C_\mu$ and an "objective" $W_\mu\subseteq C_\mu^\omega$ satisfying:
	\begin{enumerate}
		\item\label{item: eps-free_at_most_2} The "$\eps$-free" "memory@@less" of $W_\mu$ is $\leq 2$.
		\item\label{item: eps-at_least_mu} The "$\eps$-free" "memory@@less" of $\kl{W_\mu^\eps}$ is $\geq \mu$; and therefore the "$\eps$-memory" of $W_\mu$ is $\geq \mu$.
		\item\label{item: univGraph-at_least_kappa} There is $\kappa$ such that any "monotone" "$(\kappa,W_\mu)$-universal" "graph" has "width" $\geq \mu$.
	\end{enumerate}
\end{prop}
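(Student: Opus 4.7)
The plan is to take $C_\mu$ of cardinality $\mu$ (assume $\mu \geq 2$) and set
\[
	W_\mu = \{w \in C_\mu^\oo \mid w_i \neq w_{i+1} \text{ for all } i \in \NN\},
\]
directly generalising the objective $W_2$ from Section~\ref{sec:other-examples}; for finite $\mu$ this recovers Casares' example~\cite{Casares22}. We establish Items~(\ref{item: eps-free_at_most_2}) and~(\ref{item: univGraph-at_least_kappa}) directly, and derive Item~(\ref{item: eps-at_least_mu}) from Item~(\ref{item: univGraph-at_least_kappa}) via the contrapositive of Theorem~\ref{thm:characterisation_eps}, using that an "$\eps$-separated" "monotone" graph of "breadth" $\leq m$ is in particular a "monotone" graph of "width" $\leq m$, and that the inequality $\leq$ versus $<$ can be handled via Proposition~\ref{prop:non-strict-ineq-for-eps-memory}.

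For Item~(\ref{item: eps-free_at_most_2}), given any $\eps$-free "game" $\G$ which "Eve" "wins", I will construct a winning "strategy" of "memory" $\leq 2$ by hand. At each "Eve"-vertex $v$ in her winning region, let $S_v$ denote the set of colours admitting a winning outgoing edge from $v$; pick two (possibly equal) colours $c_1^v, c_2^v \in S_v$, chosen distinct whenever $|S_v| \geq 2$. The "strategy" uses memory $M = \{1,2\}$: in state $i$ at a vertex $v \in \VE$, "Eve" plays the edge coloured $c_i^v$; on any edge of colour $c$ reaching $v' \in \VE$, the memory updates to the state $i'$ satisfying $c_{i'}^{v'} \neq c$. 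This is always possible: when $|S_{v'}| \geq 2$ because $c_1^{v'} \neq c_2^{v'}$ at most one of them equals $c$, and in the degenerate case $|S_{v'}| = 1$ because "Eve"'s winning region excludes positions reached with incoming colour equal to the unique winning outgoing colour. Combined with the observation that within "Eve"'s winning region "Adam" cannot produce a colour matching the previous one (as it would constitute an immediate "victory" for him, contradicting membership in the winning region), any play in the resulting strategy produces a "path" with no consecutive equal colours.

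For Item~(\ref{item: univGraph-at_least_kappa}), let $T$ be the "$C_\mu$-tree" whose vertices are finite words $w \in C_\mu^*$ with no two consecutive equal letters, rooted at $t_0 = \emptyword$, with edges $w \re c wc$ whenever $wc$ still satisfies the constraint. Then $T$ "satisfies" $W_\mu$ and has cardinality $\max(\mu,\alephno)$; we set $\kappa = \max(\mu,\alephno)^+$. For any "monotone" $(\kappa, W_\mu)$-"universal" graph $U$ and "morphism" $\phi: T \to U$, write $t_c$ for the single-letter vertex $c \in V(T)$; I claim $\{\phi(t_c) \mid c \in C_\mu\}$ is an "antichain" of size $\mu$ in $U$. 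Indeed, suppose for contradiction that $\phi(t_{c_1}) \leq \phi(t_{c_2})$ for distinct $c_1, c_2 \in C_\mu$; applying "monotonicity" to $\phi(t_0) \geq \phi(t_0) \re{c_2} \phi(t_{c_2}) \geq \phi(t_{c_1})$ yields an edge $\phi(t_0) \re{c_2} \phi(t_{c_1})$ in $U$. Since $c_1 \neq c_2$, the edge $t_{c_1} \re{c_2} c_1 c_2$ exists in $T$, so $\phi(t_{c_1})$ has an outgoing $c_2$-edge in $U$. Extending to an infinite "path" (possible because $U$ has no sinks), we obtain an infinite "path" from $\phi(t_0)$ beginning with two consecutive $c_2$'s, contradicting that $U$ "satisfies" $W_\mu$. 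Symmetry and distinctness then yield the "antichain", so $U$ has "width" $\geq \mu$.

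The main technical subtlety lies in Item~(\ref{item: eps-free_at_most_2}) in the degenerate case $|S_v|=1$; this is however essentially routine once one recognises that such a vertex can only be reached, in plays consistent with "Eve" winning, via an incoming colour different from its unique winning outgoing colour. The remainder of the argument consists of unpacking definitions and applying Theorem~\ref{thm:characterisation_eps}.
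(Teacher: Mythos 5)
Your choice of $W_\mu$ and your arguments for Items~(\ref{item: eps-free_at_most_2}) and~(\ref{item: univGraph-at_least_kappa}) essentially coincide with the paper's: the memory-two strategy that always keeps an alternative colour in reserve (with the same degenerate-case caveat, which the paper handles by first pruning ``immediately losing'' edges so that no edge $v \re{c} v'$ points to a vertex forced to replay $c$), and the tree of square-free words whose first-level vertices must be mapped to an antichain in any monotone graph satisfying $W_\mu$. Your bookkeeping in Item~(\ref{item: univGraph-at_least_kappa}) is if anything slightly more careful than the paper's (taking $\kappa = \max(\mu,\alephno)^+$ rather than $\max(\mu,\alephno)$, so that $|T| < \kappa$).

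The genuine gap is in Item~(\ref{item: eps-at_least_mu}). The statement has two parts, and the first -- that the \emph{$\eps$-free} memory of $W_\mu^\eps$ is $\geq \mu$ -- is strictly stronger than the second, because an "$\eps$-strategy" is a special kind of "strategy" (one forbidden from updating memory along $\eps$-edges), so a lower bound against all strategies in some $\eps$-game is more than a lower bound against $\eps$-strategies. Your route via the contrapositive of Theorem~\ref{thm:characterisation_eps} and Proposition~\ref{prop:non-strict-ineq-for-eps-memory} correctly yields the second part (that the "$\eps$-memory" of $W_\mu$ is $\geq \mu$), but it cannot yield the first: to go from ``no monotone universal graph of width $< \mu$'' to a lower bound on \emph{$\eps$-free} memory you would need precisely the converse of Theorem~\ref{thm:implication_non_eps}, and the whole point of this proposition (Item~(\ref{item: eps-free_at_most_2}) versus Items~(\ref{item: eps-at_least_mu})--(\ref{item: univGraph-at_least_kappa})) is that this converse fails. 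The paper instead proves the first part of Item~(\ref{item: eps-at_least_mu}) directly, with an elementary game: $V(G) = \{v_0\} \sqcup \mu$, $\VE = \{v_0\}$, edges $v_0 \re{\eps} x$ for $x \in \mu$ and $x \re{y} v_0$ for $y \neq x$; Eve wins by routing Adam to the vertex named after the colour he just played, but any strategy whose fibre over $v_0$ has size $< \mu$ misses some colour $y$, which Adam can then repeat forever. You would need to add such a direct construction to cover the first claim. (As a side remark, your derivation also inverts the paper's logical flow -- the paper notes that Items~(\ref{item: eps-free_at_most_2}) and~(\ref{item: eps-at_least_mu}) together with Proposition~\ref{prop:non_eps-implies-eps} already refute the converse of Theorem~\ref{thm:implication_non_eps}, and proves Item~(\ref{item: univGraph-at_least_kappa}) as an independent strengthening -- and it makes the proposition depend on the structuration machinery of Section~\ref{sec:finite_memory_to_structure}, whereas the paper's proof is self-contained; neither of these is an error, but the missing first claim of Item~(\ref{item: eps-at_least_mu}) is.)
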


Note that combining the two first items with Proposition~\ref{prop:non_eps-implies-eps}, 
we already obtain that the converse of Theorem~\ref{thm:implication_non_eps} fails. 
We include the third item, as it slightly strengthens this result, and we provide a direct proof of it.

\begin{proof}[Proof of Proposition~\ref{prop:eps-memory-greater-than-memory}]
	Let $C_\mu= \mu$ and 
	$ W_\mu= \{ w_0w_1 \dots \in C_\mu^\oo \mid \forall i, w_i\neq w_{i+1}\}$.\footnotemark
	\footnotetext{Condition $W'_\mu= \{ w\in C_\mu^\oo \mid \nexists c \in C_\mu, \nexists u\in C_\mu^* \text{ such that } w=uc^\oo \}$ also verifies the desired property.}
	
	\begin{enumerate}
		\item We first prove that the "$\eps$-free" "memory@@epsFree" of $W_\mu$ is $\leq 2$. Let $\G=(G,\VE,v_0,W_\mu)$ be a "game" that is won be "Eve".
		Since $W_\mu$ is "prefix-increasing", we can assume without loss of generality that for all $v\in \Verts{G}$, "Eve" "wins" the game $\G_v=(G,\VE,v,W_\mu)$.
		Therefore we may assume that for each $v\re{c}v'\in \Edges{G}$, if $v'\in \VE$, $v'$ has an outgoing edge not labelled by $c$, and if $v' \in \VA$ then $v'$ has no outgoing edge labelled with $c$.
		
		We define a "strategy" implementing the following idea: for each vertex $v\in \VE$, "Eve" fixes two outgoing edges labelled by $c_{v,1}\neq c_{v,2}$ (if possible). When she has to play from $v$, the strategy just remembers if the colour produced in the preceding action was $c_{v,1}$ (in which case she chooses to play $c_{v,2}$) or not (in which case she can play $c_{v,1}$ safely).
		
		We define this "strategy" formally.  
		For each $v\in \VE$, if $v$ has at least two outgoing edges labelled with two different colours, we choose two of them: $v\re{c_{v,1}}v'_1$, $v\re{c_{v,2}}v'_2$, $c_{v,1}\neq c_{v,2}$ (if the same colour labels all outgoing edges of $v$, we take $c_{v,1}=c_{v,2}$, $v'_1=v'_2$).
		For $v \in \VA$, we let $c_{v,1}$ be a fresh colour not in $C_\mu$ (so that it is different from all $c \in C_\mu$ in the conditions below). 
		We define $\S=(S,\pi_\S,s_0)$ as follows:
		\begin{itemize}
			\item $V(S) = \Verts{G}\times \{1\} \sqcup \VE \times \{2\}$.
			\item $s_0= (v_0,1)$.
			\item If $v \in \VA$ and $v\re{c}v'\in \Edges{G}$, if $c\neq c_{v',1}$,  $(v,1)\re{c} (v',1)\in E(S)$. If $c= c_{v',1}$,  $(v,1)\re{c} (v',2)\in E(S)$. 
			\item For $v\in \VE$ and $i\in\{1,2\}$, if $c_{v,i}\neq c_{v'_i,1}$ we let $(v,i)\re{c_{v,i}} (v'_i,1)\in E(S)$. If $c_{v,i}= c_{v'_i,1}$ we let $(v,i)\re{c_{v,i}} (v'_i,2)\in V(S)$.
			\item $\pi_\S(v,i)=v$.
		\end{itemize}
		
		As we have supposed that after visiting a $c$-edge "Eve" have some option not labelled with $c$, the above "strategy" only contains "paths" "satisfying" $W_\mu$.
		
		\item  We now prove that the "memory" of $\kl{W_\mu^\eps}$ is\footnote{In fact, if $\mu$ is infinite we will prove that the "memory" of $W_\mu$ is $> \mu$.} $\geq \mu$.
		Consider the "game" $\G=(G,\VE,v_0,W_\mu)$ over $V(G)=\{v_0\} \sqcup \mu$, where $v_0 \notin \mu$ is a fresh element, by $\VE = \{v_0\}$ and
		\[
			E(G) = \{v_0 \re \eps x \mid x \in \mu\} \cup \{x \re y v_0 \mid y \neq x\}.
		\]
		"Eve" "wins" this game using the following strategy: whenever "Adam" picks an edge $x \re y v_0$, she sends him to vertex $y$ (from where he cannot pick colour $y$ again).
		The game $\G$ is depicted in Figure~\ref{fig:counter_example1}.

		\begin{figure}[h]
			\begin{center}
				\includegraphics[width = 0.5 \linewidth]{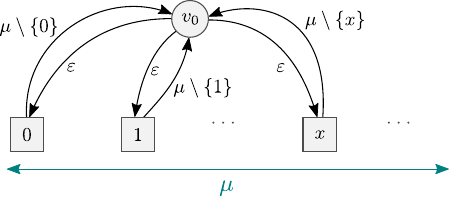}
			\end{center}
			\caption{The game $\G$ in the proof on the second item of Proposition~\ref{prop:eps-memory-greater-than-memory}.}\label{fig:counter_example1}
		\end{figure}
		In this way, the produced sequence does not contain two consecutive colours that are equal, thus "Eve" "wins".
		
		Let $\S=(S,\pi_\S,s_0)$ be an "strategy" such that $|\inv{\pi_\S}(v_0)| < \mu$.
		We will show that $\S$ is not "winning@@strategy".
		For each $s \in \inv{\pi_\S}(v_0)$, choose $x_s \in V(S)$ such that $s \re{\eps} x_s \in E(S)$.
		Let $y \in \mu \setminus \{\pi_\S(x_s) \mid s\in \inv{\pi_\S}(v_0)\}$ be the image under $\pi_\S$ of an element which is never chosen (which exists since $|\inv{\pi_\S}(v_0)| < \mu$).
		The strategy $\S$ contains the following losing "path":
		\[ s_0\re{\eps} x_{s_0} \re{y} s_1\re{\eps} x_{s_1} \re{y} s_2\re{\eps} x_{s_2} \re{y} s_3 \dots  \]

		\item Let $\kappa=\max\{\alephno,\mu\}$. We define a "$C_\mu$-tree" $T$ of cardinality $\kappa$ that cannot be embedded in any "monotone" "$C_\mu$-graph" with "width" $<\mu$ by a "morphism" $\phi$ that "preserves the value" at the root $t_0$. We define $T$ over
		\[
			V(T) = \{w_0\dots w_k \in \mu^* \mid w_i \neq w_{i+1} \text{ for all } 0\leq i < k \},
		\]
		by
		\[
			E(T) = \{ w_0\dots w_k \re{c} w_0\dots w_kc \mid c \in \mu\setminus \{w_k\}\},
		\]
		with root $t_0=\eps$.
		In words, $T$ consists of finite sequences of elements in $\mu$ whose pairwise consecutive elements differ, and the successors of a vertex $t \in V(T)$ are those obtained by adding a colour different from the one in the last position.
		By construction, all paths from $t_0$ satisfy $W_\mu$.
		Moreover, since $\kappa$ is infinite it holds that $|T|=\kappa$ as claimed.
		
		Let $(G,\leq)$ be a "monotone" "$C$-graph" with "antichains" of cardinality $<\mu$ and let $\phi \colon T \to G$ be a "morphism". We show that $\phi(t_0)$ does not satisfy $W_\mu$ in $G$. Consider the set of vertices at the first level of the tree, that is, $V_1=\{t \in V(T) \mid t \in \mu\}$.
		As any antichain of $G$ has cardinality $<\mu$, there are two different elements $t,t'\in V_1 \subseteq \mu$ such that $\phi(t)$ and $\phi(t')$ are comparable; we assume without loss of generality that $\phi(t')\leq \phi(t)$.
		Observe that $t \re{t'} tt' \in E(T)$, and therefore $\phi(t)\re{t'}\phi(tt') \in \Edges{G}$ since $\phi$ is a morphism.
		By monotonicity it follows that $\phi(t') \re{t'} \phi(tt') \in \Edges{G}$.
		We deduce that $G$ contains a "path" from $\phi(t_0)$ starting by 
		\[
			\phi(t_0)\re{t'} \phi(t') \re{t'}\phi(tt') \dots,
		\]
		and thus $\phi(t_0)$ does not satisfy $W_\mu$. \qedhere
	\end{enumerate}	
\end{proof}

\subsection{Universal graphs with antichains of unbounded size do not determine the \texorpdfstring{$\eps$}{ε}-memory}\label{sec:unbounded_antichains-not-epsilon-memory}

As observed by Perles~\cite{Perles63}, Dilworth's Theorem (c.f. Theorem~\ref{prop:Dilworth-theorem}) does not hold if the upper bound on the "width" is infinite.
More precisely, he proved that for any "cardinal" $\kappa$, all antichains of the coordinate-wise order $\kappa \times \kappa$ are finite, but it cannot be decomposed in less than $\kappa$ disjoint "chains".

In this section we show that Proposition~\ref{prop:non_eps-implies-eps} does not hold if the bound on the size of the "antichains" of the graph is not finite: the existence of a "well-monotone" "$(\kappa,\val)$-universal" graph of "width" $< \mu$ does not provide any information on the "$\eps$-memory" of $\vale$ if $\mu$ is infinite (even if $\val$ is an "objective").

\begin{prop}\label{prop:infinite-antichains-no-eps-memory}
	For any infinite "cardinal" $\mu$, there exists an "objective" $W_\mu$ such that
	\begin{itemize}
		\item for all cardinals $\kappa$ there exists a "well-monotone" "$(\kappa,W_\mu)$-universal" graph whose "antichains" have cardinality $<\aleph_0$; and
		\item there is an $\eps$-game with "objective" $W_\mu^\eps$ in which "Eve" cannot reach the "value@@game" with "$\eps$-memory"~$<\mu$.
	\end{itemize}
\end{prop}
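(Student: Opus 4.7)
The plan is to exhibit $W_\mu$ as a safety (topologically closed) objective whose left-quotient poset, ordered by inclusion, is isomorphic to Perles's classical example $((\mu+1)^2, \leq_{\mathrm{prod}})$ --- the square of $\mu+1$ endowed with the coordinate-wise order. Concretely, take $C_\mu = \mu \times \{1,2\}$ and let $L_\mu \subseteq C_\mu^*$ be the set of finite words containing, for some coordinate $i \in \{1,2\}$ and positions $p < q$, colours $u_p = (\alpha, i)$ and $u_q = (\beta, i)$ with $\beta > \alpha$. Set $W_\mu = \Safe{L_\mu}$: the infinite words in $W_\mu$ are exactly those whose sub-sequence in each coordinate is non-increasing.

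For the first bullet, I will verify that for $u \in C_\mu^* \setminus L_\mu$ the left quotient $\lquot{u}{W_\mu}$ depends only on the pair $(v_1(u), v_2(u)) \in (\mu+1)^2$, where $v_i(u)$ is the minimum coordinate-$i$ value occurring in $u$ (set to $\mu$ if coordinate $i$ is absent from $u$), and that $\lquot{u}{W_\mu} \subseteq \lquot{u'}{W_\mu}$ iff $v_i(u) \leq v_i(u')$ for $i = 1, 2$. Thus $(\Res(W_\mu) \setminus \{\emptyset\}, \subseteq) \cong ((\mu+1)^2, \leq_{\mathrm{prod}})$. The crucial geometric property is that this poset has only finite antichains: given an infinite sequence of pairs in $(\mu+1)^2$, well-foundedness of $\mu+1$ first yields an infinite subsequence weakly increasing in the first coordinate, and then a further infinite subsequence weakly increasing in the second, producing an infinite coordinate-wise increasing subsequence and contradicting antichain-hood. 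Proposition~\ref{prop:universal-graph-safety} then provides, for every cardinal $\kappa$, a well-monotone $(\kappa, W_\mu)$-universal graph with only finite antichains, establishing the first bullet.

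For the second bullet, I rely on Perles's theorem~\cite{Perles63} that $((\mu+1)^2, \leq_{\mathrm{prod}})$ admits no chain decomposition of size $< \mu$, combined with the converse direction of Theorem~\ref{thm:characterisation_eps}. Suppose for contradiction that $W_\mu$ has $\eps$-memory $< \mu$; then by Proposition~\ref{prop:non-strict-ineq-for-eps-memory} it has $\eps$-memory $\leq \alpha$ for some $\alpha < \mu$, and Theorem~\ref{thm:characterisation_eps} provides, for every $\kappa$, an $\eps$-separated well-monotone $(\kappa, W_\mu^\eps)$-universal graph $U'$ of breadth $\leq \alpha$, whose $\eps$-order induces a decomposition of $V(U')$ into $\alpha$ chains. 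Taking $\kappa$ larger than the cardinality of a suitable unfolding of the canonical universal graph from the first bullet and using universality to embed this unfolding into $U'$, one lifts the chain decomposition of $U'$ to a chain decomposition of $(\mu+1)^2$ of size $\leq \alpha$, contradicting Perles. Hence $\eps$-memory $\geq \mu$, which by definition provides an $\eps$-game witnessing the second bullet.

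The main technical obstacle is formalising this lift: one must assign to each state $(v_1, v_2) \in (\mu+1)^2$ a representative vertex in $U'$ (via the embedding) and argue that the $\eps$-order on these representatives refines the coordinate-wise order enough to convert a chain decomposition of $U'$ into one of $(\mu+1)^2$. The delicate point is that universal embeddings need not be injective, but the monotonicity of $U'$ together with the rich forward behaviour of each state in the canonical universal graph --- reading a colour $(\alpha, i)$ with $\alpha \leq v_i$ produces genuinely different reachable sets when $v_i$ varies --- should suffice to force incomparable states in $(\mu+1)^2$ to lie in distinct $\eps$-chains of $U'$.
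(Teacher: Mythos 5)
Your choice of objective differs from the paper's (which works over $C_\mu=\mu\times\mu$ with the condition that no step increases both coordinates simultaneously), but both constructions are engineered so that the nonempty left quotients form Perles's poset, a product of two well-orders under the coordinate-wise order; your first bullet is then proved exactly as in the paper, by checking that this poset is a wqo and invoking Proposition~\ref{prop:universal-graph-safety}. That part is fine. For the second bullet your route is genuinely different: the paper builds an explicit $\eps$-game (a choice gadget in which Adam offers Eve two-element subsets of $\mu\times\mu$) and shows directly that any $\eps$-strategy of memory $<\mu$ contains a losing path, which is self-contained and immediately exhibits the witnessing game required by the statement. You instead argue by contradiction through the converse direction of Theorem~\ref{thm:characterisation_eps} (hence through the structuration Lemma~\ref{lem:structuration}), extract an $\eps$-separated well-monotone universal graph $U'$ of breadth $\alpha<\mu$, and aim to read off a chain decomposition of $(\mu+1)^2$ into $\alpha$ chains, contradicting Perles~\cite{Perles63}. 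This is a heavier hammer, but a legitimate one.

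The one step you leave unproven --- the ``lift'' --- is exactly the mathematical content of the lower bound, so as written the proof is incomplete; however, the step does go through, and more easily than you fear: injectivity of the embedding is irrelevant. Take $T$ to be the $C_\mu$-tree of all safe finite words (it has cardinality $\mu$, no sinks, and every branch lies in $W_\mu$), let $\phi:T\to U'$ preserve the value at the root $\emptyword$, and for each state $q\in(\mu+1)^2$ fix a word $u_q$ realising it and let $\chi(q)$ be the part of $U'$ containing $\phi(u_q)$. If $q=(v_1,v_2)$ and $q'=(v_1',v_2')$ are incomparable, say $v_1>v_1'$ and $v_2<v_2'$, and $\phi(u_q)$, $\phi(u_{q'})$ lie in the same part, then they are comparable there, and since in an $\eps$-separated graph the order is $\re{\eps}$ itself, one of the edges $\phi(u_{q'})\re{\eps}\phi(u_q)$ or $\phi(u_q)\re{\eps}\phi(u_{q'})$ is present. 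In the first case pick $v_1'<\alpha_0\leq v_1$ (possible since $v_1'<\mu$): the path from $\phi(\emptyword)$ reading $u_{q'}$, then $\eps$, then $(\alpha_0,1)$ --- an edge of $U'$ because $(\alpha_0,1)$ is a safe successor of $u_q$ in $T$ --- extended arbitrarily, has a $C$-projection admitting the bad prefix $u_{q'}(\alpha_0,1)$, so its $\vale$-value is $\top$, contradicting that $\phi(\emptyword)$ satisfies $W_\mu^\eps$; the second case is symmetric using coordinate $2$. Hence incomparable states land in distinct parts, the fibres of $\chi$ are chains, and $\alpha\geq\mu$ by Perles. With this argument supplied your proof is complete; without it, the proposal asserts rather than proves the second bullet.
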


The rest of this section is devoted to the proof of Proposition~\ref{prop:infinite-antichains-no-eps-memory}.
Fix an infinite "cardinal"~$\mu$.
Let $C_\mu= \mu \times \mu$ and let $W_\mu$ be the "objective":
\[
	W_\mu = \{ (w,w') \in C_\mu^\oo \mid \nexists i \text{ such that } w_i<w_{i+1} \text{ and } w'_i < w'_{i+1}\}.
\]
In words, Eve wins as long as at each step, one of the two coordinates does not increase.
Clearly, "objective" $W_\mu$ is topologically closed, therefore thanks to Proposition~\ref{prop:universal-graph-safety}, it suffices to study its "left quotients" in order to construct well-monotone universal graph.
We now prove that the "left quotients" of $W_\mu$ form a "well-quasi order".

\begin{lem}\label{lem:W_mu-finite-antichains}
	The partial order $(\Res(W_\mu),\subseteq)$ is a "well-quasi order" (wqo).
\end{lem}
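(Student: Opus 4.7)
The plan is to compute $\Res(W_\mu)$ explicitly and then reduce the statement to the classical fact that a product of two well-orders is a wqo.

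Fix a finite word $u \in C_\mu^*$. If $u$ itself contains two consecutive letters $(a,b),(c,d)$ with $a<c$ and $b<d$, then no suffix can repair this violation, so $\lquot{u}{W_\mu} = \emptyset$. For the empty word, $\lquot{\emptyword}{W_\mu} = W_\mu$. Otherwise $u$ is safe and non-empty, and $uw \in W_\mu$ iff $w \in W_\mu$ and the single boundary pair between $u$ and $w$ satisfies the safety condition; writing $(a,b)$ for the last letter of $u$, this boundary condition is $\neg(a < w_0 \tand b < w'_0)$. Hence $\lquot{u}{W_\mu}$ depends only on the last letter of $u$, and equals
\[
L_{(a,b)} := \{w \in W_\mu \mid \neg(a < w_0 \tand b < w'_0)\}.
\]
Consequently,
\[
\Res(W_\mu) \subseteq \{\emptyset, W_\mu\} \cup \{L_{(a,b)} \mid (a,b) \in \mu \times \mu\}.
\]

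The second step is to determine the inclusion order on the family $\{L_{(a,b)}\}$. Since these languages share the same tail constraint (membership in $W_\mu$) and differ only in the constraint imposed on the first letter, and since every pair $(c,c') \in \mu \times \mu$ is realised as the first letter of some word in $W_\mu$ (for instance by appending $(0,0)^\omega$), the inclusion $L_{(a,b)} \subseteq L_{(a',b')}$ reduces to the inclusion of admissible first-letter sets. A direct calculation on the ``forbidden upper quadrants'' $\{(c,c') : c > a \tand c' > b\}$ yields
\[
L_{(a,b)} \subseteq L_{(a',b')} \iff a \leq a' \tand b \leq b',
\]
so the family $\{L_{(a,b)}\}$ is order-isomorphic to $\mu \times \mu$ endowed with the coordinate-wise order.

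Finally, I invoke the classical fact that a product of two well-orders is a wqo (more generally, a finite product of wqos is a wqo): the well-ordering of $\mu$ rules out infinite strictly descending sequences in either coordinate, and the absence of infinite antichains is precisely Perles's observation recalled just above. Adjoining $\emptyset$ as a minimum and $W_\mu$ as a maximum preserves wqo-ness trivially, and the lemma follows. I do not foresee any substantial obstacle; the argument is essentially bookkeeping once the quotients are identified, the only point needing mild care being the verification that every admissible first letter is realised by a word in $W_\mu$, which is immediate.
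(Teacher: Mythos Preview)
Your proof is correct and follows essentially the same route as the paper: identify the non-trivial left quotients with $\mu \times \mu$ under the coordinatewise order (via the observation that only the last letter matters), and invoke that a product of two well-orders is a wqo. You are in fact slightly more careful than the paper in separating out $W_\mu$ itself as a top element distinct from every $L_{(a,b)}$; the paper silently absorbs this into its claimed isomorphism $\Res(W_\mu)\setminus\{\emptyset\}\cong\mu\times\mu$, which is harmless for the conclusion but strictly speaking overlooks the empty-word quotient.
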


\begin{proof}
	We will prove that $(\Res(W_\mu)\setminus\{\emptyset\},\subseteq)$ is order-isomorphic to $(\mu\times \mu, \leq)$ ordered coordinatewise:
	\[
		(x,y) \leq (x',y') \iff x \leq x' \tand y \leq y',
	\]
	which is well-known to be a "wqo".
	
	First, observe that for $(u,v) = (u_0\dots u_n,v_0\dots v_n) \in C_\mu^*$ such that $\lquot{(u,v)}{W_\mu}\neq \emptyset$, it holds that $\lquot{(u,v)}{W_\mu}=\lquot{(u_n,v_n)}{W_\mu}$ (that is, the last letters determine the left quotient).
	We aim to prove that for all $(x,y),(x',y')\in \mu\times \mu$ it holds that
	\[
		(x,y)\leq (x',y') \; \Longleftrightarrow \; \lquot{(x,y)}{W_\mu} \subseteq \lquot{(x',y')}{W_\mu}.
	\]
	If $(x,y)\leq (x',y')$, then for any $(w,w')=(w_0w_1\dots, w'_0w_1'\dots)\in C_\mu^\oo$, if $(xw,yw')\in W_\mu$, then in particular $x\geq w_0$ or $y\geq w'_0$; and there is not $i\in \oo$ such that $w_i<w_{i+1}$ and $w'_i<w'_{i+1}$. Therefore, $x'\geq w_0$ or $y'\geq w'_0$ and $(x'w,y'w')\in W_\mu$.
	Conversely, if $(x,y)\nleq (x',y')$, we suppose without loss of generality that $x>x'$.
	Then $((x'+1)^\oo,y^\oo)\in \inv{(x,y)} W_\mu$ but $((x'+1)^\oo,y^\oo)\notin \inv{(x',y')} W_\mu$.	
\end{proof}

Applying Proposition~\ref{prop:universal-graph-safety} then yields the first item in Proposition~\ref{prop:infinite-antichains-no-eps-memory}.

We now define an $\eps$-game won by "Eve", but in which she needs to use an $\ee$-strategy with memory at least $\mu$; we start with a formal definition, the intuition is explained below.
We write $\powtwo(\mu \times \mu)$ to denote the set of subsets of $\mu \times \mu$ of size $2$.
For $(x,y)\in \mu\times \mu$ we write $(x,y)^{>} =\{ (x',y') \in \mu\times \mu \mid (x,y)< (x',y')\}$.

Let $\G_\mu=(G,\VE, v_0, W_\mu^\ee)$ be the game defined as follows.
\begin{itemize}
	\item $\Verts{G} = \{v_0\} \cup \mu\times \mu \cup \powtwo(\mu \times \mu)$.
	\item $\VE=\powtwo(\mu \times \mu)$.
	\item $\Edges{G}$ contains the following edges:
	\begin{itemize}
		\item $v_0 \re{(x,y)} (x,y)$ for all $(x,y)\in \mu \times \mu$.
		\item For $(x,y)\in \mu\times \mu$, $(x,y) \re{\ee} A$ for all $A\in \powtwo(\mu \times \mu)$ such that $A\nsubseteq (x,y)^{>}$.
		\item For $A\in \powtwo(\mu \times \mu)$, $A \re{(x,y)} (x,y)$ for $(x,y)\in A$.
	\end{itemize}
\end{itemize}

That is, in the game "Adam" and "Eve" alternate moves as follows: "Eve" picks an element $(x,y)\in \mu \times \mu$ amongst two options (the two elements in some set $A$) and then "Adam" can choose what are going to be the options in Eve's next move, as long as she has at least one non-losing move. The first move is done by "Adam", he can choose the first element of the sequence.
Note the similarity with the gadget from the proof of Section~\ref{sec:finite_memory_to_structure}. 

In the game $\G_\mu$, "Eve" has a strategy ensuring that a "path" satisfying $W_\mu^\ee$ will be produced: whenever "Adam" sends her to a vertex $A\in \powtwo(\mu \times \mu)$, she can choose an element that ``keeps her alive'' (she does not produce an increasing pair). 
We are now ready to prove the second item in Proposition~\ref{prop:infinite-antichains-no-eps-memory}.

\begin{lem}
	In the game $\G_\mu=(G,\VE, v_0, W_\mu^\ee)$ "Eve" cannot win using an $\ee$-strategy with memory $<\mu$.
\end{lem}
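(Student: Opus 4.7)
The plan is to argue by contradiction: assume $\S = (S, \pi_\S, s_0)$ is a winning $\eps$-strategy with $V(S) \subseteq V(G) \times M$ and $|M| < \mu$, and for each $m \in M$ let
\[
	P_m = \{p \in \mu \times \mu \mid (p, m) \text{ is reachable from } s_0 \text{ in } S\}.
\]
An immediate observation is that $\bigcup_{m \in M} P_m = \mu \times \mu$: for every $p \in \mu \times \mu$, the $G$-edge $v_0 \re{p} p$ at the Adam vertex $v_0$ must be matched by some $s_0 \re{p} (p, m) \in E(S)$, witnessing $p \in P_m$.

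The core step is to show that each $P_m$ is a chain for the coordinate-wise order $\leq$ on $\mu \times \mu$. Suppose toward contradiction that $p_1 = (x_1, y_1)$ and $p_2 = (x_2, y_2)$ in $P_m$ are incomparable for $\leq$, say $x_1 > x_2$ and $y_1 < y_2$. Set $a = (x_1 + 1, y_1 + 1)$, $b = (x_2 + 1, y_2 + 1)$, and $A = \{a, b\} \in \powtwo(\mu \times \mu)$. A short calculation in the ordinals gives $a > p_1$ and $b > p_2$ strictly in both coordinates, while $a \not> p_2$ and $b \not> p_1$; in particular $A \nsubseteq p_i^>$ for $i = 1, 2$. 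Applying the $\eps$-strategy property to the $G$-edges $p_i \re{\eps} A$ at the reachable states $(p_i, m)$ yields edges $(p_i, m) \re{\eps} (A, m) \in E(S)$, so $(A, m)$ is reachable in $S$. Since $A \in \VE$ and $S$ has no sinks, $(A, m)$ admits at least one outgoing edge $(A, m) \re{q} (q, m')$ with $q \in \{a, b\}$. Adam now wins: picking $i \in \{1, 2\}$ such that $q > p_i$ strictly (possible because $a > p_1$ and $b > p_2$), he plays any path from $s_0$ to $(p_i, m)$ in $S$, then the edges $(p_i, m) \re{\eps} (A, m) \re{q} (q, m')$, and extends to an infinite path using that $S$ has no sinks. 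The $C$-projection of the resulting play contains the strictly increasing consecutive pair $p_i, q$, violating $W_\mu^\eps$ and contradicting that $\S$ is winning.

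Combining the two claims, $\mu \times \mu$ is covered by the family $(P_m)_{m \in M}$ of at most $|M| < \mu$ chains, which we can refine into a partition into pairwise disjoint chains by taking successive set-differences. This contradicts Perles's theorem, recalled at the opening of Section~\ref{sec:unbounded_antichains-not-epsilon-memory}, which states that $(\mu \times \mu, \leq)$ cannot be partitioned into fewer than $\mu$ chains. The delicate step is the chain property of each $P_m$: its subtlety lies in the fact that Eve's $\eps$-strategy exposes the same outgoing options at $(A, m)$ regardless of how the play reaches it, so Adam retains the freedom to commit after the fact to whichever origin $p_i$ turns Eve's forced choice $q$ into an increasing pair.
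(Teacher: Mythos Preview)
Your proof is correct and follows essentially the same approach as the paper: both arguments invoke Perles's result that $\mu \times \mu$ cannot be decomposed into fewer than $\mu$ chains, locate two incomparable elements $p_1,p_2$ reaching the same memory state $m$, and then exhibit a two-element set $A$ at which Eve's single outgoing choice from $(A,m)$ is losing against one of the two origins. Your presentation is in fact slightly cleaner than the paper's, since you use directly that $|M|<\mu$ (which is the definition of $\eps$-memory $<\mu$) rather than deriving a bound on the used memory states via a fibre argument, and you give the explicit construction $a=(x_1+1,y_1+1)$, $b=(x_2+1,y_2+1)$ where the paper only asserts existence of suitable $t_1,t_2$.
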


\begin{proof}
	Let $\S=(S,\pi_\S,s_0)$ be an $\ee$-strategy over a set $M$ such that $|\inv{\pi}_\S(v)|<\mu$ for every $v\in V(G)$.
	We will prove that $\S$ contains a losing path from $s_0=(v_0,m_0)$.
	
	For each $q\in \mu \times \mu$ we pick $m_q\in M$ such that $(v_0,m_0) \re{q} (q,m_q) \in E(S)$, and we let $M'=\{m_q\in M \mid q\in \mu\times \mu\}$.
	We let $p_1=(1,0)$ and $p_2=(0,1)$.
	Observe that for any $q \in \mu \times \mu$, $q\re{\ee} \{p_1,p_2\} \in \Edges{G}$, and therefore	
	for any $q\in \mu\times \mu$, since $\S$ is an $\eps$-strategy, $q \in \VA$, $q \re \ee \{p_1,p_2\} \in E(G)$ and $(q,m_q) \in V(S)$, it holds that $(q,m_q) \re \eps (\{p_1,p_2\},m_q) \in E(S)$.	
	Thus there is a different element for each $m_q$ in the fiber above $\{p_1,p_2\}$, and we deduce that $|M'|<\mu$, and we can find two different elements $q_1,q_2\in \mu \times \mu$ such that $m_{q_1}= m_{q_2}=m$.
	Moreover, since $\mu \times \mu$ cannot be decomposed in less than $\mu$ disjoint chains~\cite{Perles63}, we can pick $q_1$ and $q_2$ incomparable. Pick $t_1,t_2\in \mu\times \mu$ satisfying
	\begin{itemize}
		\item $q_1 \nless t_1$, $q_2<t_1$,
		\item $q_2\nless t_2$, $q_1<t_2$.
	\end{itemize}
	Therefore, when "Adam" plays $\{t_1,t_2\}$ from $q_1$, "Eve" should play $t_1$, and when "Adam" plays it from $q_2$ she should choose $t_2$. However, the strategy $\S$ contains the edge $(\{t_1,t_2\},m) \re{t'} t'$, for both $t'\in \{t_1,t_2\}$, and therefore $\S$ contains infinite paths starting by the following two possibilities
	\[
	\begin{array}{rcl}
	(v_0,m_0) \re{q_1} (q_1,m) \re{\ee} (\{t_1,t_2\},m) \re{t'} (t',m'),\\
	(v_0,m_0) \re{q_2} (q_2,m) \re{\ee} (\{t_1,t_2\},m) \re{t'} (t',m'),
	\end{array}
	\]
	one of which violates the "objective" $W_\mu^\ee$.	
\end{proof}

\subsection{Universal graphs need to embed just trees, not graphs}\label{sec:embedding_trees-vs-graphs}
There is a discrepancy between the notions of "universality" used in Ohlmann's characterisation of "positionality"~\cite{Ohlmann23UnivJournal} (which comes from the work of Colcombet and Fijalkow~\cite{CF19}) and the one introduced in this paper: in Ohlmann's paper~\cite{Ohlmann23UnivJournal}, for a graph $U$ to be "universal" it must embed all \emph{"graphs"} (of a given cardinality) via a "morphism" "preserving the value" of all its vertices. However, in this work this condition is relaxed; we only require that $U$ embeds all \emph{"trees"} (of a given cardinality) via a "morphism" "preserving the value" of its "root". 

In the study of positionality, the two definitions (embedding graphs or trees) can be seen to be equivalent: to embed a graph $G$ in a well-monotone (totally ordered) graph $U$, one may first unfold it, then embed the obtained tree, and then obtain a morphism by considering minimal images for each $v \in V(G)$.
For a formal exposition, see~\cite[Corollary 1.1]{Ohlmann21Thesis}.
Therefore notions from this paper indeed collapse with those from~\cite{Ohlmann23UnivJournal} in the case of "positionality" (that is, "memory@@eps" $\mu=1$).

When $U$ is not "totally ordered", as in this paper, the two notions however differ; ours (embedding trees) is a strict relaxation of the previous one (embedding graphs).
We show in this section that this relaxation is in fact necessary: Lemma~\ref{lem:structuration} (and thus, the converse implication in Theorem~\ref{thm:characterisation_eps}) fails when using the stronger definition of universality.

\begin{prop}\label{prop:embed-tree-not-graphs}
	There exits a "prefix-independent" "objective" $W$ with "$\eps$-memory" $\leq 2$ such that for all $m \geq 1$, there is a graph $G_m$ "satisfying" $W$ such that any "monotone" "graph" "satisfying" $W$ and "embedding" $G_m$ has "width" $\geq m$.
\end{prop}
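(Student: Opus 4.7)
The plan is to exhibit $W = W_1 = \infOften(a) \cap \infOften(b)$ from Section~\ref{sec:other-examples} as our witness objective; this $W_1$ is "prefix-independent" and has "$\eps$-memory" $\leq 2$ via Theorem~\ref{thm:characterisation_eps} and the construction carried out in Section~\ref{sec:other-examples}. For $m \geq 2$, I take $G_m$ to be the labelled tournament on vertices $\{v_1, \ldots, v_m\}$ with edges
\[
\Edges{G_m} \,=\, \{v_i \re{a} v_j \mid i < j\} \,\cup\, \{v_i \re{b} v_j \mid i > j\},
\]
the case $m = 1$ being trivial.

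First, I would verify that $G_m$ "satisfies" $W_1$. The subgraphs of $G_m$ restricted respectively to $a$-edges and to $b$-edges are both acyclic (they are the natural order on $\{1,\dots,m\}$ and its reverse), so any maximal "path" using only $a$-edges (resp.\ only $b$-edges) has length at most $m-1$. Consequently every infinite "path" in $G_m$ contains infinitely many $a$-edges and infinitely many $b$-edges, and therefore "satisfies" $W_1$.

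The crux is then the following monotonicity argument. Let $\phi : G_m \to U$ be a "morphism" into a "monotone" "graph" $U$ that "satisfies" $W_1$, and suppose for contradiction that $\phi(v_i) \leq \phi(v_j)$ for some $i \neq j$. If $i < j$, then $v_i \re{a} v_j \in \Edges{G_m}$, so $\phi(v_i) \re{a} \phi(v_j) \in \Edges{U}$; the inequality chain
\[
\phi(v_j) \,\geq\, \phi(v_i) \re{a} \phi(v_j) \,\geq\, \phi(v_j)
\]
together with "monotonicity" yields an $a$-self-loop $\phi(v_j) \re{a} \phi(v_j) \in \Edges{U}$, hence an infinite "path" from $\phi(v_j)$ with labelling $a^\omega \notin W_1$, contradicting that $U$ "satisfies" $W_1$. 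If instead $i > j$, then $v_i \re{b} v_j \in \Edges{G_m}$ and a symmetric argument produces a $b$-self-loop at $\phi(v_j)$ and an infinite "path" with labelling $b^\omega \notin W_1$, again contradicting that $U$ "satisfies" $W_1$. Therefore $\phi(v_1), \ldots, \phi(v_m)$ are pairwise incomparable, so they form an "antichain" of size $m$ in $U$, and $U$ has "width" $\geq m$.

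No significant technical obstacle remains once the tournament construction of $G_m$ is identified; the argument is then essentially a two-line application of "monotonicity". The conceptual point worth emphasising is that graph-embedding (unlike tree-embedding) can simultaneously "activate" several distinct edges of $G_m$ at the same vertex of $U$ via monotonicity, and the acyclic tournament structure of $G_m$ turns each such activation into a forbidden constant-colour self-loop; this is precisely what makes the width gap between the two notions of universality possible.
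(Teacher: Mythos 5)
Your proposal is correct and follows essentially the same route as the paper: the same objective $W_1 = \Muller{\{a,b\}}$, the same tournament graph $G_m$ ($a$-edges up, $b$-edges down), and the same "monotonicity" argument turning any comparable pair of images into a monochromatic self-loop, hence a path labelled $a^\omega$ or $b^\omega$. The only cosmetic difference is that you place the forced self-loop at the larger image $\phi(v_j)$ while the paper places it at the smaller one; both are valid instances of the monotonicity axiom.
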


\begin{proof}
	Let $C=\{a,b\}$ and 
	\[
		W = \{ w\in C^\oo \mid w  \text{ has infinitely many occurrences of both } a \text{ and } b \} = \Muller{\{a,b\}}.
	\]
	For $m \geq 1$ we let $G_m$ be the "$C$-graph" over $V(G)=m=\{0, \dots, m-1\}$ given by
	\[
		E(G_m) = \{i \re a j \mid i < j\} \cup \{j \re b i \mid i<j\}.
	\]
Graph $G_m$ is represented in Figure~\ref{fig:non-orderable-graph}.
Note that it indeed "satisfies" $W$.

\begin{figure}[h]
	\begin{center}
		\includegraphics[width = 0.53 \linewidth]{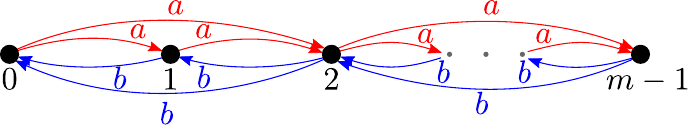}
	\end{center}
	\caption{The graph $G_m$ from the proof of Proposition~\ref{prop:embed-tree-not-graphs}. As often, some edges are omitted for clarity.}\label{fig:non-orderable-graph}
\end{figure}

Let $(U, \leq)$ be a "monotone" "$C$-graph" "satisfying" $W$ with a "morphism" $\phi : G_m \to U$; aiming for a contradiction, we assume that $U$ has width $< m$. 
Since $G_m$ has size $m$, $\phi(V(G_m))$ cannot be an "antichain", thus there are $0 \leq i<j \leq m-1$ such that $\phi(i)$ and $\phi(j)$ are "comparable" in $U$.

Assume first that  $\phi(i) \leq \phi(j)$.
Then we have in $U$ that
\[
	\phi(i) \re a \phi(j) \geq \phi(i)
\]
which gives $\phi(i) \re a \phi(i)$ by "monotonicity".
But then $\phi(i) \re a \phi(i) \re a \dots$ defines a path in $U$ with colouration $a^\omega$, which contradicts that $U$ "satisfies" $W$.
Then case $\phi(i) \geq \phi(j)$ is dealt with symmetrically by constructing a path of colouration $b^\omega \notin W$.	
\end{proof}

\section{Closure properties}
\label{sec:closure_properties}

\subsection{Lexicographical products}
\label{sec:lexicographic_products}

In this section, we prove that "lexicographic products@@objectives" of "objectives" are well-behaved with respect to "memory"; thus extending the result of~\cite{Ohlmann23UnivJournal} about "positionality".
We will only be working with "prefix-independent" "objectives", thus we adopt the definition of "universality for prefix-independent objectives" (see Section~\ref{sec:prefix_independent_objectives}).
\paragraph{Lexicographical products.}\label{sec:lexicographical_products}
We provide a study of "lexicographical products", as introduced by Ohlmann~\cite{Ohlmann23UnivJournal}, whose result we generalize to finite memory bounds.

\AP Given two "prefix-independent" "objectives" $W_1$ and $W_2$ over disjoint sets of colours $C_1$ and $C_2$, we define their ""lexicographical product@@objectives"" $W_1 \ltimes W_2$ over $C= C_1 \sqcup C_2$ by
\[ 
    W_1 \ltimes W_2 = \{w \in C^\omega \mid [w^2 \text{ is infinite and in } W_2] \tor [w^2 \text{ is finite and } w^1 \in W_1]\},
\]
where $w^1$ (resp. $w^2$) is the (finite or infinite) word obtained by restricting $w$ to occurrences of letters from $C_1$ (resp. $C_2$) in the same order.
Note that if $w^2$ is finite then $w^1$ is infinite, which is why the product is well defined.

Note that "lexicographical products@@objectives" are not commutative: informally, more importance is given to $W_2$ and to colours from $C_2$.
They are however associative.

\AP As a well-known example, the ""parity condition""
\[
    \{w \in [0,2h]^\omega \mid \limsup(w) \text{ is even}\},
\]
can be rewritten as a "lexicographical product@@objectives"
\[
    \TW(0) \ltimes \TL(1) \ltimes \TW(2) \ltimes \dots \ltimes \TL(2h-1) \ltimes \TW(2h),
\]
\AP where $\TW(c)$ and $\TL(c)$ are respectively the ""trivially winning"" and ""trivially losing""  "objectives" over $C=\{c\}$, that is
\[
\intro*\TW(c) = \{c^\omega\} \subseteq C^\omega  \tand \intro*\TL(c) = \emptyset \subseteq C^\omega.
\]

\AP Given two "partially ordered sets" $(U_1,\leq_1)$ and $(U_2,\leq_2)$, their ""lexicographical product@@po"" $\leq$ is defined over $U=U_1 \times U_2$ by
\[
    (u_1,u_2) \leq (u_1', u_2') \iff u_2 < u_2' \tor [u_2 = u_2' \tand u_1 \leq u_1'].
\]
If both $\leq_1$ and $\leq_2$ are "well-founded", then so is their "lexicographical product@@po" $\leq$.
The following simple property relates antichains in $\leq_1$ and $\leq_2$ to those in their "product@@po".

\begin{lem}\label{lem:antichains_lexico}
A set $A \subseteq U_1 \times U_2$ defines an "antichain" in $\leq$ if and only if its projection on $U_2$ is an "antichain" with respect to $\leq_2$ and for each fixed $u_2\in U_2$, $\{u_1 \mid (u_1,u_2) \in A\}$ is an "antichain" in $U_1$ with respect to $\leq_1$.
Thus, if $\mu_1$ and $\mu_2$ are upper bounds to the size of "antichains" in $\leq_1$ and $\leq_2$, then $\mu_1 \mu_2$ is an upper bound to the size of "antichains" in $\leq$. 
\end{lem}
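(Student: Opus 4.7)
The plan is to unwind the definition of the lexicographic product and handle the two directions of the equivalence by a straightforward case analysis on whether the second coordinates of two elements in $A$ agree or differ. The cardinality bound in the final sentence will then be deduced by slicing $A$ into its fibres above each value of the second coordinate.

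For the forward direction, I would take two distinct elements $(u_1,u_2),(u_1',u_2')\in A$ and consider two cases. If $u_2=u_2'$, then by the definition of $\leq$ we have $(u_1,u_2)\leq(u_1',u_2')\iff u_1\leq_1 u_1'$, so the antichain property of $A$ forces $u_1$ and $u_1'$ to be $\leq_1$-incomparable; this gives the second condition. If instead $u_2\neq u_2'$, suppose for contradiction that $u_2<_2 u_2'$; then the definition immediately yields $(u_1,u_2)<(u_1',u_2')$, contradicting the antichain property. So $u_2$ and $u_2'$ are $\leq_2$-incomparable, which gives the first condition (note that two elements of $A$ projecting to the same point in $U_2$ contribute only one element to the projection, so the projection is indeed an antichain).

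For the converse, I would again take two distinct elements $(u_1,u_2),(u_1',u_2')\in A$ and consider the same two cases. If $u_2=u_2'$, then $u_1\neq u_1'$ are $\leq_1$-incomparable by assumption, hence $(u_1,u_2)$ and $(u_1',u_2')$ are $\leq$-incomparable by the definition. If $u_2\neq u_2'$, then the projections are $\leq_2$-incomparable by assumption, hence neither $u_2<_2 u_2'$ nor $u_2'<_2 u_2$ holds, which rules out both $(u_1,u_2)\leq(u_1',u_2')$ and $(u_1',u_2')\leq(u_1,u_2)$.

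For the cardinality statement, I would decompose $A$ as the disjoint union $A=\bigsqcup_{u_2\in \pi_2(A)}A_{u_2}$, where $\pi_2:U_1\times U_2\to U_2$ is the projection and $A_{u_2}=\{(u_1,u_2)\in A\}$. The characterisation just proved gives $|\pi_2(A)|\leq\mu_2$ and $|A_{u_2}|\leq\mu_1$ for each $u_2$, so $|A|\leq\mu_1\mu_2$. No real obstacle arises here; the only subtlety worth mentioning is being careful that the projection is understood as a set (so collisions in the second coordinate, which are precisely handled by the fibre condition, do not cause double counting).
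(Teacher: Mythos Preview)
Your proof is correct. The paper does not actually prove this lemma; it is stated as a ``simple property'' and left without proof, so your argument fills in exactly the routine verification the authors omitted.
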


\AP We now define the ""lexicographical product@@graphs"" $(U,\leq)$ of two ordered "graphs" $(U_1,\leq_1)$ and $(U_2,\leq_2)$.
Intuitively, each vertex in $U_2$ is replaced by a copy of $U_1$ (see also Figure~\ref{fig:lexico}.

\begin{figure}[h]
	\begin{center}
		\includegraphics[width = \linewidth]{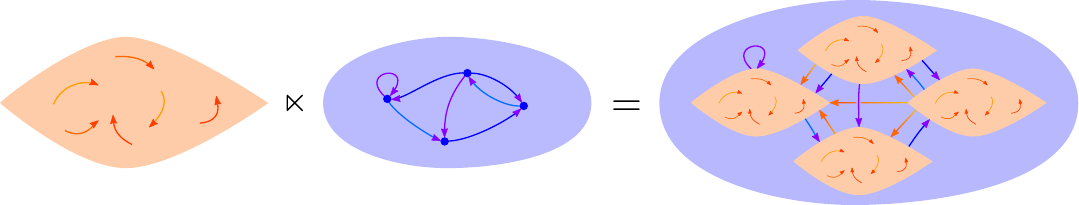}
	\end{center}
	\caption{Illustration of the lexicographical product of two ordered graphs.}\label{fig:lexico}
\end{figure}

Formally $U$ is defined over the "lexicographical product@@po" of $(\Verts{U_1},\leq_1)$ and $(\Verts{U_2},\leq_2)$, that is $V(U)=V(U_1) \times V(U_2)$ and $\leq$ is as above.
Its edges are:
\[
    \begin{array}{lcl}
    E(U) & = & \{(u_1, u_2) \re {c_1} (u_1',u_2') \mid c_1 \in C_1 \tand (u_2 >_2 u_2' \tor [u_2 = u_2' \tand u_1 \re {c_1} u_1' ]) \} \\
    && \cup \quad \{(u_1,u_2) \re {c_2} (u_1',u_2') \mid c_2 \in C_2 \tand u_2 \re {c_2} u_2'\}.
    \end{array}
\]

We denote this product by $U= U_1 \ltimes U_2$; it is very robust with respect to the notions under study.
\begin{lem}
If $(U_1,\leq_1)$ and $(U_2,\leq_2)$ are monotone, then so is their "lexicographical product@@graphs"~$U$.
\end{lem}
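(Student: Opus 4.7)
The plan is a direct case analysis on the colour $c$ of the edge to be derived and on the structure of the order $\leq$ on $U = U_1 \ltimes U_2$. So we start by fixing witnesses $(a_1,a_2) \geq (u_1,u_2) \re c (u_1',u_2') \geq (a_1',a_2')$ in $U$, and we aim to conclude that $(a_1,a_2) \re c (a_1',a_2')$ in $U$. Unpacking the definition of $\leq$ on the lexicographical product, the inequality $(a_1,a_2) \geq (u_1,u_2)$ gives in particular $a_2 \geq_2 u_2$, and similarly $u_2' \geq_2 a_2'$; these weaker consequences will suffice in most cases.

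The split is along which alphabet $c$ belongs to. If $c = c_2 \in C_2$, then $(u_1,u_2) \re {c_2} (u_1',u_2')$ in $U$ is exactly $u_2 \re {c_2} u_2'$ in $U_2$, so from $a_2 \geq_2 u_2 \re {c_2} u_2' \geq_2 a_2'$ the monotonicity of $U_2$ yields $a_2 \re{c_2} a_2'$, and thus $(a_1,a_2) \re{c_2} (a_1',a_2')$ in $U$. If $c = c_1 \in C_1$, we further split according to which of the two clauses produced the edge $(u_1,u_2) \re{c_1}(u_1',u_2')$: in the clause $u_2 >_2 u_2'$, the chain $a_2 \geq_2 u_2 >_2 u_2' \geq_2 a_2'$ immediately gives $a_2 >_2 a_2'$, which is a sufficient condition for a $c_1$-edge from $(a_1,a_2)$ to $(a_1',a_2')$; otherwise $u_2 = u_2'$ and $u_1 \re{c_1} u_1'$ in $U_1$, and either $a_2 >_2 a_2'$ (in which case we conclude as before), or $a_2 = u_2 = u_2' = a_2'$, which forces (from the full inequalities on $\leq$) $a_1 \geq_1 u_1$ and $u_1' \geq_1 a_1'$; monotonicity of $U_1$ then provides $a_1 \re{c_1} a_1'$, and hence the required edge in $U$.

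There is no real obstacle here beyond carefully following the two-layer definition of $\leq$ and of $E(U)$: the only subtlety is to remember that the definition of $\leq$ on a lexicographical product is asymmetric (it compares $u_2$'s first, then $u_1$'s only when the $u_2$'s are equal), so that one must distinguish the strict-on-second-coordinate case from the equal-on-second-coordinate case. Once this is made explicit, monotonicity of $U$ reduces to monotonicity of $U_1$ (in the fibre case) or of $U_2$ (in all other cases), and the proof is complete.
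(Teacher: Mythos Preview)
Your proof is correct and follows essentially the same approach as the paper: the same outer case split on whether $c \in C_1$ or $c \in C_2$, and in the $C_1$ case the same inner split on whether $u_2 >_2 u_2'$ or $u_2 = u_2'$ (with the further subcase on $a_2 >_2 a_2'$ versus $a_2 = a_2'$). The only differences are notational.
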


\begin{proof}
    Let 
    \[
    (v_1,v_2) \geq (u_1,u_2) \re c (u_1',u_2') \geq (v_1',v_2')
    \]
    in $U$. There are two cases.
    \begin{itemize}
        \item If $c \in C_1$, then again there are two cases.
        \begin{itemize}
            \item If $u_2 > u_2'$, then we have $v_2 \geq u_2 > u_2' \geq v_2'$ which concludes.
            \item Otherwise we have $u_2 = u_2'$ and $u_1 \re c u_1'$.
            If $v_2 > v_2'$ we conclude immediately.
            Otherwise, we have $v_2 = v_2'$ and $v_1 \geq u_1 \re c u_1' \geq v_1'$ and the result follows from monotonicity in $U_1$.
        \end{itemize}
        \item If $c \in C_2$, then by definition, $u_2 \re c u_2'$.
        Since moreover it holds that $v_2 \geq u_2$ and $u_2' \geq v_2'$, we conclude thanks to monotonicity in $U_2$. \qedhere
    \end{itemize}
\end{proof}

We may now state our main result in this section, which is a direct extension of \cite[Theorem~5.2]{Ohlmann23UnivJournal}.

\begin{thm}\label{thm:lexico}
Let $W_1$ and $W_2$ be two "prefix-independent" "objectives" over disjoint sets of colours $C_1$ and $C_2$.
Let $\kappa$ be a "cardinal" and let $(U_1,\leq)$ and $(U_2,\leq)$ be "monotone" graphs which are respectively $(\kappa,W_1)$ and "$(\kappa,W_2)$-universal@@prefixIndependent".
Then $U_1 \ltimes U_2$ is "monotone" and "$(\kappa, W_1 \ltimes W_2)$-universal@@prefixIndependent".
\end{thm}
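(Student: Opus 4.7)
Monotonicity is handled by the lemma immediately preceding the theorem, so I focus on $(\kappa, W_1 \ltimes W_2)$-universality for prefix-independent objectives, which has two aspects: $U_1 \ltimes U_2$ must satisfy $W_1 \ltimes W_2$, and every "pretree" $T$ of cardinality $<\kappa$ satisfying $W_1 \ltimes W_2$ must admit a "morphism" to $U_1 \ltimes U_2$.

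For satisfaction, take an infinite "path" $\pi$ in $U_1 \ltimes U_2$ and split on whether it contains infinitely many $C_2$-coloured edges. If it does, reading off $\Verts{U_2}$-coordinates at the $C_2$-steps and using monotonicity of $U_2$ to absorb the intermediate weakly-decreasing $\Verts{U_2}$-jumps induced by $C_1$-edges yields an infinite path in $U_2$ whose colouration is the $C_2$-projection of $\pi$ and hence lies in $W_2$ by satisfaction of $U_2$. Otherwise, by prefix-independence I may assume $\pi$ uses only $C_1$-edges; then its $\Verts{U_2}$-coordinates weakly decrease, and well-foundedness of $U_2$ (which I take as implicit, matching the use of "well-monotone" universal graphs throughout the paper) forces them to stabilise eventually at some $u_2^\star$. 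From that point, $\pi$ lives in the copy $\Verts{U_1} \times \{u_2^\star\}$ of $U_1$, so its colouration is in $W_1$, and $\pi \in W_1 \ltimes W_2$ by prefix-independence.

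For the embedding, I decompose $T$ along its $C_2$-edges. For $t \in \Verts{T}$, let $a(t)$ denote the last vertex reached via a $C_2$-edge along the unique path from the root $t_0$ to $t$ (with $a(t_0) = t_0$), set $S = \{s \in \Verts{T} \mid a(s) = s\}$, and let $R_s = a^{-1}(s)$. Each $R_s$ is a $C_1$-subpretree of $T$ rooted at $s$, since any $C_2$-edge would exit $R_s$; observe also that $a$ is constant along $C_1$-edges, while a $C_2$-edge $t \re{c_2} t'$ satisfies $a(t') = t' \in S$. Form the $C_2$-pretree $T^{(2)}$ on $S$ with an edge $s \re{c_2} s'$ whenever some $C_2$-edge $t \re{c_2} s'$ of $T$ has $t \in R_s$. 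Infinite paths in $T^{(2)}$ lift, through $C_1$-subpaths inside the $R_s$'s, to infinite paths of $T$ with the same $C_2$-projection, which therefore lies in $W_2$ by assumption on $T$; hence $T^{(2)}$ satisfies $W_2$. Similarly, infinite paths inside any $R_s$ are $C_1$-only paths of $T$ whose colouration equals their $C_1$-projection and lies in $W_1$; hence $R_s$ satisfies $W_1$. Applying universality of $U_2$ and $U_1$ yields morphisms $\phi_2 \colon T^{(2)} \to U_2$ and $\phi_1^{(s)} \colon R_s \to U_1$ for each $s \in S$. I then define $\phi \colon T \to U_1 \ltimes U_2$ by $\phi(t) = (\phi_1^{(a(t))}(t),\, \phi_2(a(t)))$. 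Verification reduces to the two cases of the edge definition in $U_1 \ltimes U_2$: a $C_1$-edge $t \re{c_1} t'$ has $a(t) = a(t')$, so the $\Verts{U_2}$-coordinate is constant and the required edge follows from $\phi_1^{(a(t))}$ being a morphism; a $C_2$-edge $t \re{c_2} t'$ induces the edge $a(t) \re{c_2} t'$ in $T^{(2)}$, which $\phi_2$ sends to the required $C_2$-edge in $U_2$.

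The main obstacle is arranging the decomposition so that $T^{(2)}$ and every $R_s$ are pretrees satisfying the expected objective, since this is what enables the use of universality. This is precisely where the strengthening to prefix-independent objectives (embedding pretrees, not only trees) is essential: the $R_s$'s typically contain sinks, namely vertices whose outgoing edges in $T$ are all $C_2$-coloured, and cannot be converted to trees without extending their colourations in an objective-dependent way.
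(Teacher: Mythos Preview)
Your proof is correct and follows essentially the same approach as the paper: the satisfaction argument is identical (including the use of well-foundedness of $\leq_2$, which the paper also uses without stating it in the theorem hypotheses), and your embedding argument via the $C_2$-contracted pretree $T^{(2)}$ and the $C_1$-subpretrees $R_s$ matches the paper's construction. The only cosmetic difference is that the paper groups the $R_s$'s by the value $\phi_2(s)\in V(U_2)$ and applies $(\kappa,W_1)$-universality once per such value, whereas you apply it once per $s\in S$; both choices yield the same morphism $\phi$.
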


Combining with Theorems~\ref{thm:characterisation_eps} and~\ref{thm:implication_non_eps} together with Proposition~\ref{prop:non_eps-implies-eps} and Lemma~\ref{lem:antichains_lexico}, we get the following result.

\begin{cor}\label{cor:closure_lexico}
    Let $W_1$ and $W_2$ be two "prefix-independent" "objectives" over disjoint sets of colours $C_1$ and $C_2$, and assume that $W_1$ (resp. $W_2$) has "$\eps$-memory" $\leq n_1 \in \N$ (resp. $\leq n_2$).
    Then, their "lexicographical product@@objectives" $W_1 \ltimes W_2$ has "$\eps$-memory" $\leq n_1 n_2$.
\end{cor}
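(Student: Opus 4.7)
The plan is to assemble the pieces already laid out in the paper. I would first convert the "$\eps$-memory" hypotheses on $W_1$ and $W_2$ into "universal" structures, combine them via the "lexicographical product@@graphs" construction of Theorem~\ref{thm:lexico}, and finally translate the resulting "width" bound back into an "$\eps$-memory" statement for $W_1 \ltimes W_2$.

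In more detail, fix a "cardinal" $\kappa$. Applying the backward implication of Theorem~\ref{thm:characterisation_eps} to each $W_i$ produces an "$\eps$-separated" "well-monotone" "$(\kappa ,W^\eps )$-universal" graph of "breadth" $\leq n_i$. Stripping the $\eps$-edges, as per the remark at the end of Section~\ref{sec:preliminaries}, yields a "well-monotone" "$(\kappa ,W)$-universal" graph $U_i$ of "width" $\leq n_i$. Since $W_i$ is "prefix-independent", the equivalence established in Section~\ref{sec:prefix_independent_objectives} lets us view $U_i$ as "$(\kappa ,W)$-universal@@prefixIndependent", which is the precise form required by Theorem~\ref{thm:lexico}.

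Theorem~\ref{thm:lexico} then gives that $U_1 \ltimes U_2$ is "monotone" and "$(\kappa , W_1 \ltimes W_2)$-universal@@prefixIndependent". Since the "lexicographical product@@po" of two "well-founded" orders is "well-founded", $U_1 \ltimes U_2$ is in fact "well-monotone", and by Lemma~\ref{lem:antichains_lexico} its "antichains" have size at most $n_1 n_2$. As $W_1 \ltimes W_2$ is itself "prefix-independent", being "$(\kappa ,W)$-universal@@prefixIndependent" is equivalent to being "$(\kappa ,W)$-universal" in the general sense, which is what the hypothesis of Proposition~\ref{prop:non_eps-implies-eps} asks for.

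To conclude, I would invoke Proposition~\ref{prop:non_eps-implies-eps} with $m = n_1 n_2$: this bound is finite, the value space $\{\bot, \top\}$ of the "objective" $W_1 \ltimes W_2$ is trivially "well-founded", and for every "cardinal" $\kappa$ we have just produced a "well-monotone" "$(\kappa ,W)$-universal" graph of "width" $\leq m$. The proposition then delivers the desired "$\eps$-memory" bound $\leq n_1 n_2$ for $W_1 \ltimes W_2$. There is no real technical obstacle in this argument; the only care required is to keep track of which flavour of universality (general, for prefix-increasing objectives, for prefix-independent objectives) is in play at each step, and these coincide throughout thanks to the "prefix-independence" of $W_1$, $W_2$, and $W_1 \ltimes W_2$.
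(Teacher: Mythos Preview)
Your proof is correct and follows the same route as the paper, which simply states that the corollary follows from combining Theorem~\ref{thm:lexico} with Theorems~\ref{thm:characterisation_eps} and~\ref{thm:implication_non_eps}, Proposition~\ref{prop:non_eps-implies-eps}, and Lemma~\ref{lem:antichains_lexico}. One small inaccuracy: the prefix-independent and general notions of universality are not literally equivalent for the \emph{same} graph $U$; to pass from the former back to the latter you may need to adjoin a $\top$ vertex as in Section~\ref{sec:prefix_increasing_objectives}, but this leaves the width bound intact and does not affect the argument.
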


\paragraph{Products with trivial conditions.}
Before moving on to its proof, we discuss a basic but interesting application of Corollary~\ref{cor:closure_lexico}, namely, that products with "trivial conditions" preserve $\eps$-memory.
Let $W \subseteq C^\omega$ be an objective with finite $\eps$-memory $\leq m$, let $a \notin C$ and denote $C^a=C \sqcup\{a\}$.
Consider the four conditions $W_1,W_2,W_3$ and $W_4$ over $C^a$ defined by
\[
    \begin{array}{lclcl}
        W_1 &=& W \ltimes \TL(a) &=& \{w \in (C^a)^\omega \mid |w_a| < \infty \tand w_{C} \in W\} \\
        W_2 &=& W \ltimes \TW(a) &=& \{w \in (C^a)^\omega \mid |w_a| = \infty \tor w_{C} \in W\}\\
        W_3 &=& \TL(a) \ltimes W &=& \{w \in (C^a)^\omega \mid w_{C} = \infty \tand w_{C} \in W\} \\
        W_4 &=& \TW(a) \ltimes W &=& \{w \in (C^a)^\omega \mid w_{C} < \infty \tor w_{C} \in W\}.
    \end{array}
\]
By Corollary~\ref{cor:closure_lexico}, since $\TL(a)$ and $\TW(a)$ are "positional", each of these four objectives has "$\eps$-memory" $\leq m$.
The first two objectives have sometimes been called respectively $W \land \mathrm{CoBuchi}(a)$ and $W \lor \mathrm{Buchi}(a)$ in the literature, and it was known from the work of Kopcy\'nski~\cite{Kop08Thesis} that these operations preserve "positionality".
However, the stronger result we establish (preservation of "$\eps$-memory") is new, as far as we are aware.

\paragraph{Proof of Theorem~\ref{thm:lexico}.}
The proof of Theorem~\ref{thm:lexico} is similar to that of Ohlmann~\cite[Theorem 5.2]{Ohlmann23UnivJournal}, we give full details for completeness.
The remainder of the section is devoted to the proof.

Fix $W_1,W_2,C_1,C_2,\kappa,(U_1,\leq_1)$ and $(U_2,\leq_2)$ as in the statement of Theorem~\ref{thm:lexico}, and let $C = C_1 \sqcup C_2$, $U=U_1 \ltimes U_2$ and $W=W_1 \ltimes W_2$.
We need to show that $U$ "satisfies" $W$, and that it embeds all "pretrees" of cardinality $< \kappa$ which "satisfy" $W$.

\begin{clm}
The graph $U$ "satisfies" $W$.
\end{clm}

\begin{claimproof}
Consider an infinite path
\[
    \pi = u^0 \re {c^0} u^1 \re {c^1} \dots
\]
in $U$, and for each $i$ let us denote $u^i=(u^i_1,u^i_2)$.
Assume first that there are only finitely many $c^i$'s which belong to $C_2$, and let $i_0$ be such that $c^i \in C_1$ for all $i \geq i_0$.

Then by definition of $C_1$-edges in $U$, it holds that 
\[
    u^{i_0}_2 \geq_2 u^{i_0+1}_2 \geq_2 u^{i_0+2}_2 \geq_2 \dots.
\]
Thus by "well-foundedness" of $\leq_2$, the $u^{i_0+i}_2$ are constant after some point, say for $i \geq i_1$.
Thus it holds that
\[
    u^{i_1}_1 \re{c^{i_1}} u^{i_1+1}_1 \re{c^{i_1+1}} \dots
\]
is a path in $U_1$, and therefore $c^{i_1} c^{i_1+1} \dots \in W_1$.
We conclude in this case that $\pi$ indeed "satisfies" $W$ by "prefix-independence".

Hence we now assume that there are infinitely many $c^i$'s which belong to $C_2$, and let $i_0 < i_1 < \dots$ denote exactly these occurrences.
Then we have for all $j$ that all $c^i$'s with $i \in [i_j +1, i_{j+1}-1]$ belong to $C_1$ and thus by definition of $U$ it holds that $u_2^{i_j+1} \geq u_2^{i_{j+1}}$.
Hence we have in $U_2$
\[
    u_2^{i_0} \re{c^{i_0}} u_2^{i^0+1} \geq_2 u_2^{i^1} \re{c^{i_1}} u_2^{i_1+1} \geq_2 \dots, 
\]
and thus by "monotonicity" of $U_2$,
\[
    u_2^{i_0} \re{c^{i_0}} u_2^{i^1} \re{c^{i_1}} \dots
\]
is a path in $U_2$.
Since $U_2$ "satisfies" $W_2$, we conclude that $\pi$ "satisfies" $W$.
\end{claimproof}

We now show that $U$ embeds all "$C$-pretrees" of cardinality $<\kappa$ which "satisfy" $W$; let $T$ be such a "pretree", and let $t_0$ denote its "root".
Let us partition $V(T)$ according to colours of incoming edges, that is, we let
\[
    \begin{aligned}
    V_2 = \{t_0\} \cup \{t \in V(T) \mid \exists t' \in V(T) \tand c_2 \in C_2 \text{ with } t' \re{c_2} t \in E(T)\} \\ \tand \quad V_1 = \{t \in V(T) \mid \exists t' \in V(T) \tand c_1 \in C_1 \text{ with } t' \re{c_1} t \tin E(T)\}.
    \end{aligned}
\]
Note that indeed we have $V(T) = V_1 \sqcup V_2$.
For each $t \in V(T)$, we moreover define the $V_2$-ancestor of $t$, denoted $\anc(t) \in V_2$, to be the closest ancestor of $t$ belonging to $V_2$, that is, the unique $t' \in V_2$ with a path of $C_1$-edges towards $t$ in $T$; note that for $t \in V_2$ we have $\anc(t)=t$.

We now define a "$C_2$-pretree" $T_2$ rooted at $t_0$ by contracting the $C_1$-edges in $T$, formally we let $V(T_2) = V_2$ and
\[
    E(T_2) = \{t \re {c_2} t' \mid t \rp{C_1^*c_2} t'\}.
\]

\begin{clm}
The $C_2$-"pretree" $T_2$ "satisfies" $W_2$.
\end{clm}

\begin{claimproof}
Let $\pi = t_0 \re {c^0} t_1 \re {c^1} \dots$ be an infinite path in $T_2$; note that the $c^i$'s belong to $C_2$.
Then by definition of $T_2$ we have an infinite path of the form
\[
    \pi':t_0 \re {C_1^* c^0} t_1 \re{C_1^* c^1} \dots
\]
in $T$.
Since $T$ "satisfies" $W$ and $\pi'$ has infinitely many occurrences of colours in $C_2$ (namely, exactly the $c^i$'s), we get that $c^0 c^1 \dots \in W_2$ thus $\pi$ "satisfies" $W_2$. 
\end{claimproof}

Since moreover $|T_2| \leq |T| < \kappa$, there is a "morphism" $\phi_2 : T_2 \to U_2$.
We now partition $V(T)$ according to which element of $U_2$ is assigned to the $2$-ancestor of each vertex, formally for each $u_2 \in U_2$, we define
\[
    V^{u_2} = \{t \in V(T) \mid \phi_2(\anc(t)) = u_2\}.
\]
Note that some $V^{u_2}$'s may be empty, that they partition $V(T)$, and that for each $t \in V_2$ we have $t \in V^{\phi_2(t)}$ since $\anc(t)=t$.

For each $u_2 \in U_2$, we define $T_1^{u_2}$ to be the restriction of $T$ to vertices in $V^{u_2}$ and to $C_1$-edges.
Note that $T_1^{u_2}$ is a disjoint union of "pretrees" (as is any restriction of a "pretree"), and that it has colours in $C_1$.

\begin{clm}
For any $u_2 \in U_2$, it holds that $T_1^{u_2}$ "satisfies" $W_1$.
\end{clm}

\begin{claimproof}
Since $T_1^{u_2}$ is a restriction of $T$, any path in $T_1^{u_2}$ is also a path in $T$; the result follows because $T$ "satisfies" $W$ and $W \cap C_1^\omega \subseteq W_1$ (this is actually an equality).
\end{claimproof}

Since moreover $|T_1^{u_2}| \leq |T| < \kappa$, there exists, for each $u_2 \in U_2$, a "morphism" $\phi_1: T_1^{u_2} \to U_1$.
We are finally ready to define $\phi : V(T) \to U$ to be given by
\[
    \phi(t) = (\phi_1^{u_2}(t), u_2),
\]
where $u_2$ is such that $t \in V_{u_2}$ (that is, $u_2 = \phi_2(\anc(t))$).
The following claim concludes the proof of Theorem~\ref{thm:lexico}.

\begin{clm}
The map $\phi: T \to U$ defines a "morphism".
\end{clm}

\begin{claimproof}
We should check that any edge in $T$ is mapped to an edge in $U$; there are two cases depending on the colour of the edge.
\begin{itemize}
\item Consider an edge $t \re {c_1} t' \in E(T)$ with $c_1 \in C_1$.
Then $t$ and $t'$ have the same $2$-ancestor, and therefore $t \re {c_1} t'$ is an edge in $T_1^{u_2}$ for $u_2=\phi_2(\anc(t))$.
Since $\phi_1^{u_2} : T_1^{u_2} \to U_1$ is a "morphism", it follows that $\phi_1^{u_2}(t) \re{c_1} \phi_1^{u_2}(t') \in E(U_1)$.
Hence by definition of $U$, it indeed holds that $\phi(t) \re {c_1} \phi(t') = (\phi_1^{u_2}(t), u_2) \re {c_1} (\phi_1^{u_2}(t),u_2) \in E(U)$.
\item Consider now an edge $t \re {c_2} t' \in E(T)$ with $c_2 \in C_2$.
Then $t' \in V_2$.
Let $t_0 = \anc(v)$, and observe that $t_0 \re {c_2} t' \in E(T_2)$.
Thus since $\phi_2 : T_2 \to U_2$ is a "morphism", it follows that $\phi_2(t_0) \re {c_2} \phi_2(t') \in E(U_2)$ thus by definition of $U$ we get that $\phi(t) \re {c_2} \phi(t') = (u_1,\phi_2(t_0)) \re {c_2} (u_1',\phi_2(t)) \in E(U)$(regardless of $u_1$ and $u_1'$).
\end{itemize}
This concludes the proof.
\end{claimproof}

\subsection{Combining objectives with locally finite memory}
\label{sec:combinations}

\AP In this section, we investigate properties of "objectives" which have "$\eps$-free" "memory" $< \alephno$.
This means that for any "$\eps$-free" "game" there is an optimal "strategy" $\S$ such that for all vertices $v$, the amount of "memory@@strategy" used at $v$ (that is, the "cardinality" of $\pi_\S^{-1}(v)$) is finite; however it may be that there is no uniform finite bound on the $|\pi_\S^{-1}(v)|$'s, even when the "game" is fixed. 
We call this property ""locally finite memory"".
An example is discussed in Figure~\ref{fig:product_energy}.

We remark that this notion is only interesting in the case of "$\eps$-free memory", and not in that of "$\eps$-memory". By Proposition~\ref{prop:non-strict-ineq-for-eps-memory}, if the "$\eps$-memory" of an objective is $< \alephno$, then it is $\leq n$ for some $n\in \NN$.

\begin{figure}[h]
\begin{center}
\includegraphics[width=0.75\linewidth]{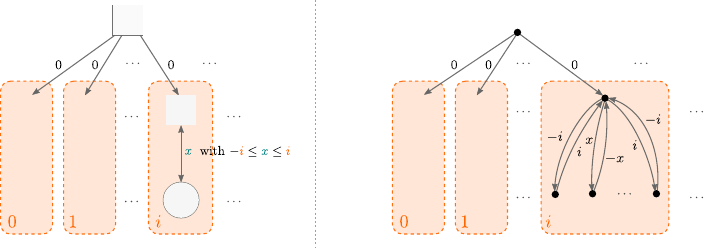}
\end{center}
\caption{A "game" where initially, "Adam" chooses an upper bound $i$, then the players alternate in choosing integers in $[-i,i]$. "Eve" "wins" if the partial sums of the weights remain bounded both from above and below (bi-boundedness objective). She can ensure a win by simply playing the opposite of "Adam" in each round (this "strategy" is represented on the right-hand side), which requires unbounded but locally finite memory.
Since bi-boundedness objectives are intersections of two "positional objectives" (being bounded from above and from below), our results in this section ensure that any "game" with a bi-boundedness objective has optimal "locally finite memory" "strategies".
}\label{fig:product_energy}
\end{figure}

Note that, when applied to $\mu=\alephno$, since "well-founded" "orders" with bounded "antichains" correspond to "well-quasi-orders" ("wqo's"), Theorem~\ref{thm:implication_non_eps} states that the existence of "universal" "monotone" graphs which are "wqo's" for a given "objective" (or even, a "valuation") entails "locally finite memory".
Unfortunately this is not a characterisation: Proposition~\ref{prop:eps-memory-greater-than-memory} applied to $\mu=\alephno $ gives an "objective" with "$\eps$-free memory" $2$ but which does not admit such universal structures.

Still, by combining our knowledge so far with a few additional insights stated below, we may derive some strong closure properties pertaining to this class of "objectives".
\AP In the sequel, we will simply say ""monotone wqo"" for a "well-monotone" graph whose "antichains" are finite.

\AP Given two "partially ordered sets" $(U_1,\leq_1)$ and $(U_2,\leq_2)$, we define their ""(direct) product@@po"" to be the "partially ordered set" $(U_1 \times U_2, \leq)$, where
\[
    (u_1,u_2) \leq (u_1',u_2') \iff [u_1 \leq u_1'] \tand [u_2 \leq u_2'].
\]
Note that if $\leq_1$ and $\leq_2$ are "well-founded", then so is $\leq$.
However, there may be considerable blowup on the size of "antichains", for instance, $\omega \times \omega$ has arbitrarily large antichains whereas $\omega$ is a "total" "order".
However, it is a well-known fact from the theory of "wqo's" (see for instance~\cite{NotesWQO}) that, assuming "well-foundedness", one may not go from finite to infinite "antichains".

\begin{lem}[Folklore]\label{lem:dickson}
If $(U_1,\leq_1)$ and $(U_2,\leq_2)$ are "wqo's", then so is their "product@@po".
\end{lem}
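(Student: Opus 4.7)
The plan is to deduce the result from the standard sequential characterization of wqo's: a "well-founded" partial order $(P, \leq)$ has no infinite "antichain" if and only if every infinite sequence $(x_n)_{n \in \NN}$ of elements of $P$ admits indices $i < j$ with $x_i \leq x_j$; equivalently, every infinite sequence admits an infinite non-decreasing subsequence. This equivalence, classical in the theory of wqo's (see e.g.~\cite{NotesWQO}), follows from a Ramsey-style extraction argument applied to a putative bad sequence: color each pair $\{i,j\}$ with $i<j$ by whether $x_i$ and $x_j$ are incomparable or whether $x_j < x_i$; an infinite monochromatic set in the first color yields an infinite antichain, while one in the second color yields an infinite strictly descending chain, contradicting either condition in the definition.

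Given this characterization, the proof of the lemma is short. Take an arbitrary infinite sequence $((a_n, b_n))_{n \in \NN}$ in $U_1 \times U_2$. Since $U_1$ is a wqo, one may extract an infinite subsequence $(a_{n_k})_{k \in \NN}$ that is non-decreasing in $\leq_1$. Applying the characterization to the sequence $(b_{n_k})_{k \in \NN}$ in the wqo $U_2$ yields indices $k < \ell$ such that $b_{n_k} \leq_2 b_{n_\ell}$; combined with $a_{n_k} \leq_1 a_{n_\ell}$, this gives $(a_{n_k}, b_{n_k}) \leq (a_{n_\ell}, b_{n_\ell})$ in the product order. Invoking the characterization once more (in the other direction), we conclude that $(U_1 \times U_2, \leq)$ has no infinite antichain; combined with the already-noted well-foundedness of the product, this establishes that it is a wqo.

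The only step requiring any care is the equivalence between the paper's definition of wqo (well-foundedness plus finite antichains) and the sequential characterization used above. Since this is a classical folklore fact, the expected strategy is to cite it rather than reprove it. Note that attempting to argue the lemma directly—assuming an infinite antichain $((a_n,b_n))_{n \in \NN}$ in the product and seeking a contradiction—requires essentially the same extraction from $(a_n)_{n \in \NN}$ followed by an extraction from the corresponding second coordinates, so routing through the sequential characterization is both the most economical and the most transparent presentation.
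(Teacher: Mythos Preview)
Your proof is correct and follows the standard route via the sequential characterisation of wqo's. The paper does not give a proof of this lemma at all: it labels it ``folklore'', refers the reader to~\cite{NotesWQO}, and moves on; moreover, the equivalence you invoke (wqo $\iff$ every infinite sequence contains an increasing pair) is already recorded in the paper's appendix as part of the definition, so your argument is fully grounded in the paper's setup.
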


\AP Given two "partially ordered" "$C$-graphs" $(G_1,\leq_1)$ and $(G_2,\leq_2)$, we define their ""(direct) product@@graphs"" to be the "partially ordered" "$C$-graph" $G$ defined over the "product@@po" of $(\Verts{G_1},\leq_1)$ and $(\Verts{G_2},\leq_2)$ by
\[
    E(G) = \{(v_1,v_2) \re c (v_1',v_2') \mid v_1 \re c v_1' \in E(G_1) \tand v_2 \re c v_2' \in E(G_2)\}.
\]
Note that if $(G_1,\leq_1)$ and $(G_2,\leq_2)$ are "monotone", then so is their "product@@graphs".
Therefore, if $(G_1,\leq_1)$ and $(G_2,\leq_2)$ are "monotone wqo's", then so is their product.
Our discussion hinges on the following simple result.

\begin{lem}\label{lem:universality_product}
Let $\kappa$ be a "cardinal", and $W_1,W_2 \subseteq C^\omega$ be two "objectives".
Let $(U_1,\leq_1)$ and $(U_2,\leq_2)$ be two $C$-graphs which are $(\kappa,W_1)$ and "$(\kappa,W_2)$-universal", respectively.
Then their "product@@graphs" $U$ is "$(\kappa, W_1 \cap W_2)$-universal".
\end{lem}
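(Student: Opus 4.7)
My plan is to combine the two given universal morphisms diagonally. Fix a "$C$-tree" $T$ of cardinality $<\kappa$ with "root" $t_0$. By "$(\kappa,W_1)$-universality" of $U_1$ and $(\kappa,W_2)$-universality of $U_2$, I obtain "morphisms" $\phi_1: T \to U_1$ and $\phi_2: T \to U_2$, each "preserving the value" at $t_0$ with respect to its respective "objective". The candidate morphism is then $\phi : T \to U_1 \times U_2$ defined by $\phi(t) = (\phi_1(t), \phi_2(t))$.

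The first thing to verify is that $\phi$ is indeed a "morphism" into the "product@@graphs" $U$. Given an edge $t \re c t' \in \Edges{T}$, since $\phi_1$ and $\phi_2$ are "morphisms" we have both $\phi_1(t) \re c \phi_1(t') \in \Edges{U_1}$ and $\phi_2(t) \re c \phi_2(t') \in \Edges{U_2}$. By the very definition of $E(U_1 \times U_2)$ this is precisely the condition for $\phi(t) \re c \phi(t') \in \Edges{U}$.

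The second thing to verify is value preservation at $t_0$. Since $W_1 \cap W_2 \subseteq W_i$ for $i=1,2$, if $t_0$ "satisfies" $W_1 \cap W_2$ in $T$ then it also satisfies each $W_i$ in $T$, and so by the value-preservation of $\phi_i$ we have that $\phi_i(t_0)$ "satisfies" $W_i$ in $U_i$. The key observation, again immediate from the definition of edges in the "product@@graphs", is that any infinite "path" $\pi$ from $\phi(t_0) = (\phi_1(t_0), \phi_2(t_0))$ in $U$ projects to infinite "paths" $\pi_1, \pi_2$ from $\phi_1(t_0)$ and $\phi_2(t_0)$ in $U_1$ and $U_2$, all three carrying the same sequence of colours. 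Since $\pi_i$ satisfies $W_i$ for each $i$, the colour sequence of $\pi$ belongs to $W_1 \cap W_2$, hence $\phi(t_0)$ "satisfies" $W_1 \cap W_2$ in $U$, as required. The case where $t_0$ fails to satisfy $W_1 \cap W_2$ is vacuous.

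There is no genuine obstacle here: the argument reduces to the fact that the "product@@graphs" construction was set up so that edges (and therefore "paths" and their colour sequences) in $U_1 \times U_2$ are exactly pairs of matching-colour edges (resp.\ paths) in the factors, which makes both the morphism check and the projection-of-paths argument one-liners.
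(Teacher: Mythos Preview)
Your proof is correct and follows essentially the same approach as the paper's own proof: both take the diagonal map $\phi=(\phi_1,\phi_2)$, verify it is a morphism via the definition of product edges, and establish value preservation by projecting paths in $U$ to paths in the factors. Your write-up is slightly more explicit about the case split on whether $t_0$ satisfies $W_1\cap W_2$, but the argument is the same.
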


\begin{proof}
Let $T$ be a "tree" with cardinality $< \kappa$, by assumption there exist two "morphisms" $T \re {\phi_1} U_1$ and $T \re {\phi_2} U_2$ which "preserve the value" at the root $t_0$.
We prove that $\phi=(\phi_1,\phi_2) : t \mapsto (\phi_1(t),\phi_2(t)) \in V(U)$ defines a "morphism" from $T$ to $U$ which "preserves the value" at $t_0$.

Let $t \re c t' \in E(T)$, then for both $i \in \{1,2\}$ since $\phi_i$ is a "morphism" it holds that $\phi_i(t) \re c \phi_i(t') \in E(U_i)$ and therefore by definition of $U$, $\phi(t) \re c \phi(t') \in E(U)$ thus $\phi$ is a "morphism".
Now any path from $\phi(t_0)$ in $U$ projects to a path from $\phi_1(t_0)$ in $U_1$, and to a path from $\phi_2(t_0)$ in $U_2$.
Thus since $\phi_1$ and $\phi_2$ "preserve the value" at $t_0$ then so does $\phi$.
\end{proof}

Therefore, by combining Lemma~\ref{lem:dickson} with the above one, we obtain that if two "objectives" $W_1$ and $W_2$ have "monotone wqo's" as "universal" "graphs", then so does their intersection, hence from Theorem~\ref{thm:implication_non_eps}, $W_1 \cap W_2$ has "locally finite memory".
In particular, thanks to Theorem~\ref{thm:characterisation_eps}, we get the following weak closure property.

\begin{cor}\label{cor:intersection-finite-memory}
Let $W_1$ and $W_2$ be two "objectives" which have "monotone wqo's" as "universal" "graphs". Then so does $W_1 \cap W_2$. In particular the intersection of two "objectives" with finite "$\eps$-memory" has "locally finite memory".
\end{cor}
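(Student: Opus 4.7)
The plan is to assemble the three ingredients already developed in this subsection: Lemma~\ref{lem:dickson} (direct products of well-founded partially ordered sets with finite antichains remain so), Lemma~\ref{lem:universality_product} (direct products of universal graphs yield universal graphs for the intersection), and Theorems~\ref{thm:characterisation_eps} and~\ref{thm:implication_non_eps} (to pass between finite $\eps$-memory and existence of suitable universal structures).

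For the first statement, fix an arbitrary cardinal $\kappa$. By assumption, we pick a monotone wqo $(U_1,\leq_1)$ which is $(\kappa,W_1)$-universal, and similarly $(U_2,\leq_2)$ which is $(\kappa,W_2)$-universal. We form their direct product $(U,\leq)$ as defined just before the corollary. As already observed there, $(U,\leq)$ is a monotone $C$-graph; and combining well-foundedness of each $\leq_i$ with Lemma~\ref{lem:dickson} shows that $(U,\leq)$ is also a wqo, hence a monotone wqo. By Lemma~\ref{lem:universality_product}, $U$ is $(\kappa, W_1 \cap W_2)$-universal. Since $\kappa$ was arbitrary, $W_1 \cap W_2$ admits monotone wqo universal graphs, as required.

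For the ``in particular'' clause, assume each $W_i$ has finite $\eps$-memory, say $\leq m_i$. By the converse implication in Theorem~\ref{thm:characterisation_eps}, for each cardinal $\kappa$ there exists an $\eps$-separated well-monotone $(\kappa, W_i^\eps)$-universal graph of breadth $\leq m_i$. As recorded in the remark at the end of Section~\ref{sec:preliminaries}, removing the $\eps$-edges from such a graph yields a well-monotone $(\kappa, W_i)$-universal graph of width $\leq m_i$; since $m_i$ is finite, this is a monotone wqo. Thus the hypotheses of the first part are met for both $W_1$ and $W_2$, so we obtain a monotone wqo $(\kappa, W_1 \cap W_2)$-universal graph for every $\kappa$. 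In particular these graphs have width $<\alephno$, so Theorem~\ref{thm:implication_non_eps} applied with $\mu = \alephno$ yields that $W_1 \cap W_2$ has $\eps$-free memory $<\alephno$, which is exactly locally finite memory.

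There is no real obstacle here: once the right notion of direct product at the level of ordered graphs is in place (already done above the statement), the argument is a routine assembly. The only care required is in the bookkeeping between \emph{width}, \emph{breadth}, and the specialisation $\mu=\alephno$, where ``well-founded with finite antichains'' is precisely the wqo condition invoked in our main theorems.
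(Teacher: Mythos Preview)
Your proposal is correct and follows essentially the same approach as the paper: combine Lemma~\ref{lem:dickson} and Lemma~\ref{lem:universality_product} to obtain a monotone wqo universal graph for $W_1 \cap W_2$, then invoke Theorem~\ref{thm:implication_non_eps} with $\mu=\alephno$; for the ``in particular'' clause, use Theorem~\ref{thm:characterisation_eps} (and the remark on removing $\eps$-edges) to produce monotone wqo universal graphs from the finite $\eps$-memory hypothesis. Your write-up is in fact more detailed than the paper's, which leaves the argument as a one-sentence sketch just before the corollary.
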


The upper bound stated in the corollary is met: bi-boundedness objectives (see Figure~\ref{fig:product_energy}) give an example where $W_1$ and $W_2$ are "positional" but $W_1 \cap W_2$ does not have finite memory (only "locally finite").
Note moreover that it is not true that the intersection of two "objectives" with finite "$\eps$-memory" has "$\eps$-memory" $<\alephno$ (bi-boundedness objectives are an example).
Although our results fall short of implying such a strong closure property, we may still state the following conjecture.

\begin{conj}
"Objectives" with "locally finite memory" are closed under intersection.
\end{conj}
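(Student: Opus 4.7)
The plan is to produce, given a "game" $\G$ with "objective" $W_1 \cap W_2$ won by "Eve", a "winning@@strategy" "strategy" with "locally finite memory". The natural route is to combine locally finite memory witnesses for $W_1$ and $W_2$, but since intersection imposes \emph{simultaneous} requirements, a naive product of "strategies" fails in general: at an Eve vertex, her optimal choices for $W_1$ and for $W_2$ may conflict, so the obvious state-space $V(G) \times M_1 \times M_2$ does not come with compatible outgoing edges at Eve's positions.

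I would proceed by working not with arbitrary winning strategies, but with a canonical \emph{permissive} strategy $\S_i^{\mathrm{perm}}$ for each $W_i$ in the arena $\G$, obtained by allowing at each Eve vertex \emph{every} outgoing edge that remains in the $W_i$-winning region with a $W_i$-winning continuation. These permissive strategies should satisfy the property that every winning strategy for $W_i$ in $\G$ embeds into $\S_i^{\mathrm{perm}}$; in particular, a hypothetical winning strategy $\T$ for $W_1 \cap W_2$ would witness that at every reachable product state $(v,s_1,s_2)$, pairing memory states from $\S_1^{\mathrm{perm}}$ and $\S_2^{\mathrm{perm}}$, the set of edges $v \re c v'$ compatible with both strategies is nonempty. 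This would enable defining a product "strategy" $\S$ satisfying $W_1 \cap W_2$, with "memory@@strategy" at $v$ bounded by $|\pi_{\S_1^{\mathrm{perm}}}^{-1}(v)| \cdot |\pi_{\S_2^{\mathrm{perm}}}^{-1}(v)|$, a product of finite cardinals, hence finite at each vertex.

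The main obstacle and technical heart of the proof lies in establishing that $\S^{\mathrm{perm}}$ itself has "locally finite memory" whenever $W$ does. This is a strict strengthening of the hypothesis: it asserts not merely that some winning strategy is locally finite, but that the \emph{most permissive} one is. I would attempt this via a quotient construction in the spirit of the $\sim$-equivalence used in the proof of Lemma~\ref{lem:structuration}, identifying permissive memory states at $v$ that induce the same residual behavior, and arguing via a König- or Ramsey-style combinatorial argument that the resulting quotient has finite fibers. Failing a proof in full generality, a natural intermediate target would be to identify a rich sub-class of "objectives"—ideally closed under intersection—for which permissive strategies are guaranteed to be locally finite; this would yield at least a partial version of the conjecture.

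An alternative route would bypass permissive strategies, working directly on a winning strategy $\T$ for $W_1 \cap W_2$: iteratively contract $\T$ using the locally finite memory property of $W_1$ to produce an intermediate strategy with finite fibers "satisfying" $W_1$, while preserving enough flexibility to continue winning $W_2$, then repeat for $W_2$. The difficulty here is dual: ensuring that the contraction for $W_1$ does not destroy the witness for $W_2$, which in turn suggests that a permissiveness-like notion must reappear. Both routes thus seem to converge on the same technical challenge, and I would view resolving it as the principal step toward settling the conjecture.
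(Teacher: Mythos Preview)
The statement you are attempting to prove is presented in the paper as an open \emph{conjecture}, not as a theorem; the paper contains no proof of it. Indeed, the sentence immediately preceding the conjecture reads: ``Although our results fall short of implying such a strong closure property, we may still state the following conjecture.'' Corollary~\ref{cor:intersection-finite-memory} establishes only the weaker fact that objectives admitting universal monotone wqo's are closed under intersection, and the paper observes that this class is strictly smaller than the class of objectives with locally finite memory (via Proposition~\ref{prop:eps-memory-greater-than-memory}). There is therefore no ``paper's own proof'' to compare your proposal against.

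As for the proposal itself: you correctly identify that it is not a proof but a research plan, and you are candid about the gap. The step you isolate as the ``technical heart'' --- that the most permissive winning strategy for a locally-finite-memory objective is itself locally finite --- is indeed a genuine strengthening of the hypothesis, and nothing in the paper suggests it holds. Your alternative route (iterated contraction) is also reasonable as a heuristic, and your observation that both routes converge on a permissiveness-type obstacle is apt. But neither route is carried through, so what you have written is a sketch of possible attacks on an open problem, not a proof.
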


Finally, observe that if an "objective" has "locally finite memory", then it holds that for all finite "games" there is a "strategy" with finite (bounded) "memory@@strategy".
One may wonder if the converse statement is true; unfortunately this is not the case; a counterexample is given by the condition
\[
    W=\left\{w_0w_1 \dots \in \{-1,0,1\}^\omega \mid \exists k \in \N, \sum_{i=0}^{k-1} w_i \leq -1\right\}.
\]
Indeed, one can prove that this objective has finite memory over finite games, however, "Eve" requires (locally) infinite memory to win the game where Adam picks an arbitrary number $i \in \N$ (this is simulated by a chain of $1$-edges), and Eve replies with an arbitrary $j \in - \N$.

\subsection{Unions of prefix-independent \texorpdfstring{$\Sigma_2^0$}{Σ₂⁰} objectives}
\label{sec:unions_sigma2}

As already discussed in Section~\ref{sec:generalised_safety}, the Cantor topology on $C^\oo$ naturally provides a way to define general families of "objectives" that have been well-studied in the literature of formal languages (we refer to~\cite{PP04} for a general overview). In particular, some of these classes of objectives are given by the different levels of the Borel hierarchy; the lowest levels are $\SS_1^0$, consisting on the open subsets, and $\Pi_1^0$, consisting on the closed subsets. 
\AP The level  $\intro{\SS_{n+1}^0}$ (resp. $\Pi_{n+1}^0$) contains the countable unions (resp. countable intersections)  of subsets in $\Pi_{n}^0$ (resp. $\SS_{n}^0$).

In this final section, we prove that "prefix-independent" "objectives" in $\kl{\Sigma_2^0}$ with "$\eps$-memory" $\leq m \in \N$ are closed under countable unions.
This is closely related to Kopczy\'nski's conjecture, which stipulates that "prefix-independent" "positional objectives" are closed under unions; we refer to the conclusion for more discussion.
\AP We recall that $\kl{\Sigma_2^0}$ "objectives" are those of the form
\[
    W_\L = \{w \in C^\omega \mid w \text{ has finitely many prefixes in } \L \},
\]
where $\L \subseteq C^*$ is an arbitrary language of finite words~\cite{Skrzypczak13Topological}.

\begin{thm}
"Prefix-independent" $\kl{\Sigma_2^0}$ "objectives" with "$\eps$-memory" $\leq m \in \N$ are closed under countable unions. 
\end{thm}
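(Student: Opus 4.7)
The plan is to apply Theorem~\ref{thm:characterisation_eps}: given a countable family $(W_n)_{n\in\N}$ of "prefix-independent" $\kl{\Sigma_2^0}$ "objectives" with "$\eps$-memory" $\leq m$, writing $W_n=W_{\L_n}$ and $W=\bigcup_n W_n$, it suffices to construct for every "cardinal" $\kappa$ an "$\eps$-separated" "well-monotone" "universal" "graph" of "breadth" $\leq m$ for $W^\eps$. One first notes that $W$ is again "prefix-independent" $\kl{\Sigma_2^0}$, and that $W^\eps=\bigcup_n W_n^\eps$ (an immediate consequence of "prefix-independence"). For each $n$, Theorem~\ref{thm:characterisation_eps} provides an "$\eps$-separated" "well-monotone" graph $U_n$ of "breadth" $\leq m$ that is "universal" for $W_n^\eps$, with parts $V_{n,1},\dots,V_{n,m}$, re-indexed uniformly without loss of generality.

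The construction combines the $U_n$'s in a manner reminiscent of the $U\ltimes\kappa$ operation used in Lemma~\ref{lem:rongeur_de_croute}, but adapted to preserve both "$\eps$-separation" and "breadth". I would define $U$ on the vertex set $\bigsqcup_n V(U_n)\times\kappa$, with vertices written $(n,v,\lambda)$, and partition $V(U)$ into $V_i=\bigsqcup_n V_{n,i}\times\kappa$ for $i\in\{1,\dots,m\}$. Within each $V_i$, the "total order" is lexicographic with $\lambda$ outermost (larger $\lambda$ being larger), then $n$ (larger $n$ being larger), and finally the order inherited from $U_n$; elements of distinct parts are declared incomparable. For $c\in C$, I put an edge $(n,v,\lambda)\re{c}(n',v',\lambda')$ whenever $\lambda>\lambda'$, or $\lambda=\lambda'$ and $n>n'$, or $\lambda=\lambda'$, $n=n'$, and $v\re{c}v'\in E(U_n)$. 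A routine case analysis verifies that $U$ is "$\eps$-separated", "well-monotone", of "breadth" $m$, and "satisfies" $W^\eps$: along any infinite "path", both $\lambda$ and then $n$ must eventually stabilise by "well-foundedness" of $\kappa$ and of $\N$, so the tail lies entirely in some $U_{n^*}$ and its $C$-projection is in $W_{n^*}\subseteq W$, whence the whole path is in $W^\eps$ by "prefix-independence".

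Universality is then proved by "transfinite recursion" following Lemma~\ref{lem:rongeur_de_croute}: given a "pretree" $T$ of cardinality $<\kappa$ that "satisfies" $W^\eps$, one inductively extracts vertices $t_\lambda\in V(T)$, indices $n_\lambda\in\N$, and "morphisms" $\phi_\lambda\colon T_{\geq\lambda}[t_\lambda]\to U_{n_\lambda}$, where $T_{\geq\lambda}$ is the sub-"pretree" of vertices not reachable from any $t_\beta$ with $\beta<\lambda$. The resulting assignment $t\mapsto(n_\lambda,\phi_\lambda(t),\lambda)$, with $\lambda$ the stage at which $t$ is covered, defines a "morphism" $T\to U$: the cross-edges of $U$ accommodate transitions between stages. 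The existence of $(t_\lambda,n_\lambda,\phi_\lambda)$ at each step rests on "universality" of $U_{n_\lambda}$ applied to a suitable sub-"pretree", which is predicated on the following key combinatorial claim.

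\emph{Structuration claim.} For every "pretree" $T'$ that "satisfies" $W^\eps$, there exist $t\in V(T')$ and $n\in\N$ such that $T'[t]$ "satisfies" $W_n^\eps$. This is the main obstacle of the proof. I would argue by contradiction via a diagonal construction: assuming no such $(t,n)$ exists, one builds stage by stage an infinite "path" from the "root" of $T'$ whose label $\ell\in(C^\eps)^\omega$ has $\ell_C$ infinite and $\ell_C\notin W_n$ for every $n\in\N$, contradicting the fact that $T'$ "satisfies" $W^\eps=\bigcup_n W_n^\eps$. At stage $k$, for each $n\leq k$ in turn, I would use the assumption that the subtree rooted at the current endpoint does not "satisfy" $W_n^\eps$, together with "prefix-independence" of $W_n$ (which yields $W_{\L_n}=W_{u^{-1}\L_n}$ for every $u\in C^*$, so that a witnessing path remains witnessing after any shift of its starting position), to extend the current path by a nontrivial finite segment whose cumulative $C$-projection acquires a new prefix in $\L_n$. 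Vertices from which only $\eps$-paths emanate are handled separately: any such subtree automatically "satisfies" $W_n^\eps$ for any nonempty $W_n$, of which at least one exists unless $W=\emptyset$, a degenerate case.
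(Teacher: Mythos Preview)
Your overall architecture is sound and closely mirrors the paper's: the structuration claim (that every pretree satisfying $W$ has a subtree satisfying a single $W_n$) is indeed the crux, and your diagonal argument for it is correct and essentially the same as the paper's proof of Lemma~\ref{lem:combinations_sigma2}. The transfinite peeling following Lemma~\ref{lem:rongeur_de_croute} is likewise the right strategy.

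There is, however, a genuine gap in your assembly of the morphism $T\to U$, caused by your choice to work with $\eps$-separated structures throughout. Consider an $\eps$-edge $t \re \eps t'$ in $T$ where $t$ is covered at stage $\lambda$ and $t'$ at an earlier stage $\beta<\lambda$ (such cross-stage edges certainly occur). You send $t$ to $(n_\lambda,\phi_\lambda(t),\lambda)$, which lies in the part $V_i$ determined by which part of $U_{n_\lambda}$ receives $\phi_\lambda(t)$; and $t'$ to $(n_\beta,\phi_\beta(t'),\beta)$, in the part $V_j$ determined by $U_{n_\beta}$. The morphisms $\phi_\lambda$ and $\phi_\beta$ were chosen independently, into different graphs, so there is no reason that $i=j$. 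But in an $\eps$-separated graph there are no $\eps$-edges between distinct parts, so $\phi(t)\re\eps\phi(t')$ need not hold, and $\phi$ is not a morphism. Your sentence ``the cross-edges of $U$ accommodate transitions between stages'' is only true for colours $c\in C$.

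The paper sidesteps this by \emph{not} maintaining $\eps$-separation during the construction. It strips the $\eps$-edges from each $U_n$ to obtain well-monotone $(\kappa,W_n)$-universal $C$-graphs of width $\leq m$, takes their direct sum $U$, and proves $U\ltimes\kappa$ is $(\kappa,W)$-universal via exactly your structuration claim (this is Lemma~\ref{lem:combinations_sigma2}). Width is preserved under direct sums and under $\ltimes\kappa$, since antichains are trapped in a single summand (respectively, a single copy). Only at the end is Proposition~\ref{prop:non_eps-implies-eps} invoked: because $m$ is finite, Dilworth's theorem converts the width-$m$ well-monotone universal graph into an $\eps$-separated one of breadth $\leq m$, and Theorem~\ref{thm:characterisation_eps} concludes. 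Working $\eps$-free is precisely what allows cross-stage edges of any colour to be absorbed freely; the $\eps$-structure is rebuilt only once the universal graph is complete.
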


This generalises\footnote{Formaly, Ohlmann proved the result for so called ``non-healing'' objectives, which are slightly more general than $\kl{\Sigma_2^0}$. Here we chose to prove it only for $\kl{\Sigma_2^0}$, but the proof is essentially the same, and can easily be adapted to non-healing objectives.} a result of~\cite{Ohlmann23UnivJournal} from "positionality" to finite "memory".

\AP Given a family $(G_\lambda)_{\lambda \in \alpha}$ of "$C$-graphs" indexed by "ordinals", we define their ""direct sum@@graphs"" $G$ to be the disjoint union of the $G_i$, with additionally all $C$-edges pointing from $G_\lambda$ to $G_{\lambda'}$ for $\lambda>\lambda'$; formally
\[
    E(G) = \bigcup_{\lambda \in \alpha} E(G_\lambda) \cup \{v \re c v' \mid v \in V(G_\lambda), v'\in V(G_\lambda') \tand \lambda > \lambda'\}.
\]
If the $G_i$'s are (partially) ordered graphs, then the order on their "sum@@graphs" is defined to be the concatenation of the orders on the $G_i$'s.
Note that if the $G_i$'s are "well-ordered" then so is their "sum@@graphs", and that if the "antichains" of the $G_i$'s are $< \mu$ then so are the "antichains" of their "sum@@graphs".
Recall that $G \ltimes \alpha$ denotes the "direct sum" of $\alpha$ copies of $G$ (which is also the "lexicographical product@@graphs" of $G$ and the "graph" consisting of $\alpha$ vertices and no edges). 

Our proof relies on the following lemma.

\begin{lem}\label{lem:combinations_sigma2}
Let $W_0,W_1,\dots \subseteq C^\omega$ be "prefix-independent" $\kl{\Sigma_2^0}$ objectives, $\kappa$ be a "cardinal", and $U_0,U_1,\dots$ be "$C$-graphs" such that for each $i$, $U_i$ is "$(W_i,\kappa)$-universal".
Let $W=\bigcup_i W_i$.
Then the graph $U \ltimes \kappa$, where $U$ is the "direct sum@@graphs" of the $U_i$'s, is "$(\kappa,W)$-universal".
\end{lem}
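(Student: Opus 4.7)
My plan is to verify the two universality conditions for $U \ltimes \kappa$, mimicking the strategy of Lemma~\ref{lem:rongeur_de_croute}. For the satisfaction of $W$: any infinite path in $U \ltimes \kappa$ has non-increasing second coordinate, so by well-foundedness of $\kappa$ it stabilises at some $\lambda$, and a suffix therefore lies in one copy of $U$. Inside the direct sum $U$, the index in $\omega$ along the path is likewise non-increasing, so a further suffix lies in some $U_i$ and hence has colouration in $W_i \subseteq W$. Prefix-independence of $W$ then yields that the original path satisfies $W$.

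The main task is to embed an arbitrary pretree $T$ of cardinality $<\kappa$ satisfying $W$ into $U \ltimes \kappa$. Following the transfinite recursion of Lemma~\ref{lem:rongeur_de_croute} verbatim, it suffices to prove the following almost universality statement: every pretree $T'$ of cardinality $<\kappa$ satisfying $W$ admits a vertex $t \in V(T')$ and an index $i \in \N$ with a morphism $\treerooted{T'}{t} \to U_i$. Granting this, at each stage $\lambda < \lambda_0 \leq |V(T)| < \kappa$ one selects $v_\lambda$, an index $i_\lambda$, and a morphism $\phi_\lambda$ from the corresponding subtree of the remaining pretree into $U_{i_\lambda}$, then defines $\phi(v) = (\phi_\lambda(v), \lambda)$ for $v$ in the $\lambda$-th stratum. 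Edges within a stratum are preserved by $\phi_\lambda$ composed with the inclusion $U_{i_\lambda} \hookrightarrow U$, and edges crossing strata are accommodated by the strict decrease of the $\kappa$-coordinate in the definition of $U \ltimes \kappa$.

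The main obstacle is therefore proving almost universality, where the $\kl{\Sigma_2^0}$ hypothesis enters. Writing each $W_i = W_{\L_i}$ for some $\L_i \subseteq C^*$, it is enough to exhibit $t$ and $i$ such that $\treerooted{T'}{t}$ satisfies $W_i$, since $(W_i,\kappa)$-universality of $U_i$ then yields the required morphism. Suppose, for contradiction, that no such $t, i$ exist: then for every $t \in V(T')$ and every $i$, there is an infinite path from $t$ in $T'$ whose label has infinitely many prefixes in $\L_i$. Fix a bijective enumeration $(n_k,i_k)_{k \in \N}$ of $\N \times \N$ and construct inductively a nested sequence of finite paths $t_0 \re{u_0} t_1 \re{u_1} \cdots \re{u_{k-1}} t_k$ in $T'$ such that the label $u_0 \cdots u_{k-1}$ has at least $n_k$ prefixes in $\L_{i_k}$. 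At step $k+1$, pick an infinite path from $t_k$ whose label $v$ lies outside $W_{i_{k+1}}$; prefix-independence of $W_{i_{k+1}}$ ensures that the concatenation $u_0 \cdots u_{k-1} v$ is also outside $W_{i_{k+1}}$, and therefore has infinitely many prefixes in $\L_{i_{k+1}}$, so a sufficiently long initial extension along $v$ achieves the required count. The resulting infinite path from $t_0$ has a label with infinitely many prefixes in every $\L_i$ and thus belongs to no $W_i$, contradicting the assumption that $T'$ satisfies $W = \bigcup_i W_i$.
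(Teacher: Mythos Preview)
Your proof is correct and follows essentially the same approach as the paper: reduce to almost $(\kappa,W)$-universality of $U$ via Lemma~\ref{lem:rongeur_de_croute}, then argue by contradiction that some $\treerooted{T'}{t}$ satisfies some $W_i$ by building an infinite path whose label has infinitely many prefixes in every $\L_i$. The only cosmetic difference is bookkeeping---the paper uses a word $e \in \N^\omega$ with each natural occurring infinitely often and adds one new $\L_{e_i}$-prefix per step, whereas you enumerate $\N \times \N$ and target prefix counts directly; both arguments are equivalent.
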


\begin{proof}
Thanks to Lemma~\ref{lem:rongeur_de_croute}, it suffices to prove that $U$ is "almost $(\kappa,W)$-universal".
Let $T$ be a "tree" of cardinality $< \kappa$ satisfying $W$.
We show that there exists $i \in \N$ and $t \in T$ such that $T[t]$ "satisfies@@graph" $W_i$; this implies the result since by "universality" of $U_i$ we then get $T[t] \to U_i \to U$.
Assume otherwise.
Take $e=e_0e_1\dots \in \N^\omega$ to be a word over the naturals with infinitely many occurrences of each natural, for instance $e=010120123\dots$.
For each $i \in \N$, let $\L_i \subseteq C^*$ be such that $W_i = \{w \in C^\omega \mid w \text{ has finitely many prefixes in } \L_i\}$.

We now construct an infinite path $\pi=\pi_0 \pi_1 \dots$ starting from the "root" $t_0$ in $T$ such that for each $i$, the coloration $w_0 \dots w_i$ of $\pi_0 \dots \pi_i$ belongs to $\L_{e_i}$.
This implies that the coloration $w$ of $\pi$ has infinitely many prefixes in each of the $\L_i$'s, therefore it does not belong to $W$, a contradiction.
Assume $\pi=\pi_0 \dots \pi_{i-1} : t_0 \rp{w_0 \dots w_{i-1}} t$ constructed up to $\pi_{i-1}$.
Since by assumption, $T[t]$ does not satisfy $W_{e_i}$, there is a path $\pi' : t \rp{w}$ such that $w \notin W_{e_i}$.
By prefix-independence of $W_{e_i}$, we get $w_0 \dots w_{i-1} w \notin W_{e_i}$, thus $w$ has a prefix $w_i$ such that $w_0 \dots w_{i-1} w_i \in \L_{e_i}$; this allows us to augment $\pi$ as required and conclude our proof.
\end{proof}

The theorem follows from combining Lemma~\ref{lem:combinations_sigma2}, Theorem~\ref{thm:characterisation_eps} and Proposition~\ref{prop:non_eps-implies-eps}, and the fact that "antichains" in the "well-founded" "graph" $U \times \kappa$ are no larger than those in $U$.

\section{Conclusion}
\label{sec:conclusion}

In this paper, we have extended Ohlmann's work~\cite{Ohlmann23UnivJournal} to the study of the "memory@@objective" of "objectives".
We have introduced different variants of "well-monotone" "universal" "graphs" adequate to the various models of "memory@@objective" appearing in the literature, and we have characterised the "memory@@objective" of "objectives" through the
existence of such "universal" "graphs" (Theorems~\ref{thm:characterisation_eps} and~\ref{thm:implication_non_eps}).

\paragraph{Possible applications.}
We expect these results to have two types of applications. The first one is helping to find tight bounds for the "memory@@objective" of different families of "objectives". We have illustrated this use of "universal" "graphs" by recovering known results about the "memory@@objective" of "topologically closed objectives"~\cite{CFH14} and "Muller objectives"~\cite{DJW97}, as well as providing non-trivial tight bounds on the "memory@@objective" of some new concrete examples. While finding "universal" "graphs" and proving their correctness might be difficult, we have provided tools to facilitate this task in the important case of "prefix-independent" "objectives" (Lemma~\ref{lem:rongeur_de_croute}). 

The second kind of application discussed in the paper is the study of the combinations of "objectives".
We have used our characterisations to bound the memory requirements of finite "lexicographical product@@objectives" of "objectives" (Section~\ref{sec:lexicographic_products}).
We have also established that intersections of "objectives" with finite "$\eps$-memory" always have "locally finite" "$\eps$-free" "memory@@epsFree".
Finally, we have proved that "prefix-independent" $\kl{\Sigma_2^0}$ objectives with finite "$\eps$-memory" are closed under countable unions.
We believe that the new angle offered by "universal" graphs will help to better understand general properties of "memory@@objective".

\paragraph{Open questions.}
Many questions remain open. First of all, as discussed in Section~\ref{sec:combinations}, we have proved that "objectives" admitting "universal" "monotone wqo's" are closed by intersection. However, we do not know  whether the larger class of "objectives" with unbounded finite "$\eps$-free" memory is closed under intersection. A related question is therefore understanding what are exactly the "objectives" admitting "universal" "monotone wqo's".

In the realm of "positional" "objectives", a long-lasting open question is  Kopczyński's conjecture~\cite{Kop08Thesis}: are unions of "prefix-independent" "positional" "objectives" "positional"? This conjecture has recently been disproved for finite game graphs by Kozachinskiy~\cite{Kozachinskiy22EnergyGroups}, but it remains open for arbitrary game graphs. We propose a generalisation of Kopczyński's conjecture in the case of "$\eps$-memory".

\begin{conj}\label{conj:strong_kop}
	Let $W_1 \subseteq C^\omega$ and $W_2 \subseteq C^\omega$ be two "prefix-independent" "objectives" with "$\eps$-memory" $\leq n_1, n_2$, respectively. Then $W_1\cup W_2$ has "$\eps$-memory" $\leq n_1n_2$.
\end{conj}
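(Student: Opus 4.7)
The plan is to apply Theorem~\ref{thm:characterisation_eps} and reduce the problem to constructing, for every cardinal $\kappa$, an $\eps$-separated well-monotone $(\kappa, (W_1\cup W_2)^\eps)$-universal graph of breadth at most $n_1 n_2$. The hypothesis supplies $\eps$-separated well-monotone $(\kappa, W_i^\eps)$-universal graphs $U_i$ of breadth at most $n_i$, so the target product breadth $n_1 n_2$ strongly suggests that vertices of the combined graph should live in $V(U_1)\times V(U_2)$, with the partition given by the product $[n_1]\times[n_2]$ of the two partitions of $U_1$ and $U_2$.

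The natural first candidate is the direct product $U_1\otimes U_2$ of Lemma~\ref{lem:universality_product}; however that construction yields a universal graph for $W_1\cap W_2$, not for the union. Dually, the lexicographic product of Theorem~\ref{thm:lexico} yields universality for $W_1\ltimes W_2$. The conjecture thus calls for a genuinely new combination. The plan is to define a graph $U$ on $V(U_1)\times V(U_2)$ allowing $(u_1,u_2)\re{c}(u_1',u_2')$ whenever the corresponding move exists in either $U_1$ or $U_2$, together with ``mode-switch'' edges in which a strict decrease in the $U_2$-component absorbs a missing edge in $U_1$, and symmetrically. Verifying monotonicity and that $U$ satisfies $W_1\cup W_2$ should be routine; the difficulty lies in the universality proof.

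To establish universality, the natural strategy is a transfinite recursion on a tree $T$ satisfying $W_1\cup W_2$. At each step, one attempts to embed as large a subtree as possible into a single ``copy'' of $U_1$ or $U_2$ inside $U$, declaring a failure when neither embedding exists. The crucial ingredient, and the locus of the main obstacle, is a dichotomy of the following form: whenever a subtree satisfies $W_1\cup W_2$ but embeds into neither $U_1$ nor $U_2$ when rooted at its own root, one should extract an infinite branch by interleaving ``$W_1$-failing'' and ``$W_2$-failing'' witnesses, and use prefix-independence to conclude that this branch lies in neither $W_1$ nor $W_2$, contradicting the assumption that $T$ satisfies $W_1\cup W_2$. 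If such a dichotomy can be made quantitative so as to match the $n_1 n_2$ budget on antichains, the almost-universal graph thus obtained can be upgraded via Lemma~\ref{lem:rongeur_de_croute} without affecting breadth.

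The main obstacle is precisely this dichotomy. Specialised to $n_1=n_2=1$, the conjecture reduces to Kopczyński's long-standing open conjecture on positionality under union of prefix-independent objectives over arbitrary game graphs, so any proof will have to fundamentally address that question. Conversely, any technique that establishes Kopczyński's conjecture in that setting has a reasonable chance of extending, via a product of universal graphs as sketched above, to yield the full $n_1 n_2$ bound claimed in the conjecture.
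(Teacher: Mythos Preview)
The statement you are attempting to prove is explicitly a \emph{conjecture} in the paper (Conjecture~\ref{conj:strong_kop}); the authors do not provide a proof, and indeed present it as a generalisation of Kopczy\'nski's conjecture, which remains open over arbitrary game graphs. There is therefore no ``paper's own proof'' to compare your proposal against.

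Your write-up is not a proof either, and you are candid about this: you correctly observe that the case $n_1 = n_2 = 1$ is exactly Kopczy\'nski's conjecture, so that any purported proof of the conjecture would in particular resolve that long-standing open problem. The outline you give (product of the two universal graphs, mode-switch edges, an almost-universality argument via a failure-interleaving dichotomy, then Lemma~\ref{lem:rongeur_de_croute}) is a reasonable sketch of where one might look, and is in the spirit of the paper's techniques for lexicographic products and for unions of $\Sigma_2^0$ objectives. But the ``crucial ingredient'' you isolate --- the dichotomy asserting that a tree satisfying $W_1 \cup W_2$ which embeds in neither $U_1$ nor $U_2$ must contain a branch avoiding both $W_1$ and $W_2$ --- is precisely the missing idea, not a step you have carried out. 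In the $\Sigma_2^0$ case (Lemma~\ref{lem:combinations_sigma2}) this works because failure to satisfy $W_i$ is witnessed by arbitrarily long finite prefixes in $\L_i$, which can be interleaved; for general prefix-independent objectives no such finitary witness is available, and this is exactly why the conjecture is open.

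In short: your proposal is an honest research plan rather than a proof, and it correctly identifies both the natural line of attack and the genuine obstacle. That is the appropriate status for a statement the paper itself leaves as a conjecture.
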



Objectives that are $\omega$-regular (those recognised by a deterministic parity automaton, or, equivalently, by a non-deterministic B\"uchi automaton) have received a great deal of attention over the years.
Casares and Ohlmann have recently characterised those $\omega$-regular objectives which are positional~\cite{CO24}, thereby establishing decidability in polynomial time and proving Kopczyński's conjecture for these objectives.
Their characterisation crucially relied on Ohlmann's characterisation of "positionality" via (totally ordered) "well-founded" "monotone" universal graphs.

However, very little is known about "memory" requirements of $\omega$-regular objectives, for instance, the precise "memory" requirement of a given $\omega$-regular objective is not known to be decidable.
We believe that our extension of Ohlmann's universal graphs to the setting of memory paves the way to answering the above question in the positive (possibly, even obtaining a polynomial-time decision procedure\footnote{There is no hope for a polynomial-time decision procedure in the case of chromatic memory, as the problem of deciding whether the chromatic memory of an objective is $\leq k$ is known to be $\mathtt{NP}$-hard already for simple subclasses of $\omega$-regular languages~\cite{Casares22,BFRV23Regular}.}).

Similarly, one may turn to (non-necessarily $\omega$-regular) objectives with topological properties, for instance, it is not known by now which topologically open objectives (or, recognised by infinite deterministic reachability automata) are positional, or finite memory.
We hope that the newly available tools presented in this paper will also help progress in this direction.

\section*{Acknowledgment}
  \noindent The authors wish to acknowledge fruitful discussions with
  Nathanaël Fijalkow, Rémi Morvan and Pierre Vandenhove.


\bibliographystyle{alphaurl}
\bibliography{bib}
\appendix

\section{Some notes on set theory}
\label{sec:appendix_set_theory}
This appendix collects standard definitions and notations concerning basic set theory, as well as some results used throughout the paper.
In all the paper, the axiom of choice is accepted. 

A reference for all the results stated in this appendix is the book~\cite{Krivine1971SetTheory}.

\subsection{Orders and preorders}
 
\AP A binary relation $\leq$ over a set $A$ is a  ""preorder"" (resp. \emph{strict preorder}) if it is reflexive ($\forall x, x\leq x$) (resp. it is anti-reflexive, $\forall x, x\nleq x$) and transitive ($x\leq y$ and $y\leq z$ implies $x\leq z$). Given a "preorder" $\leq$, we note $<$ the "strict preorder" defined by: $x < y \text{ if } x\leq y \text{ and } x \neq y$.
\AP A "preorder" (resp. \emph{strict preorder}) is an ""order"" (resp. \emph{strict order}) if it is antisymmetric ($x\leq y$ and $y\leq x$ implies $x=y$).
A ""(pre)ordered set"" $(A,\leq)$ is a set together with a "(pre)order relation".

\AP We say that two elements $x, y$ of a preordered set are ""comparable"" if $x\leq y$ or $y\leq x$. A "(pre)order" over $A$ is a ""total (pre)order"" (also called a \emph{linear order}) if any two elements of $A$ are "comparable". If we want to emphasize that an "order relation" is not necessarily total, we may call it a \emph{partial order}.

\AP A ""chain"" of an "ordered set" $(A,\leq)$ is a subset $S\subseteq A$ whose elements are pairwise "comparable".
An ""antichain"" of an "ordered set" $(A,\leq)$ is a subset $S\subseteq A$ whose elements are pairwise incomparable ($\forall x,y \in S$, $x\nleq y$ and $y \nleq x$).

\AP Let $(A, \leq)$ be an "ordered set". A ""maximal (resp. minimal) element"" of $S$ is an element $m\in S$ such that $\forall x\in S$, $m \leq x$ (resp. $x\leq m$) implies $m\leq x$ (resp. $x\leq m$). An element $a\in A$ is a ""supremum (resp. infimum) of $S$"" if $\forall x\in S, x\leq a$ (resp. $\forall x\in S, a\leq x$ and for any other $b\in A$ with this property $a\leq b$ (resp. $b\leq a$). 
"Suprema" and "infima" of "ordered sets" are unique, but they do not necessarily exist.
\AP If the "supremum" (resp. infimum) of a set $S$ belongs to $S$, it is called a ""maximum"" (resp. \emph{minimum}).

\AP A ""lattice"" is an "ordered set" in which all nonempty finite subsets have both a "supremum" and an "infimum". A "complete lattice" is an "ordered set" in which all non-empty subsets have both a "supremum" and an "infimum". We add the adjective \emph{linear} if the "order" is "total".

\AP A "partially preordered set" $(A, \leq)$ is ""well-founded"" if any non-empty subset has a "minimal element"; or equivalently, if it has no infinite strictly decreasing sequence.
A "well-founded"  "(strict) total order" is called a  ""(strict) well-order"".
\AP A "preordered set" (resp. "ordered set") $(A, \leq)$ is a ""well-quasi order"" (wqo) if it is "well-founded" and has no infinite "antichains" (equivalently, if any infinite sequence of elements contains an increasing pair).

\AP Two ordered sets $(A, \leq_1)$, $(B, \leq_2)$ are ""order isomorphic"" if there exists an order preserving bijection between them, that is, a bijection $\phi \colon A \to B$ such that for all $x,y \in A$, $x\leq_1 y$ implies $\phi(x) \leq_2 \phi(y)$ and for all $x,y \in B$ $x\leq_2 y$ implies $\inv\phi(x) \leq_1 \inv\phi(y)$

\begin{prop}[Well-ordering principle]
	Any set admits a "well-ordering".
\end{prop}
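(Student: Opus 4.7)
\medskip
\noindent\textbf{Proof plan.} The plan is to derive the well-ordering principle from the axiom of choice via a transfinite recursion à la Zermelo. First I would fix a set $A$ and, using the axiom of choice, pick a choice function $f \colon \pow(A) \setminus \{\emptyset\} \to A$ with $f(S) \in S$ for every non-empty $S \subseteq A$. Then, by "transfinite recursion" on the class of "ordinals", I would define
\[
    a_\alpha = f\bigl(A \setminus \{a_\beta \mid \beta < \alpha\}\bigr)
\]
whenever the argument is non-empty, and stop otherwise. I would then argue that this process must terminate at some ordinal $\lambda$: if it did not, $\alpha \mapsto a_\alpha$ would define an injection from the proper class of all ordinals into $A$, which is impossible since $A$ is a set (formally, one appeals to Hartogs' theorem, which gives an ordinal $\kappa$ that does not inject into $A$, and concludes that the process must halt before reaching $\kappa$). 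Finally, the bijection $\alpha \mapsto a_\alpha$ from the ordinal $\lambda$ onto $A$ transports the canonical "well-ordering" of $\lambda$ onto $A$, which is the desired "well-ordering".

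An alternative route, should one prefer to avoid explicit appeal to Hartogs' theorem, would go through Zorn's lemma (equivalent to the axiom of choice): consider the set $\W$ of pairs $(X, \leq_X)$ with $X \subseteq A$ and $\leq_X$ a "well-ordering" of $X$, ordered by extension as initial segments. Any "chain" in $\W$ admits an upper bound given by the union of its underlying sets, equipped with the common well-ordering; thus Zorn produces a maximal $(M, \leq_M) \in \W$. One then checks that $M = A$: otherwise, picking any $a \in A \setminus M$ and placing it above all of $M$ would yield a strictly larger element of $\W$, contradicting maximality.

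The main (and only) subtle step in either approach is the halting / maximality argument, since everything else is routine verification. I would favour the first approach for brevity, since Hartogs' theorem is a standard ingredient in the set-theoretic background the appendix already assumes, and the transfinite recursion makes the well-ordering on $A$ entirely explicit as the order-type of the enumeration $(a_\alpha)_{\alpha < \lambda}$.
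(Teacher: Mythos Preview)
Your proposal is correct and presents two standard, well-known derivations of the well-ordering principle from the axiom of choice. However, the paper does not actually prove this proposition: it is stated in the appendix as a background fact with a reference to a textbook, so there is no ``paper's own proof'' to compare against. Your write-up would make a fine self-contained proof if one were wanted, but for the purposes of this appendix a citation is all that was intended.
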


\begin{prop}[Dilworth's Theorem~\cite{Dilworth50}]\label{prop:Dilworth-theorem}
	Let $(A,\leq)$ be a "partially ordered set". If the size of the "antichains" of $(A,\leq)$  is bounded by a finite number $k$, there are $k$ disjoint "chains" $S_1,\dots, S_k \subseteq A$, $S_i \cap S_j = \emptyset$ for $i \neq j$, such that $A = \bigcup_{i=1}^k S_i$.
\end{prop}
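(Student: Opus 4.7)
The plan is to prove Dilworth's theorem in two stages: first I would establish the statement for finite posets by an inductive argument, and then lift it to arbitrary $A$ by a compactness argument that crucially exploits the finiteness of the width $k$.

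For finite $A$, I would proceed by induction on $|A|$, with trivial base cases when $|A| \leq 1$. For the inductive step, I would choose a maximal chain $C$ in $A$ and distinguish two cases. If the width of $A \setminus C$ is at most $k-1$, then applying the induction hypothesis yields $k-1$ chains covering $A \setminus C$, which together with $C$ itself provide the required decomposition. Otherwise, $A \setminus C$ contains an antichain $F = \{a_1, \dots, a_k\}$ of maximum size $k$. I would then set
\[
A^- = \{x \in A \mid x \leq a_i \text{ for some } i\} \quad \text{and} \quad A^+ = \{x \in A \mid x \geq a_i \text{ for some } i\}.
\]
The maximality of $F$ as an antichain of $A$ (which follows from the width being exactly $k$) forces $A = A^- \cup A^+$; moreover, because $C$ was chosen maximal and avoids $F$, the endpoints of $C$ witness $A^+ \subsetneq A$ and $A^- \subsetneq A$. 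Applying induction to $A^-$ and to $A^+$, in each of which $F$ remains a maximum antichain of size $k$, yields two $k$-chain decompositions in which $F$ appears as a common transversal. Gluing these decompositions along the $a_i$'s produces a $k$-chain partition of $A$.

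For arbitrary $A$ of width at most $k$, I would use compactness. Consider the product space $X = \{1,\dots,k\}^A$ with the product of discrete topologies; by Tychonoff's theorem $X$ is compact. For each pair of incomparable elements $x, y \in A$, the set $K_{x,y} = \{f \in X \mid f(x) \neq f(y)\}$ is clopen, and a function $f \in X$ partitions $A$ into $k$ chains via the fibres $f^{-1}(i)$ exactly when $f$ lies in the closed set $\bigcap_{x \not\sim y} K_{x,y}$. To show this intersection is nonempty, it suffices to verify the finite intersection property. Any finite subfamily involves only finitely many elements of $A$, forming a finite subposet $F \subseteq A$ whose antichains remain bounded by $k$; the finite case supplies a chain decomposition of $F$, which extends (arbitrarily) to an element of $X$ lying in the given subfamily. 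Compactness then delivers the desired global chain coloring.

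The main obstacle I foresee lies entirely in the finite case, specifically in arguing correctly that both $A^+$ and $A^-$ are strict subsets of $A$ when $F$ is chosen inside $A \setminus C$; this is where the maximality of $C$ (not merely of its size, but the fact that nothing can be prepended or appended) is used. The compactness step, by contrast, is essentially routine once the finite case is in hand, though it is important to note that it depends in a non-trivial way on $k$ being finite: as already remarked in Section~\ref{sec:unbounded_antichains-not-epsilon-memory}, Perles' example $\kappa \times \kappa$ shows that the statement fails for infinite width bounds, so any proof must somewhere use $k < \aleph_0$ (here, in the compactness of $\{1,\dots,k\}$).
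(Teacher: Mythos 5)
The paper does not actually prove this statement: Dilworth's theorem is stated in Appendix~\ref{sec:appendix_set_theory} purely as a cited classical result~\cite{Dilworth50}, so there is no in-paper argument to compare yours against. Your proposal is a correct, self-contained proof, and it follows the standard two-stage route: the finite case by induction on $|A|$ via a maximal chain $C$ and, when $A \setminus C$ still has width $k$, the up-set/down-set decomposition $A = A^- \cup A^+$ around a maximum antichain $F$, with the two induced $k$-chain decompositions glued along $F$; and the infinite case (with finite width bound $k$) by a compactness/finite-intersection-property argument in $\{1,\dots,k\}^A$. The delicate points are all handled or at least correctly flagged: every element is comparable to some $a_i$ (else $F \cup \{x\}$ would be an antichain of size $k+1$), the top and bottom of the maximal chain $C$ witness $A^- \subsetneq A$ and $A^+ \subsetneq A$, each $a_i$ is maximal in $A^-$ and minimal in $A^+$ so the chains concatenate, and the sets $K_{x,y}$ are clopen because they constrain only two coordinates. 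Your closing remark is also apt: the finiteness of $k$ enters exactly in the compactness of $\{1,\dots,k\}$, which is consistent with the paper's citation of Perles' counterexample for infinite width in Section~\ref{sec:unbounded_antichains-not-epsilon-memory}. One tiny presentational point: to make the induction go through uniformly you should state the induction hypothesis as ``width $\leq k$ implies a cover by $k$ chains, some possibly empty,'' so that the first case ($A \setminus C$ of width $\leq k-1$) literally yields $k-1$ chains to which $C$ is adjoined; you implicitly do this, but it is worth saying.
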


\subsection{Ordinals and cardinals}

Intuitively, the class of ordinals is defined so that it contains one ordinal for each possible "well-ordered" set, up to isomorphism.

\AP Formally, a set $\alpha$ is an ""ordinal"" if
\begin{enumerate}
	\item The membership relation $\in$ is a "strict well-order" over $\alpha$.
	\item If $x\in \alpha$, then $x\subsetneq \alpha$.
\end{enumerate}

For example, $\emptyset, \{\emptyset\}, \{\emptyset, \{\emptyset\}\}, \{\emptyset, \{\emptyset\}, \{\emptyset, \{\emptyset\}\}\}, \dots$ are ordinals, that we write $0, 1, 2,$ $3, \dots$.
The first infinite ordinal is represented by $\oo = \{0,1,2,3,\dots\}$.

Some important properties of "ordinals" are:
\begin{itemize}
	\item The collection of all "ordinals" is "well-ordered" by the relation of membership. This is the "order" that we will consider over this class.
	\item A "well-ordered set" is "order-isomorphic" to one and only one "ordinal".  
\end{itemize}

\begin{prop}[""Transfinite recursion""]
	Let $P(x)$ be a property about "ordinals". Property $P$ holds for every "ordinal" if and only if it is true that:
	\[ \text{For all ordinal } \aa, \text{ if } P(\bb) \text{ holds for every } \bb < \aa \text{ then } P(\aa) \text{ holds}.\]
\end{prop}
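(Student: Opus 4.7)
The plan is to prove the two directions separately, both by appealing to the fact that the class of ordinals is well-ordered by membership (which was stated as a key property of ordinals just before the statement).

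For the easy direction ($\Rightarrow$), I would simply observe that if $P(\alpha)$ holds for every ordinal $\alpha$, then in particular the conditional ``$P(\beta)$ for all $\beta < \alpha$ implies $P(\alpha)$'' has its conclusion satisfied for every $\alpha$, and so the implication is vacuously true. This takes one sentence and no real work.

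For the substantive direction ($\Leftarrow$), I would argue by contradiction. Assume the hypothesis holds, that is, for every ordinal $\alpha$, if $P(\beta)$ holds for all $\beta < \alpha$, then $P(\alpha)$ holds; and suppose towards a contradiction that there exists some ordinal $\gamma$ for which $P(\gamma)$ fails. Consider the set $S = \{\beta \leq \gamma \mid P(\beta) \text{ does not hold}\}$, which is nonempty since it contains $\gamma$. Since $\leq$ restricted to the ordinals $\leq \gamma$ (that is, to $\gamma \cup \{\gamma\}$, which is itself an ordinal) is a well-order, $S$ admits a minimum element $\alpha_0$. By minimality, $P(\beta)$ holds for every $\beta < \alpha_0$. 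Applying the hypothesis to $\alpha_0$ then yields $P(\alpha_0)$, contradicting $\alpha_0 \in S$.

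The main (and essentially only) subtlety is to justify that we can form the minimum: a priori the ordinals form a proper class, not a set, so one cannot directly speak of ``the smallest ordinal failing $P$''. I would address this by restricting attention to ordinals $\leq \gamma$ as above, which do form a set (namely the successor ordinal $\gamma + 1$) and on which $\in$ is a strict well-order, hence a minimum of $S$ exists. With this care taken, the argument is complete. There is nothing else subtle; the proof is essentially the observation that transfinite induction is just the well-ordering property of the ordinals rephrased as a principle about properties.
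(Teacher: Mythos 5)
Your proof is correct, but there is nothing in the paper to compare it against: the proposition appears in Appendix~\ref{sec:appendix_set_theory} as a standard set-theoretic fact stated without proof, with the whole appendix deferred to a textbook reference (\cite{Krivine1971SetTheory}). So you have supplied an argument where the authors supplied none.

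On the merits, your argument is the standard one and is complete. The forward direction is indeed vacuous. For the backward direction, your handling of the one genuine subtlety --- that the ordinals form a proper class, so one cannot directly take ``the least ordinal where $P$ fails'' --- is exactly right: restricting to $\gamma + 1 = \gamma \cup \{\gamma\}$, which is a set well-ordered by $\in$, lets you extract the minimum $\alpha_0$ of $S = \{\beta \leq \gamma \mid \neg P(\beta)\}$ (formally, $S$ is a set by the separation schema applied to $\gamma+1$, assuming $P$ is expressible by a formula, which is implicit in the statement ``property about ordinals''). Minimality gives $P(\beta)$ for all $\beta < \alpha_0$, the hypothesis gives $P(\alpha_0)$, and the contradiction follows. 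An equivalent phrasing avoiding the explicit contradiction would be to prove $P(\gamma)$ for an arbitrary $\gamma$ by noting that the least element of $\{\beta \leq \gamma \mid \neg P(\beta)\}$ cannot exist, but this is cosmetic. No gaps.
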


\AP Two sets are said to be ""equinumerous"" if there exits a bijection between them.
The relation of equinumerousity is an equivalence relation (reflexive, symmetric and transitive).
Just as the class of ordinals is defined to contain a representative for any "well-ordered set" up to isomorphism, the class of cardinals is defined to contain one representative for each equivalence class of the equinumerousity relation.

\AP Formally, a ""cardinal"" is defined to be an "ordinal" $\aa$ that is not "equinumerous" to any strictly smaller "ordinal" $\bb < \aa$.
\AP The ""cardinality"" of a set $A$ is the only "cardinal" "equinumerous" to $A$ (equivalently, the smallest "ordinal" "equinumerous" to $A$). 
\AP We denote it by $\intro*\card{A}$.

All finite ordinals are cardinals ($0,1,2,\dots$). The first infinite cardinal is $\oo$. However, when we use it in a context where we are interested in its properties as a "cardinal" and not in its order, we will denote it by $\intro*{\alephno}$.

We remark that "cardinals", as well as "ordinals", are sets. We will often use them to build graphs or other structures and use expressions as ``let $\kappa$ be a "cardinal" and let $x\in \kappa$''.


Some important facts about "cardinals" are:
\begin{itemize}
	\item The class of "cardinals" is "well-ordered" by membership. This is the order induced by the class of "ordinals"; in particular we can compare "ordinals" and "cardinals".
	\item  Let $\aa$ be a "cardinal". Its ""successor cardinal"" is the smallest "cardinal" that is strictly greater than $\aa$, it is denoted $\aa^+$.
	\item  The sum of "cardinals" coincides with that of natural numbers over finite cardinals. If $\aa$ and $\bb$ are cardinals and at least one of them is infinite, then $\aa+\bb = \max\{\aa, \bb\}$. In particular, if $\aa$ is infinite, $\aa+1 = \aa$.
	\item  The product of "cardinals" coincides with that of natural numbers over finite cardinals. If $\aa$ and $\bb$ are cardinals and at least one of them is infinite, then $\aa \times \bb = \max\{\aa, \bb\}$.
\end{itemize}

\section{Tight bounds for examples from Section~\ref{sec:examples}}
\label{sec:appendix_bounds_memory}

In this appendix we provide the proofs of the bounds appearing in Table~\ref{table:examples} that we have not included in the main document.

\paragraph*{Objective $W_2 = \{w_0w_1w_2\dots \in C^\oo \mid \forall i\, w_i \neq w_{i+1}\}$.}
\begin{prop}
	The "$\eps$-free" "chromatic memory" of $W_2$ is $\geq |C|$, and therefore, also its "$\eps$-chromatic memory".
\end{prop}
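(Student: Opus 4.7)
The plan is to exhibit an $\eps$-free game $\G$ over the alphabet $C$ in which Eve has a winning strategy but every chromatic winning strategy requires at least $|C|$ memory states. The game $\G$ has an initial Adam vertex $u_0$, Adam vertices $w_c$ for each $c \in C$ (thought of as ``the last colour played was $c$''), and Eve vertices $v_X$ indexed by two-element subsets $X \subseteq C$ (capturing Eve's restricted choices). The edges are $u_0 \re{c} v_X$ and $w_c \re{c'} v_{X'}$ for $c \in X$ and $c' \in X' \setminus \{c\}$, together with $v_X \re{c} w_c$ for $c \in X$. Thus in each round, Adam picks a colour $c'$ together with a pair $X \ni c'$, sending the play to $v_X$, and Eve must respond with one of the two colours in $X$.

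For the upper bound (which will not be needed for the stated claim, but is useful to know the game is not trivially lost by Eve), I would equip Eve with the ``track the last colour'' chromatic memory $M = \{m_c : c \in C\}$ with update $\delta(m,c) = m_c$. At $(v_X, m_c)$, she plays the unique element of $X \setminus \{c\}$ when $c \in X$, and any element of $X$ otherwise. By construction no two consecutive colours in any resulting play coincide, so $\S$ is a winning chromatic strategy for $W_2$ over $\G$.

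The hard part is the lower bound. Assume a winning chromatic strategy $\S$ over $M$ with update $\delta$ and initial memory $m_0$, with $|M| < |C|$. By the pigeonhole principle applied to $c \mapsto \delta(m_0, c)$, there exist distinct $c, c' \in C$ with $\delta(m_0,c) = \delta(m_0,c') = m^*$. Set $X = \{c,c'\}$; the only outgoing edges of $v_X$ in $G$ are labelled $c$ or $c'$, so the edges offered by $\S$ out of $(v_X, m^*)$ form a non-empty subset $F \subseteq \{c,c'\}$. Choosing any $d \in F$, Adam's opening move $u_0 \re{d} v_X$ is valid (since $d \in X$) and brings the play to $(v_X, \delta(m_0,d)) = (v_X, m^*)$; Adam then resolves the non-determinism in $\S$ by picking the outgoing edge labelled $d$. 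The resulting play begins with $d\,d$, violating $W_2$, contradicting the winning property of $\S$.

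The main obstacle to anticipate is ensuring that Eve's options at $v_X$ are genuinely restricted to the pair $X$: if she could instead escape via some third colour, the pigeonhole collision could be avoided. This restriction is enforced directly in the graph $G$, after which the pigeonhole argument on $\delta(m_0,\cdot)$ delivers Adam's winning play with no further machinery.
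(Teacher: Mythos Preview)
Your argument is correct if one reads ``chromatic memory $<|C|$'' as ``there is a winning chromatic product strategy over some $M$ with $|M|<|C|$''. The paper, however, defines the memory of a strategy via fibre sizes: $\S$ has memory $<\mu$ when $|\pi_\S^{-1}(v)|<\mu$ for every vertex $v$. Under that convention your game does \emph{not} witness the bound. Every edge entering $v_X$ in your graph carries a colour from $X$, so the ``track the last colour'' chromatic strategy you yourself describe (with $\delta(m,c)=c$) satisfies $\pi_\S^{-1}(v_X)\subseteq\{v_X\}\times X$; all fibres have size at most $2$, and the fibres above $u_0$ and the $w_c$ are singletons. Hence Eve wins your game with a chromatic strategy of (fibre-based) memory $\leq 2$, and nothing forces any fibre to reach size $|C|$. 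Your pigeonhole step on $c\mapsto\delta(m_0,c)$ only yields a contradiction once $|M|<|C|$ is known, and that hypothesis is not implied by the fibre bound in your arena.

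The paper's construction is more elaborate for precisely this reason. Adam is first allowed to spell out an arbitrary finite alternating word through vertices $v_u$ indexed by $u\in L$, and only then jump to a pair-vertex $v_{c,c'}$; crucially the pair-vertex does not depend on the colour of the jump edge. This makes the set of memory states reachable at $v_{c,c'}$ independent of the pair and equal to the whole reachable part of $M$, so the fibre bound $|\pi_\S^{-1}(v_{c,c'})|<|C|$ genuinely yields $|M|<|C|$; after that reduction the pigeonhole on $\delta$ goes through exactly as you wrote. Your game can be repaired along the same lines: insert an Adam-controlled ``prefix phase'' that lets him feed an arbitrary alternating word before reaching the choice vertex, and make the choice vertex independent of the last colour played.
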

\begin{proof}
	(We suppose $|C|\geq 2$, since the result is trivial for $|C| = 1$.) 
	Let $L = \{u\in C^* \mid \forall i\, u_i \neq u_{i+1}\}\subseteq C^*$ (we remark that $W_2 = \Safe{L^c}$). We consider the "game" $\G=(G,\VE,v_0,W_2)$ given by:
	\begin{itemize}
		\item $\VE = \{v_{c,c'} \mid c,c'\in C, \; c\neq c'\}$,
		\item $\Verts{G} =  \{v_u \mid u\in L\} \sqcup \VE$,
		\item $v_0 = v_\eps$,
		\item $v_u \re{a} v_{ua}$ for all $u\in L$ and all $a\in C$ different from the last colour in $u$.
		\item $v_u \re{a} v_{c,c'}$ for all $u\in L$ and all $a,c,c'\in C$ such that $c\neq c'$,
		\item $v_{c,c'} \re{c} v_{c,c'}$ and $v_{c,c'} \re{c'} v_{c,c'}$ for all $c\neq c'$.
	\end{itemize}

That is, "Adam" starts by picking a finite word $u\in L$ that is safe for the "objective", and he chooses a subset of size $2$ of $C$. Then, "Eve" will have the opportunity to choose between these two colours. It is clear that "Eve" "wins" this "game": no matter Adam's choice, she will have an option to extend the chosen word with a colour different from the last colour of $u$, and then she just has to alternate between the two available colours for the rest of the play.

We now prove that she cannot win with a "chromatic strategy" with "memory@@strategy" $<|C|$.
	
Let $\S = (S,\pi_\S,s_0)$ be a "chromatic@@strategy" "product strategy" over a set $M$, that is, $\Verts{S} \subseteq \Verts{G} \times M$ and there is an "update function" $\delta\colon M\times C \to M$ giving the transitions in the second component of the strategy. Let $s_0 = (v_0, m_0)$.
Suppose that its "memory@@strategy" is $<\card{C}$, that is, for all $v\in \Verts{G}$, $|\inv{\pi_\S}(v)|<\card{C}$.

First, we claim that we can suppose $|M| = |\inv{\pi_\S}(v_{c,c'})|<|C|$. Indeed, without loss of generality we can restrict the "strategy" to the set of vertices $(v, m)$ that are accessible from $s_0 = (v_0,m_0)$ by reading words in $L$. For any $v_u\in \VA$, there is only one such vertex, and for any $v_{c,c'}$, the set of $m\in M$ such that $(v_{c,c'}, m)$ is accessible in that way is independent from the choice of $c$ and $c'$, so we can just suppose that $M$ is the set of such "memory states". 

By the pigeonhole principle, there is some "memory state" $m\in M$ and two different colours $c_1, c_2\in C$ such that there are states $m_1, m_2$ and transitions $\delta(m_1,c_1) = m$ and $\delta(m_2,c_2) = m$. Therefore, in the "strategy" we can find the following two paths:
\[
\begin{array}{rcl}
	(v_0,m_0) \re{u_1} (v_{u_1},m_1) \re{c_1} (v_{c_1,c_2},m), \\
	(v_0,m_0) \re{u_2} (v_{u_2},m_2) \re{c_2} (v_{c_1,c_2},m).
\end{array}
\]
The "strategy" must contain either the edge $(v_{c_1,c_2},m)\re{c_1}$ or the edge $(v_{c_1,c_2},m)\re{c_2}$. In both cases we have found a path in $\S$ that does not "satisfy" the "objective" $W_2$.
\end{proof}

%
%

\paragraph*{Objective $W_{4} = \infOften(bb) \; \cup ( \neg \infOften(b) \cap \neg \infOften(aa))$ over $C=\{a,b,c\}$.}

\begin{prop}
	A minimal deterministic parity automaton recognising $W_4$ has $3$ states.
\end{prop}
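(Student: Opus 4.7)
The upper bound is witnessed by the $3$-state deterministic parity automaton $\A$ of Figure~\ref{fig:constr_w4}, so I concentrate on the lower bound: no deterministic parity automaton (DPA) with at most $2$ states recognises $W_4$. Two principles will drive the argument. First, since $W_4$ is prefix-independent, from every reachable state of any DPA accepting $W_4$ the accepted language is $W_4$ itself. Second, the run of a DPA on an ultimately periodic word $u v^\omega$ is itself ultimately periodic (with period bounded by the number of states), and its acceptance is determined by the maximum priority on the eventual cycle; hence any two ultimately periodic words whose runs from a given state share the same steady-state multi-set of priorities must receive the same verdict.

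For a $1$-state DPA the accepted language depends only on $\minf(w)$ and the priorities of the three self-loops. Since $(ab)^\omega \notin W_4$ and $(aabb)^\omega \in W_4$ both satisfy $\minf = \{a,b\}$, one state does not suffice.

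For a $2$-state DPA with states $\{p, q\}$ (both reachable, else it collapses to the $1$-state case), the plan is a case analysis on the transition function $\delta$. The runs from $p$ on $a^\omega$, $b^\omega$, $c^\omega$ must be respectively rejecting, accepting, accepting, which pins down the parity of the maximum priority on each letter's eventual cycle. For each resulting transition configuration, I will exhibit a pair of ultimately periodic words --- one in $W_4$, one outside --- whose runs from $p$ traverse the same multi-set of transitions in the steady state, contradicting the required distinct verdicts. The key candidate distinguishing pairs are $(aabb)^\omega \in W_4$ versus $(baa)^\omega \notin W_4$, $(ab)^\omega \notin W_4$ versus $(aabb)^\omega \in W_4$, and $(ac)^\omega \in W_4$ versus $(aac)^\omega \notin W_4$; each pair shares a common $\minf$ while differing on $W_4$-membership, providing the structural obstruction.

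The hard part will be organising the case analysis, since a priori there are $2^{6} = 64$ transition structures on the six transitions of a $2$-state DPA over a $3$-letter alphabet. This can be substantially reduced by exploiting the symmetry $p \leftrightarrow q$, the constraint (forced by prefix-independence) that the language accepted from $q$ also equals $W_4$, and the initial parity constraints derived from the monochromatic words. The underlying conceptual reason for the lower bound is that $W_4$ is neither Büchi- nor co-Büchi-recognisable --- it contains the $\Pi_2^0$-flavoured subset $\infOften(bb)$ and the $\Sigma_2^0$-flavoured subset $\finOften(b) \cap \finOften(aa)$ --- so at least three priority levels are required on its cycles, which two states and six transitions cannot accommodate consistently with $W_4$'s structure.
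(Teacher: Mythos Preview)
Your approach is the same as the paper's in spirit --- use pairs of ultimately periodic words that visit the same set of transitions but disagree on membership in $W_4$ --- and your one-state argument is fine. The problem is that for the two-state case you only announce a plan: you list three candidate distinguishing pairs and say you ``will exhibit'' the right one for each transition configuration, but you never carry out the case analysis. Since the analysis \emph{is} the proof, what you have written is a sketch, not a proof. The closing remark about three priority levels being ``required on its cycles, which two states and six transitions cannot accommodate'' is too vague to do any work (two states with six transitions can carry arbitrarily many distinct priorities).

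The paper avoids a brute-force enumeration of $64$ transition structures by a short organizing observation that you are missing: in any two-state DPA for $W_4$, the $\{a,c\}$-restriction must be strongly connected. If it were not, some state would have self-loops on both $a$ and $c$, and then $(ac)^\omega\in W_4$ and $(aac)^\omega\notin W_4$ would visit exactly the same transitions from that state. This forces edges $q\re{x}p$ and $p\re{y}q$ with $x,y\in\{a,c\}$, after which only the $b$-transitions remain to analyse --- two cases rather than sixty-four. In each case a single pair of words (e.g.\ $(bxy)^\omega$ versus $(bbxy)^\omega$) finishes the argument. Add this structural step and complete the two remaining cases explicitly, and your proof will be complete and essentially identical to the paper's.
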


\begin{proof}
	A deterministic parity automaton for $W_4$ with $3$ states was shown in Figure~\ref{fig:constr_w4}.
	
	We prove that a parity automaton with $2$ states cannot recognise $W_4$. Let $\A$ be a deterministic parity automaton with two states $\{q,p\}$.
	We remark that a parity automaton recognising $W_4$ must verify that, from any state $s$, if the run of two words $w, w'\in C^\oo$ from $s$ use the same set of transitions, then $w\in W_4 \iff w'\in W_4$.
	
	We first claim that if $\A$ recognises $W_4$, then its restriction to transitions labelled by $a$ and $c$ must be strongly connected. Indeed, if this was not the case, there would be a state $s$ such that $s\re{a}s$ and $s\re{c}s$, and therefore $\A$ could not differentiate the words $(ac)^\oo\in W_4$ and $(aac)^\oo\notin W_4$ from $s$. Let then $x,y\in \{a,c\}$ be such that $\A$ contains transitions $q\re{x}p$ and $p\re{y}q$.
	
	Now, let us study the structure of the $b$-transitions. There are two possibilities:
	\begin{description}
		\item[(1) $s\re{b}s$ for some $s\in \{q,p\}$]  We suppose $s=q$ w.l.o.g. In this case, $\A$ does not differentiate between $(bxy)^\oo \notin W_4$ and $(bbxy)^\oo \in W_4$ from $q$.
		\item[(2) $q\re{b}p$ and $p\re{b} q$]  In this case, $\A$ does not differentiate between $(byxbxy)^\oo \notin W_4$ and $(bbxy)^\oo \in W_4$.
	\end{description}
	We conclude that $\A$ cannot recognise $W_4$.
\end{proof}

\end{document}